\newcommand\eps{\varepsilon}
\DeclareMathOperator*\poly{poly}
\newcommand\one{\textbf{1}}
\newcommand{\floor}[1]{\lfloor {#1} \rfloor}
\newcommand{\ceil}[1]{\lceil {#1}\rceil}
\newcommand\defeq{\ensuremath{\stackrel{\rm def}{=}}} % Equal by definition
\DeclareMathOperator*\rtd{rt-dist}
\DeclareMathOperator*\dist{dist}
\newcommand{\abs}[1]{\left\vert#1\right\vert}
\DeclareMathOperator*{\argmax}{arg\,max}
\theoremstyle{plain}
\newtheorem{theorem}{Theorem}[section]
\newtheorem{lemma}[theorem]{Lemma}
\newtheorem{claim}[theorem]{Claim}
\newtheorem{hypothesis}[theorem]{Hypothesis}
\newtheorem*{lemma*}{Lemma}
\newtheorem*{claim*}{Claim}
\newtheorem*{proposition*}{Proposition}
\newtheorem*{fact*}{Fact}
\newtheorem*{corollary*}{Corollary}
\newtheorem*{hint*}{Hint}
\theoremstyle{definition}
\newtheorem{definition}[theorem]{Definition}
\newtheorem{remark}[theorem]{Remark}
\newtheorem*{theorem*}{Theorem}
\newtheorem*{definition*}{Definition}
\newtheorem*{remark*}{Remark}
\newtheorem*{notation*}{Notation}
\newtheorem*{example*}{Example}
\newtheorem*{examples*}{Examples}
\newtheorem*{question*}{Question}
\newtheorem*{answer*}{Answer}
\newtheorem*{problem*}{Problem}
\newtheorem*{solution*}{Solution}
\newtheorem*{idea*}{Idea}
\newtheorem*{conjecture*}{Conjecture}
\begin{document}
\title{On Diameter Approximation in Directed Graphs}
\author{Amir Abboud\thanks{Weizmann Institute of Science, \url{amir.abboud@weizmann.ac.il}. This project has received funding from the European Research Council (ERC) under the European Union’s Horizon Europe research and innovation programme (grant agreement No 101078482). Additionally, Amir Abboud is supported by an Alon scholarship and a research grant from the Center for New Scientists at the Weizmann Institute of Science.} , 
Mina Dalirrooyfard\thanks{Massachusetts Institute of Technology, \url{minad@mit.edu}. Partially supported by an Akamai Fellowship.} , 
Ray Li\thanks{UC Berkeley, \url{rayyli@berkeley.edu}. Supported by the NSF Mathematical Sciences Postdoctoral Research Fellowships Program under Grant DMS-2203067, and a UC Berkeley Initiative for Computational Transformation award.} , 
Virginia Vassilevska-Williams\thanks{Massachusetts Institute of Technology, \url{virgi@mit.edu}. Partially supported by the National Science Foundation Grant CCF-2129139.}}
\date{}
\maketitle

\begin{abstract}
Computing the diameter of a graph, i.e. the largest distance, is a fundamental problem that is central in fine-grained complexity. In undirected graphs, the Strong Exponential Time Hypothesis (SETH) yields a lower bound on the time vs. approximation trade-off that is quite close to the upper bounds.
    
In \emph{directed} graphs, however, where only some of the upper bounds apply, much larger gaps remain. Since $d(u,v)$ may not be the same as $d(v,u)$, there are multiple ways to define the problem, the two most natural being the \emph{(one-way) diameter} ($\max_{(u,v)} d(u,v)$) and the \emph{roundtrip diameter} ($\max_{u,v} d(u,v)+d(v,u)$). In this paper we make progress on the outstanding open question for each of them.

    \begin{itemize}
        \item We design the first algorithm for diameter in sparse directed graphs to achieve $n^{1.5-\eps}$ time with an approximation factor better than $2$. The new upper bound trade-off makes the directed case appear more similar to the undirected case. Notably, this is the first algorithm for diameter in sparse graphs that benefits from fast matrix multiplication.

        \item We design new hardness reductions separating roundtrip diameter from directed and undirected diameter. In particular, a $1.5$-approximation in subquadratic time would refute the All-Nodes $k$-Cycle hypothesis, and any $(2-\eps)$-approximation would imply a breakthrough algorithm for approximate $\ell_{\infty}$-Closest-Pair.  Notably, these are the first conditional lower bounds for diameter that are not based on SETH.
    \end{itemize}
    
\end{abstract}

\thispagestyle{empty}
\newpage

\setcounter{page}{1}

\section{Introduction}

The diameter of the graph is the largest shortest paths distance. A very well-studied parameter with many practical applications (e.g. \cite{diam-prac2,diam-prac3,diam-prac4,diam-prac6}), its computation and approximation are also among the most interesting problems in Fine-Grained Complexity (FGC).
Much effort has gone into understanding the approximation vs. running time tradeoff for this problem (see the survey \cite{rubinstein2019seth} and the progress after it \cite{Bonnet21,Bonnet21ic,Li20,Li21,DW21,DLV21}).

Throughout this introduction we will consider $n$-vertex and $m$-edge graphs that, for simplicity, are \emph{unweighted} and \emph{sparse} with $m=n^{1+o(1)}$ edges\footnote{Notably, however, our algorithmic results hold for general graphs, and our hardness results hold even for very sparse graphs.}.
The diameter is easily computable in $\tilde{O}(mn)=n^{2+o(1)}$ time\footnote{The notation $\tilde{O}(f(n))$ denotes $O(f(n)\poly\log(f(n))$.} by computing All-Pairs Shortest Paths (APSP). One of the first and simplest results in FGC \cite{RV13,Williams05} is that any $O(n^{2-\eps})$ time algorithm for $\eps>0$ for the exact computation of the diameter would refute the well-established Strong Exponential Time Hypothesis (SETH) \cite{ipz1,cip10}. 
Substantial progress has been achieved in the last several years \cite{RV13,ChechikLRSTW14,Bonnet21,Bonnet21ic,Li20,Li21,DW21,DLV21}, culminating in an approximation/running time lower bound tradeoff based on SETH, showing that even for undirected sparse graphs, for every $k\geq 2$, there is no $2-1/k-\delta$-approximation algorithm running in $\tilde{O}(n^{1+1/(k-1)-\eps})$ time for some $\delta,\eps>0$.

In terms of upper bounds, the following three algorithms work for both undirected and directed graphs: 
\begin{enumerate}
\item compute APSP and take the maximum distance, giving an exact answer in $\tilde{O}(n^2)$ time, 

\item compute single-source shortest paths from/to an arbitrary node and return the largest distance found, giving a $2$-approximation in $\tilde{O}(n)$ time, and 
\item an algorithm by \cite{RV13,ChechikLRSTW14} giving a $3/2$-approximation in $\tilde{O}(n^{1.5})$ time.\end{enumerate}

For undirected graphs, there are some additional algorithms, given by Cairo, Grossi and Rizzi \cite{CGR} that qualitatively (but not quantitatively) match the tradeoff suggested by the lower bounds:  for every $k\geq 1$ they obtain an $\tilde{O}(n^{1+1/(k+1)})$ time, almost-$(2-1/2^k)$ approximation algorithm, meaning that there is also a small constant additive error.

\newcommand\figureaxes{
\draw[->] (0.98,1) to (2.05,1);
\draw[->] (1,0.98) to (1,2.05);
\node[anchor=west] at (2.03,1) {Apx};
\node[anchor=south] at (1,2.05) {$k$};
\node[anchor=east, rotate=90] at (0.9,1.75) {$\tilde O(m^k)$ time};
}

\def\omeg{2.37285996}
\pgfmathdeclarefunction{usx}{1}{%
  \pgfmathparse {2-1/(2^(#1+2))}
}
\pgfmathdeclarefunction{usy}{1}{%
  \pgfmathparse {1+(2*((2/(\omeg-1))^#1) - ((\omeg-1)^2)/2) / ( ((2/(\omeg-1))^#1) * (7-\omeg) - (\omeg^2 - 1)/2 )}
}
\hyphenation{un-directed}

\begin{figure}
  \begin{center}
    \begin{tikzpicture}[xscale=6, yscale=6, font=\scriptsize]
      \coordinate (ub0) at (1,2);
      \coordinate (ub1) at (1.5,1.5);
      \coordinate (ub2) at (2,1);
      \draw[fill=blue!30,draw=none] (2.01,2.01) rectangle (ub0);
      \draw[fill=blue!30,draw=none] (2.01,2.01) rectangle (ub1);
      \draw[fill=blue!30,draw=none] (2.01,2.01) rectangle (ub2);
      \foreach \x in {2,...,8} {
        \coordinate (cgr\x) at ({2-2^(-\x)}, {1+1.0/(\x+1)});
        \draw[fill=blue!30,draw=none] (2,2) rectangle (cgr\x);
        
      }
      \node[draw, fill=blue!30, circle,scale=0.5] (ub0n) at (ub0) {};
      \node[draw, fill=blue!30, circle,scale=0.5] (ub1n) at (ub1) {};
      \node[draw, fill=blue!30, circle,scale=0.5] (ub2n) at (ub2) {};
      \foreach \x in {2,...,6} {
         \node[draw, fill=blue!20, circle,scale=0.5] (ncgr\x) at (cgr\x) {};
      }
      \foreach \x in {1,...,100} {
        \coordinate (lb\x) at ({2-1.0/(\x+1)}, {1+1.0/\x});
        \draw[fill=red!20,draw=none] (1,1) rectangle (lb\x);
      }
      \draw[fill=red!20,draw=none] (1,1) -- (lb100) -- (2,1);
            \node[text width=90] (prev) at (1.4, 1.25) {Undirected diameter hardness (from SETH) \cite{RV13,BRSVW18,Li21,Bonnet21ic,DLV21}};
      \figureaxes
      \draw[blue,thick] ({7/4},1.5)--({7/4},{1+(\omeg+1)/(\omeg+5)});
      \node[draw, fill=blue!30, circle,scale=0.5] at (ub0) {};
      \node[draw, fill=blue!30, circle,scale=0.5] at (ub1) {};
      \node[draw, fill=blue!30, circle,scale=0.5] at (ub2) {};
      \foreach \x in {1,...,15} {
        \node[draw, fill=red!20, circle,scale=0.5] at (lb\x) {};
      }
      
            \node[] (ub0l) at (1.2, 2.1) {APSP};
      \draw[->] (ub0l) to (ub0n);
      \node[] (ub2l) at (2.2, 1.1) {SSSP};
      \draw[->] (ub2l) [out=200,in=50] to (ub2n);
      \node[] (ub1l) at (1.8, 1.8) {\cite{RV13,ChechikLRSTW14}};
      \draw[->] (ub1l) [out=230,in=90] to (ub1n);
      \node (cgrl) at (1.9,1.4) {\cite{CGR}};
         \draw[->] (cgrl) [out=260,in=0] to (ncgr2);   \draw[->] (cgrl) to (ncgr4);
      \draw[->] (cgrl) to (ncgr6);
    \end{tikzpicture}
  \end{center}
  \caption{Undirected diameter algorithms and hardness.}
  \label{fig:undirected}
\end{figure}
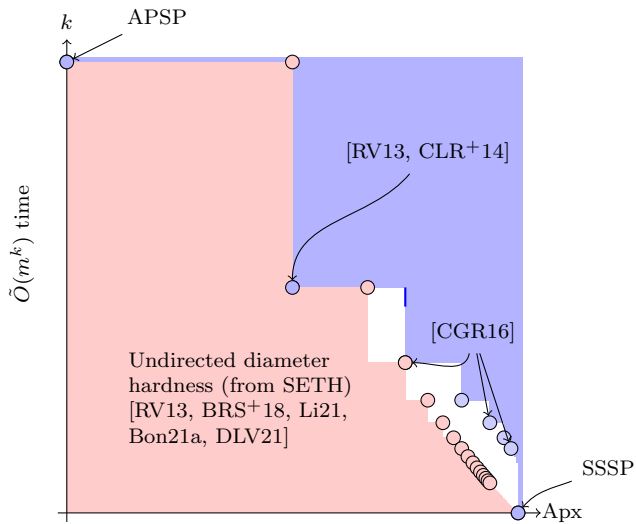

The upper and lower bound tradeoffs for undirected graphs are depicted in Figure~\ref{fig:undirected} ; a gap remains (depicted as white space) because the two trade-offs have different rates.
In directed graphs, however, the gap is significantly larger because an upper bound trade-off is missing (the lower bound tradeoff follows immediately because it is a harder problem).
One could envision for instance, that the conditional lower bounds for directed diameter could be strengthened to show that if one wants a $(2-\eps)$-approximation algorithm, then it must take at least $n^{1.5-o(1)}$ time. 
Since the work of \cite{CGR}, the main open question (also asked by \cite{rubinstein2019seth}) for diameter algorithms in directed graphs has been:

\begin{center}
{\em Why are there only three approximation algorithms for directed diameter, but undirected diameter has an infinite approximation scheme? Is directed diameter truly harder, or can one devise further approximation algorithms for it?} 
\end{center}

\paragraph*{Directed is Closer to Undirected.}
Our first result is that one \emph{can} devise algorithms for directed diameter with truly faster running times than $n^{1.5}$, and approximation ratios between $3/2$ and $2$. 
It turns out that the directed case has an upper bound tradeoff as well, albeit with a worse rate than in the undirected case. 
Conceptually, this brings undirected and directed diameter closer together. See Figure~\ref{fig:directed} for our new algorithms.

\begin{figure}
    \centering
    \begin{tikzpicture}[xscale=6, yscale=6, font=\scriptsize]
      \coordinate (ub0) at (1,2);
      \coordinate (ub1) at (1.5,1.5);
      \coordinate (ub2) at (2,1);
      \draw[fill=blue!30,draw=none] (2.01,2.01) rectangle (ub0);
      \draw[fill=blue!30,draw=none] (2.01,2.01) rectangle (ub1);
      \draw[fill=blue!30,draw=none] (2.01,2.01) rectangle (ub2);
      \foreach \x in {0,...,7} {
        \coordinate (us\x) at ({usx(\x)},{usy(\x)});
        \draw[fill=blue!70,draw=none] (2,1.5) rectangle (us\x);
        \draw[blue] ({usx(\x+1)},{usy(\x)}) -- (us\x) -- ({usx(\x)},{usy(\x-1)});
      }
      \foreach \x in {1,...,100} {
        \coordinate (lb\x) at ({2-1.0/(\x+1)}, {1+1.0/\x});
        \draw[fill=red!20,draw=none] (1,1) rectangle (lb\x);
      }
      \figureaxes

      \node[draw, fill=blue!30, circle,scale=0.5] (ub0n) at (ub0) {};
      \node[draw, fill=blue!30, circle,scale=0.5] (ub1n) at (ub1) {};
      \node[draw, fill=blue!30, circle,scale=0.5] (ub2n) at (ub2) {};
      \foreach \x in {1,...,15} {
        \node[draw, fill=red!30, circle,scale=0.5] at (lb\x) {};
      }
      \foreach \x in {0,...,5} {
        \node[draw, fill=blue!70, circle,scale=0.5] (n\x) at (us\x) {};
      }
      \node[text width=100] (new) at (1.4, 1.25) {Hardness (from SETH) \cite{RV13,BRSVW18,Bonnet21,DW21,Li21}};
      \node[text width=70] (us1l) at (2.25, 1.6) {Our algorithms (Thm.~\ref{thm:alg-main})};
      \draw[->] (us1l) [out=190,in=40] to (n0);
      \draw[->] (us1l) [out=190,in=60] to (n1);
      \draw[->] (us1l) [out=190,in=120] to (n5);
      \node[] (ub0l) at (1.2, 2.1) {APSP};
      \draw[->] (ub0l) to (ub0n);
      \node[] (ub2l) at (2.2, 1.1) {SSSP};
      \draw[->] (ub2l) [out=200,in=50] to (ub2n);
      \node[] (ub1l) at (1.8, 1.8) {\cite{RV13,ChechikLRSTW14}};
      \draw[->] (ub1l) [out=230,in=90] to (ub1n);
      \node[] () at (1.5, 2.17) {\textbf{Directed Diameter}};
    \end{tikzpicture}
    \caption[Directed diameter algorithms and hardness]{Directed diameter algorithms and hardness. All tradeoffs hold for both weighted and unweighted graphs (though citations may differ for weighted vs. unweighted).}
    \label{fig:directed}
\end{figure}
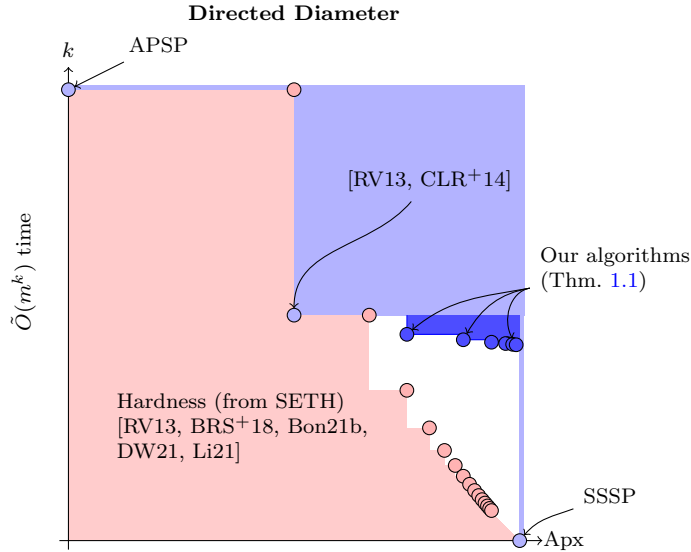

\begin{theorem}
  Let $k=2^{t+2}$ for a nonnegative integer $t\ge 0$.
  For every $\varepsilon>0$ (possibly depending on $m$), there exists a randomized $2-\frac{1}{k}+\varepsilon$-approximation algorithm for the diameter of a directed weighted graphs in time $\tilde O(m^{1+\alpha}/\varepsilon)$, for 
  \begin{align}
  \alpha=\frac{2(\frac{2}{\omega-1})^t - \frac{(\omega-1)^2}{2}}{(\frac{2}{\omega-1})^t(7-\omega) - \frac{\omega^2-1}{2}}.
  \end{align}
\label{thm:alg-main}
\end{theorem}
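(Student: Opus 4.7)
The plan is to extend the Cairo--Grossi--Rizzi (CGR) adaptive ball-growing framework, originally designed for undirected graphs, to the directed weighted setting, using fast rectangular matrix multiplication as the key new ingredient. In the CGR scheme for parameter $t$, pivot vertices $v_0, v_1, \ldots, v_t$ are chosen adaptively; at each level $i$, one runs Dijkstra from a local out-neighborhood of $v_i$ of size $|S_i|=m^{\alpha_i}$ and from a random hitting set of size $\tilde\Theta(m^{1-\alpha_i})$ covering the ``far'' balls of the relevant radius. The approximation factor $2 - 1/2^{i+2}$ emerges from a geometric doubling argument on ball radii: each additional level halves the residual gap between the estimated and true eccentricities.

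In the directed setting, CGR's ball arguments fail because a short path from $u$ to $v$ no longer guarantees a return path. The plan is to compensate by maintaining two independent landmark hierarchies (one for in-distances and one for out-distances), and, crucially, by introducing a matrix-multiplication subroutine that produces $(1+\varepsilon)$-approximate pairwise distances between landmark sets. For each of $O(\varepsilon^{-1}\log n)$ geometrically spaced distance thresholds $D$, we form a Boolean reachability matrix restricted to a pair of landmark sets at level $i$, and compute its relevant powers via rectangular Boolean matrix multiplication to decide whether two landmarks are connected within distance approximately $D$. This module, whose cost scales as $|S_i|^\omega$ per threshold (or the corresponding rectangular MM bound when the two landmark sets differ in size), replaces the $|S_i| \cdot m$-cost per-landmark Dijkstra runs in the bottleneck step, and it is the source of the $\omega$ dependence in the exponent $\alpha$.

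The total running time comes from balancing three contributions across the $t+1$ levels: (i) Dijkstra from each pivot's local neighborhood, costing $m^{1+\alpha_i}$; (ii) Dijkstra from the global hitting set at each level, also $m^{2-\alpha_i}$ in total; and (iii) the MM step between landmark sets, $\tilde O(m^{\omega \alpha_i}/\varepsilon)$. The sample-size exponents $\alpha_i$ satisfy a doubling recurrence indexed by $i$ (roughly $\alpha_{i+1} \approx \tfrac{2}{\omega-1}\alpha_i$, reflecting how the rectangular MM trade-off scales with the aspect ratio between consecutive levels). Solving that recurrence and equating the three contributions gives the closed-form $\alpha$ in the statement, where the $(2/(\omega-1))^t$ terms track the $t$-fold doubling and the additive $(\omega-1)^2/2$ and $(\omega^2-1)/2$ pieces come from boundary levels. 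The $(2 - 1/2^{t+2})$ approximation is inherited from the CGR doubling structure, with the additive $\varepsilon$ absorbing slack from the distance bucketing used by the MM module.

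The main obstacle I anticipate is the correctness argument: a directed analogue of CGR's bootstrapping lemma is required, whose undirected version uses two-sided triangle inequalities that do not carry over. The plan is to prove a one-sided version that (a) tolerates $(1+\varepsilon)$-approximate distances produced by the MM module and (b) treats the in- and out-landmark hierarchies as asymmetric witnesses of the diameter-realizing pair. Making this argument go through is what forces both the $\varepsilon$ slack in the approximation ratio and the $1/\varepsilon$ factor in the running time, and it is the place where the naive directed adaptation would otherwise lose an extra factor of $2$ and collapse the trade-off back to the existing $3/2$-vs-$2$ dichotomy.
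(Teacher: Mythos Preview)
Your proposal has the right high-level shape (multiple levels with geometrically shrinking ball radii, random hitting sets, separate in- and out-hierarchies, and matrix multiplication as the new ingredient), but the concrete matrix-multiplication module you describe is not the one that makes the argument work, and the mismatch is not cosmetic.

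You propose to ``form a Boolean reachability matrix restricted to a pair of landmark sets at level $i$, and compute its relevant powers'' to test whether two landmarks are within distance about $D$. This is the wrong primitive. Matrix powers of a reachability/adjacency matrix count hops, not weighted distance, so the approach does not extend to weighted graphs as stated; and powering either the full $n\times n$ matrix or a landmark-restricted one does not give the cost profile $|S_i|^\omega$ you cite while still retaining enough information to certify distances. The paper's key idea is different: at level $i$ one runs \emph{partial} Dijkstra from each sampled landmark $s$ only until $m^{\alpha_i}$ vertices are visited, obtaining the ball $B^{out}_{r}(s)$ (and symmetrically $B^{in}_{r'}(t)$). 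One then builds a $|S^{out}|\times n$ indicator matrix $A$ with $A_{s,v}=\one[v\in B^{out}_r(s)]$ and an $n\times |T^{in}|$ matrix $B$ with $B_{v,t}=\one[v\in B^{in}_{r'}(t)]$, and computes the \emph{single} product $AB$. A zero entry $(s,t)$ certifies $B^{out}_r(s)\cap B^{in}_{r'}(t)=\emptyset$, hence $d(s,t)>r+r'$. So the MM is not replacing Dijkstra (as you say it does); it is used \emph{on top of} partial Dijkstra to batch the ball-disjointness tests. The $O(1/\varepsilon)$ factor comes from bucketing the split point $r$ along the path into $O(1/\varepsilon)$ values $j\varepsilon D$ (needed in the weighted case), not from geometric thresholds on $D$.

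This distinction drives the entire runtime analysis. The matrices are \emph{sparse} (each row/column has at most $m^{\alpha_i}$ ones, by the partial-Dijkstra cutoff), and the product is computed via sparse matrix multiplication (Lemma~\ref{lem:matmul}). Balancing the sparse-MM cost against $m^{1+\alpha}$ yields the recurrence $2\alpha=\alpha_i+(1-\alpha_{i+1})\tfrac{\omega-1}{2}$ with boundary conditions $\alpha_0=1-\alpha$, $\alpha_{t+1}=\alpha$; your recurrence $\alpha_{i+1}\approx\tfrac{2}{\omega-1}\alpha_i$ is only the homogeneous part and misses the additive term $1-\tfrac{4}{\omega-1}\alpha$, so your three-way balance would not produce the stated closed form. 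Finally, the correctness argument is not a directed analogue of CGR's bootstrapping lemma; it is a case analysis on the largest level $i$ at which $|B^{out}_{(k-2^{i+1}+1)D/(2k-1)}(a)|\le m^{\alpha_i}$ and the symmetric in-ball bound both hold, which forces the sampled $s\in S^{out}$ and $t\in T^{in}$ to have disjoint balls and hence a zero in $AB$.
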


The constant $2 \leq \omega< 2.37286$ in the theorem refers to the fast matrix multiplication exponent \cite{almanvw21}. A surprising feature of our algorithms is that we utilize fast matrix multiplication techniques to obtain faster algorithms for a problem in sparse graphs. Prior work on shortest paths has often used fast matrix multiplication
to speed-up computations, but to our knowledge, all of this work is for dense graphs (e.g. \cite{agm97,Seidel95,Zwick02,minajenny}).
Breaking the $n^{1.5}$ bound with a combinatorial algorithm is left as an open problem.

\paragraph*{Roundtrip is Harder.}

One unsatisfactory property of the shortest paths distance measure in directed graphs is that it is not symmetric ($d(u,v)\neq d(v,u)$) and is hence not a metric. Another popular distance measure used in directed graphs that is a metric is the roundtrip measure. Here the {\em roundtrip distance} $\tilde{d}(u,v)$ between vertices $u,v$ is $d(u,v)+d(v,u)$.

Roundtrip distances were first studied in the distributed computing community in the 1990s \cite{CowenW99}. In recent years, powerful techniques were developed to handle the fast computation of sparse roundtrip spanners, and approximations of the {\em minimum} roundtrip distance, i.e. the shortest cycle length, the girth, of a directed graph \cite{PachockiRSTW18,ChechikLRS20,DalirrooyfardW20,ChechikL21}. These techniques give hope for new algorithms for the {\em maximum} roundtrip distance, the roundtrip diameter of a directed graph.

Only the first \emph{two} algorithms in the list in the beginning of the introduction work for roundtrip diameter: compute an exact answer by computing APSP, and a linear time 2-approximation that runs SSSP from/to an arbitrary node. These two algorithms work for any distance {\em metric}, and surprisingly there have been no other algorithms developed for roundtrip diameter. The only fine-grained lower bounds for the problem are the ones that follow from the known lower bounds for diameter in undirected graphs, and these cannot explain why there are no known subquadratic time algorithms that achieve a better than $2$-approximation.

\begin{center}{\em Are there $O(n^{2-\eps})$ time algorithms for roundtrip diameter in sparse graphs that achieve a $2-\delta$-approximation for constants $\eps,\delta>0$?}\end{center}

This question was considered e.g. by \cite{AbboudWW16} who were able to obtain a hardness result for the related roundtrip radius problem, showing that under a popular hypothesis, such an algorithm for roundtrip radius does not exist.
One of the main questions studied at the ``Fine-Grained Approximation Algorithms and Complexity Workshop" at Bertinoro in 2019 was to obtain new algorithms or hardness results for roundtrip diameter. Unfortunately, however, no significant progress was made, on either front.

The main approach to obtaining hardness for roundtrip diameter, was to start from the Orthogonal Vectors (OV) problem and reduce it to a gap version of roundtrip diameter, similar to all known reductions to (other kinds of) diameter approximation hardness. 
Unfortunately, it has been difficult to obtain a reduction from OV to roundtrip diameter that has a larger gap than that for undirected diameter; in Section~\ref{sec:LBoverview} we give some intuition for why this is the case. 

In this paper we circumvent the difficulty by giving stronger hardness results for roundtrip diameter starting from {\em different} problems and hardness hypotheses. We find this intriguing because all previous conditional lower bounds for (all variants of) the diameter problem were based on SETH. In particular, it gives a new approach for resolving the remaining gaps in the undirected case, where higher SETH-based lower bounds are provably impossible (under the so-called NSETH) \cite{Li21}.

Our first negative result conditionally proves that any $5/3-\eps$ approximation for roundtrip requires $n^{2-o(1)}$ time; separating it from the undirected and the directed one-way cases where a $1.5$-approximation in $\tilde{O}(n^{1.5})$ time is possible.
This result is based on a reduction from the so-called All-Nodes $k$-Cycle problem.

\begin{definition}[All-Nodes $k$-Cycle in Directed Graphs]
Given a $k$ partite directed graph $G=(V,E),V=V_1 \cup \cdots \cup V_k,$ whose edges go only between ``adjacent'' parts $E \subseteq \bigcup_{i=1}^k V_i \times V_{i+1 \mod k}$, decide if all nodes $v \in V_1$ are contained in a $k$-cycle in $G$.
\end{definition}

This problem can be solved for all $k$ in time $O(nm)$, e.g. by running an APSP algorithm, and in subquadratic $O(m^{2-1/k})$ for any fixed $k$ \cite{AYZ97}. Breaking the quadratic barrier for super-constant $k$ has been a longstanding open question; we hypothesize that it is impossible.

\begin{hypothesis}
\label{hypo:ankc}
No algorithm can solve the All-Nodes $k$-Cycle problem in sparse directed graphs for all
$k \geq 3$ in $O(n^{2-\delta})$ time, with $\delta>0$.
\end{hypothesis}

Similar hypotheses have been used in recent works \cite{agarwal2018fine,lincolnsoda,ancona2019algorithms,probst2020new}. The main difference is that we require all nodes in $V_1$ to be in cycles; such variants of hardness assumptions that are obtained by changing a quantifier in the definition of the problem are popular, see e.g. \cite{AbboudWW16,BringmannC20,AbboudBHS22}.

\begin{theorem}
  Under Hypothesis~\ref{hypo:ankc}, for all $\varepsilon,\delta>0$, no algorithm can $5/3-\varepsilon$ approximate the roundtrip diameter of a sparse directed unweighted graph in $O(n^{2-\delta})$ time.
\label{thm:lb-ankc-unw}
\end{theorem}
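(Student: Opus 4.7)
The plan is to reduce All-Nodes $k$-Cycle on a sparse $k$-partite graph $G=(V_1\cup\cdots\cup V_k,E)$ to the decision version of $(5/3-\varepsilon)$-approximate roundtrip diameter on a sparse directed graph $H$: any such approximation distinguishes roundtrip diameter at most $D$ from at least $(5/3)D$, and I will design $H$ so that the former holds iff the All-Nodes $k$-Cycle instance is a YES. A subquadratic-time approximation algorithm, composed with the reduction, would then give a subquadratic algorithm for All-Nodes $k$-Cycle for the chosen $k$, contradicting Hypothesis~\ref{hypo:ankc}.

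The core gadget is a per-vertex check. For each $v\in V_1$, add an auxiliary vertex $t_v$ together with the edge $t_v\to v$ and edges $u\to t_v$ for every $u\in V_k$ with $u\to v$ in $G$. Since the only in-edges of $t_v$ come from $V_k$-in-neighbors of $v$, and any $V_1$-to-$V_k$ walk in $G$ has length $\equiv k-1\pmod{k}$, we obtain $d_H(v,t_v)=k$ iff $v$ lies on a $k$-cycle (the cycle itself realizes $v\to V_2\to\cdots\to V_k\to t_v$), and $d_H(v,t_v)\ge 2k$ otherwise. Combined with $d_H(t_v,v)=1$, this gives $\tilde d_H(v,t_v)=k+1$ in the YES case (every $v\in V_1$ on a $k$-cycle) and $\tilde d_H(v^*,t_{v^*})\ge 2k+1$ in the NO case for any bad $v^*$. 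Since $(2k+1)/(k+1)\ge 7/4>5/3$ already for $k=3$, fixing any such $k$ and provided the roundtrip diameter over the pairs \emph{not} of the form $(v,t_v)$ can be capped at some $D\le k+1$, we obtain the required completeness/soundness gap.

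To achieve this uniform $D$ bound I would augment $H$ with a small-diameter background structure---for example a hub reached by bidirectional subdivided paths from each part, or a layered skeleton connecting the $V_i$'s---chosen so that $H$ becomes strongly connected and every non-check pair satisfies $\tilde d_H\le D$. Since $H$ has $O(|V(G)|+|E(G)|)$ vertices and edges, a subquadratic approximation on $H$ yields a subquadratic algorithm on $G$, completing the reduction. The main obstacle will be designing this background. There is an inherent tension: too dense a background opens a shortcut $v^*\to(\text{hub})\to u\to t_{v^*}$ for some $u\in V_k^-(v^*)$ that destroys the $\ge 2k+1$ lower bound, while too sparse a background lets the YES diameter exceed $\tfrac{3}{5}(2k+1)$ and collapses the gap. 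Calibrating the connections and subdivision lengths so that (i) every alternative $v\to t_v$ path still has length $\ge 2k$, (ii) every other pair $(x,y)$ in $H$---original-original, original-auxiliary, and auxiliary-auxiliary---has $\tilde d_H(x,y)\le k+1$, and (iii) the construction remains sparse, is where the bulk of the technical work lies.
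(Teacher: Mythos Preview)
Your proposal identifies a natural per-vertex gadget but explicitly defers what is in fact the entire technical content of the reduction. The ``background structure'' you leave open is not a detail to be filled in later; it is where all the difficulty lies, and the paper's solution looks nothing like a hub or layered skeleton.

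The tension you name is real, and a symmetric hub cannot resolve it. You need every non-check pair at roundtrip $\le k+1$; consider two auxiliaries $t_v,t_{v'}$. With only the gadget edges, any path into $t_{v'}$ ends at some $u\in V_k$ with $uv'\in E(G)$, so $\tilde d(t_v,t_{v'})\ge 2k$, forcing background edges incident to the $t$-vertices. But a background in-edge $w\to t_{v^*}$ gives $d_H(v^*,t_{v^*})\le d_H(v^*,w)+1$, and since your background must also make $v^*$ close to $w$ (else the YES diameter exceeds $k+1$), the $\ge 2k$ lower bound collapses. The modular argument ``$V_1$-to-$V_k$ walks have length $\equiv k-1\pmod k$'' holds only in $G$; once any hub is added it evaporates, and you provide no mechanism to distinguish the check pair $(v^*,t_{v^*})$ from the non-check pairs that must all be short. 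Note also that the reduction must work for \emph{every} $k$: for any single fixed $k$, All-Nodes $k$-Cycle is already subquadratic, so ``fixing any such $k$'' does not by itself contradict the hypothesis.

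The paper's construction is structurally different. It makes two copies $S,T$ of $V_1$ (plus forward and backward copies of $V_2,\dots,V_k$), so that the interesting pairs are $(a,a')\in S\times T$ sharing the \emph{same} $O(\log n)$-bit identifier, and introduces a large scaling parameter $t$. The key device is a bit-gadget $J$ of $O(\log n)$ nodes with identifier-dependent edge weights: whenever $\bar a\neq\bar b$ there is a bit on which they differ, yielding a short $a\to J\to b'$ roundtrip, while for the pair $(a,a')$ no such bit exists and $J$ gives no shortcut. This identifier-sensitive connectivity is precisely what separates check from non-check pairs---something no symmetric hub can do---and four carefully asymmetric ``omni'' nodes cap the remaining distances. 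The resulting gap is $10t$ versus $6t+2k$, pushed above $5/3-\varepsilon$ by taking $t$ large relative to $k$; the unweighted version then replaces each weighted edge by a path of the appropriate length through chains of copies of $S$ and $T$.
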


We are thus left with a gap between the linear time factor-$2$ upper bound and the subquadratic factor-$5/3$ lower bound. 
A related problem with a similar situation is the problem of computing the eccentricity of all nodes in an undirected graph \cite{AbboudWW16}; there, $5/3$ is the right number because one can indeed compute a $5/3$-approximation in subquadratic time \cite{ChechikLRSTW14}.
Could it be the same here?

Alas, our final result is a reduction from the following classical problem in geometry to roundtrip diameter, establishing a barrier for any better-than-$2$ approximation in subquadratic time.

\begin{definition}[Approximate $\ell_\infty$ Closest-Pair]
  Let $\alpha>1$. The $\alpha$-approximate $\ell_\infty$ Closest-Pair (CP) problem is, given $n$ vectors $v_1,\dots,v_n$ of some dimension $d$ in $\mathbb{R}^n$, determine if there exists $v_i$ and $v_j$ with $\|v_i-v_j\|_\infty\le 1$, or if for all $v_i$ and $v_j$, $\|v_i-v_j\|_\infty\ge \alpha$.
\end{definition}

Closest-pair problems are well-studied in various metrics; the main question being whether the naive $n^2$ bound can be broken (when $d$ is assumed to be $n^{o(1)}$). For $\ell_{\infty}$ specifically, a simple reduction from OV proves a quadratic lower bound for $(2-\eps)$-approximations \cite{indyk01}; but going beyond this factor with current reduction techniques runs into a well-known ``triangle-inequality'' barrier (see \cite{Rubinstein18,KarthikM20}). This leaves a huge gap from the upper bounds that can only achieve $O(\log \log n)$ approximations in subquadratic time \cite{indyk01}.
Cell-probe lower bounds for the related nearest-neighbors problem suggest that this log-log bound may be optimal \cite{ACP08}; if indeed constant approximations are impossible in subquadratic time then the following theorem implies a tight lower bound for roundtrip diameter.

\begin{theorem}
If for some $\alpha\ge 2, \eps>0$ there is a $2-\frac{1}{\alpha}-\varepsilon$ approximation algorithm in time $O(m^{2-\eps})$ for roundtrip diameter in unweighted graphs, then for some $\delta>0$ there is an $\alpha$-approximation for $\ell_\infty$-Closest-Pair with vectors of dimension $d\le n^{1-\delta}$ in time $\tilde O(n^{2-\delta})$.

\label{thm:lb-linfty-unw}
\end{theorem}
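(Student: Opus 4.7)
The plan is to give a reduction from $\alpha$-approximate $\ell_\infty$ Closest-Pair to $(2 - 1/\alpha - \varepsilon)$-approximate roundtrip diameter. Given $n$ vectors $v_1, \ldots, v_n$ of dimension $d \le n^{1-\delta}$, I would first discretize so that coordinates are integers in $\{0, 1, \ldots, \alpha\}$: any entry of magnitude exceeding $\alpha$ already certifies $\ell_\infty \ge \alpha$ on its own, and within the relevant range the $\le 1$ versus $\ge \alpha$ distinction survives bucketing up to a polylogarithmic-factor blow-up.

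The graph $G$ I would construct has three ingredients. \textbf{First}, a pair of distinguished vertices $s, t$ together with a default \emph{backbone}: a forward path of length $\alpha - 1$ from $s$ to $t$, and a backward path of length $\alpha$ from $t$ back to $s$, giving a default roundtrip $d(s,t)+d(t,s) = 2\alpha - 1$ unless shortened. \textbf{Second}, a shared \emph{coordinate gadget}: for each coordinate $c \in [d]$ and value $x \in \{0,\dots,\alpha\}$, a vertex $u_{c,x}$, together with bidirected unit-length edges $u_{c,x} \leftrightarrow u_{c,x+1}$, so that the in-chain distance equals $|x - x'|$. \textbf{Third}, for each vector $v_i$, ``entry'' and ``exit'' vertices $A_i, B_i$, with edges that thread through the coordinate gadget at the positions dictated by $v_i$. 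The shortcut realizing the ``close pair'' case is a route $t \to B_j \to \text{(coord gadget)} \to A_i \to s$, where the coord-gadget traversal forces, for each $c$, a walk of length $|v_i[c] - v_j[c]|$ inside the $u_{c,\cdot}$ chain.

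Correctness would then split into two directions. \emph{Completeness:} if $\|v_i - v_j\|_\infty \le 1$, the shortcut has total length $\le \alpha$, so every pair of vertices has roundtrip at most $\alpha$ and the diameter is $\le \alpha$. \emph{Soundness:} if every pair $v_i, v_j$ satisfies $\|v_i - v_j\|_\infty \ge \alpha$, then every candidate shortcut contributes length $\ge \alpha$ in at least one coordinate chain, so the cheapest $s \leftrightarrow t$ cycle still costs $\ge 2\alpha - 1$; a short case analysis of the remaining vertex pairs then shows that no other pair exceeds this. The ratio $(2\alpha-1)/\alpha = 2 - 1/\alpha$ matches the claimed hardness factor, and choosing parameters so that the number of edges is $m = \tilde O(n)$ (achieved by sharing the coordinate gadget across all vectors and keeping each vector's threading compact) makes the hypothetical $O(m^{2-\varepsilon})$ diameter call yield an $\ell_\infty$-CP algorithm in $\tilde O(n^{2-\delta})$ for appropriate $\delta = \delta(\varepsilon) > 0$.

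The main obstacle is precisely the encoding of $\ell_\infty$-closeness: because $\|v_i - v_j\|_\infty \le 1$ is a \emph{conjunction} across coordinates while shortest paths naturally compute \emph{disjunctions} (minima), the shortcut must be architected so that a single path in the graph visits every coordinate in sequence, paying length $|v_i[c] - v_j[c]|$ at each. This is the ``triangle-inequality barrier'' hinted at earlier in the excerpt, and the roundtrip measure is what allows it to be circumvented here: by routing the forward half ($t \to \cdots \to s$) through the gadget in one orientation and arranging the backbone so that an asymmetric ``detour'' is forced in the far case, one can transmute a disjunction in a single direction into a conjunction after summing the two directions. Verifying that this asymmetric wiring produces exactly the factor $2 - 1/\alpha$, and that no unintended short cycle appears between non-terminal vertex pairs in the soundness case, is the most delicate part of the argument.
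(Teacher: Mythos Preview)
Your proposal has several genuine gaps that would prevent the argument from going through.

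\textbf{The discretization step is wrong.} You claim that ``any entry of magnitude exceeding $\alpha$ already certifies $\ell_\infty \ge \alpha$ on its own''. This is false: two vectors can both have huge entries and still be $\ell_\infty$-close (e.g.\ $v_1=(100)$ and $v_2=(100.5)$). Reducing to an instance with bounded coordinates is a nontrivial step; the paper devotes a separate lemma (Lemma~\ref{lem:bounded}) to it, using an iterated piecewise-linear embedding that blows up the dimension by an $O(\varepsilon^{-1}\log n)$ factor. Without this, the whole construction collapses, because the analysis of the ``close pair'' case needs a uniform lower bound on every edge weight (so that indirect $4$-hop paths cannot undercut the intended $2$-hop paths).

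\textbf{Your encoding captures $\ell_1$, not $\ell_\infty$.} You propose that ``a single path in the graph visits every coordinate in sequence, paying length $|v_i[c]-v_j[c]|$ at each''. That sums the per-coordinate costs, so the shortcut length is $\|v_i-v_j\|_1$, not $\|v_i-v_j\|_\infty$. Your completeness claim that the shortcut has length $\le \alpha$ when $\|v_i-v_j\|_\infty\le 1$ simply does not follow: the sum can be as large as $d$. Conversely, in the soundness case $\|v_i-v_j\|_\infty\ge\alpha$ only tells you the sum is $\ge\alpha$, which matches your claim but gives no separation from the completeness case. The paper's construction works in the opposite direction: it exploits that a shortest path naturally takes a \emph{minimum} over coordinates, and arranges weights so that the roundtrip through coordinate $x$ has length $4\alpha-2(v_i[x]-v_j[x])$. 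Then the shortest roundtrip is $4\alpha-2\|v_i-v_j\|_\infty$, so a far pair (some coordinate with gap $\ge\alpha$) gives a short roundtrip and a close pair (all gaps $\le 1$) gives a long roundtrip of at least $4\alpha-2$. Note this is the reverse of your completeness/soundness orientation.

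\textbf{The fixed $(s,t)$ architecture breaks the diameter bound.} Even granting the encoding, your bidirected coordinate chain $u_{c,0}\leftrightarrow\cdots\leftrightarrow u_{c,\alpha}$ already contains a pair at roundtrip distance $2\alpha$, regardless of the input. So in the ``close pair exists'' case, the diameter cannot drop to $\alpha$ as you claim; the coordinate gadget itself forces diameter $\ge 2\alpha$, erasing the gap. In the paper's construction there is no fixed terminal pair; the large-roundtrip witnesses are the two vector-vertices of the close pair themselves, and every other pair is shown (via the two coordinate sets $X_1,X_2$ and the bounded-coordinate assumption) to have roundtrip at most $2\alpha$.

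In short, the paper's route is: (i) reduce to $(0.5+\varepsilon)\alpha$-bounded coordinates; (ii) build a graph with one vertex per vector and two copies $X_1,X_2$ of the coordinate set, with weights $\alpha\pm v_i[x]$ chosen so that the roundtrip between $i$ and $j$ through coordinate $x$ has length $4\alpha-2|v_i[x]-v_j[x]|$; (iii) for the unweighted version, replace each vector vertex by a small ``fan'' subgraph that simulates these weights with unit-length paths. The key conceptual point you are missing is that the min-over-paths structure of shortest paths is aligned with $\ell_\infty$ only if you set things up so that a \emph{large} coordinate gap produces a \emph{short} path, not the other way around.
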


In particular, a $2-\eps$ approximation for roundtrip diameter in subquadratic time implies an $\alpha$-approximation for the $\ell_{\infty}$-Closest-Pair problem in subquadratic time, for some $\alpha=O(1/\eps)$.
Thus, any further progress on the roundtrip diameter problem requires a breakthrough on one of the most basic algorithmic questions regarding the $\ell_{\infty}$ metric (see Figure~\ref{fig:rt}).

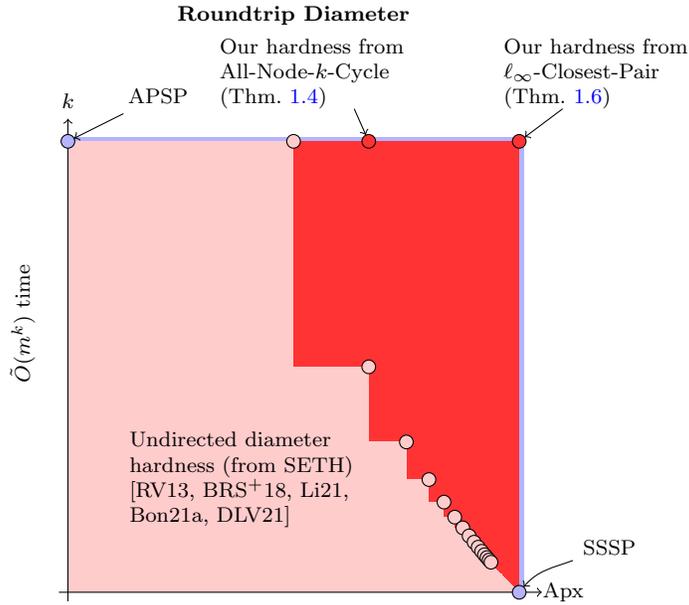
\begin{figure}
    \centering
    \begin{tikzpicture}[xscale=6, yscale=6, font=\scriptsize]
      \coordinate (ub0) at (1,2);
      \coordinate (ub2) at (2,1);
      \draw[fill=blue!30,draw=none] (2.01,2.01) rectangle (ub0);
      \draw[fill=blue!30,draw=none] (2.01,2.01) rectangle (ub2);
      \coordinate (us1) at ({5/3},2);
      \coordinate (us2) at (2,2);
      \draw[fill=red!80,draw=none] (1.5,1) rectangle (us2);
      \draw[fill=red!80,draw=none] (1.5,1) rectangle (us1);
      \foreach \x in {1,...,100} {
        \coordinate (lb\x) at ({2-1.0/(\x+1)}, {1+1.0/\x});
        \draw[fill=red!20,draw=none] (1,1) rectangle (lb\x);
      }
      \figureaxes

      \node[draw, fill=blue!30, circle,scale=0.5] (ub0n) at (ub0) {};
      \node[draw, fill=blue!30, circle,scale=0.5] (ub2n) at (ub2) {};
      \foreach \x in {1,...,15} {
        \node[draw, fill=red!20, circle,scale=0.5] at (lb\x) {};
      }
      \node[draw, fill=red!80, circle,scale=0.5] (n1) at (us1) {};
      \node[draw, fill=red!80, circle,scale=0.5] (n2) at (us2) {};
      \node[text width=90] (prev) at (1.4, 1.25) {Undirected diameter hardness (from SETH) \cite{RV13,BRSVW18,Li21,Bonnet21ic,DLV21}};
      \node[text width=90,inner sep=0, outer sep=0] (lb1l) at (1.6, 2.15) {Our hardness from All-Node-$k$-Cycle (Thm.~\ref{thm:lb-ankc-unw})};
      \node[text width=80, inner sep=0, outer sep=0] (lb2l) at (2.2, 2.15) {Our hardness from $\ell_\infty$-Closest-Pair (Thm.~\ref{thm:lb-linfty-unw})};
      \draw[->] (lb1l) -- (n1);
      \draw[->] (lb2l) -- (n2);

      \node[] (ub0l) at (1.2, 2.1) {APSP};
      \draw[->] (ub0l) to (ub0n);
      \node[] (ub2l) at (2.2, 1.1) {SSSP};
      \draw[->] (ub2l) [out=200,in=50] to (ub2n);
      \node[] () at (1.5, 2.28) {\textbf{Roundtrip Diameter}};
    \end{tikzpicture}
    \caption[Roundtrip Diameter algorithms and hardness]{Roundtrip Diameter algorithms and hardness. All tradeoffs hold for both weighted and unweighted graphs (though citations may differ for weighted vs. unweighted). The previously best hardness results were those inherited from undirected diameter.}
    \label{fig:rt}
\end{figure}

\subsection{Related Work}

Besides the diameter and the roundtrip diameter, there is another natural version of the diameter problem in directed graphs called Min-Diameter \cite{AbboudWW16,dalirrooyfard2019approximation,minajenny}.
The distance between $u,v$ is defined as the $\min(d(u,v),d(v,u))$.\footnote{Note that the Max-Diameter version where we take the max rather than the min is equal to the one-way version.}
This problem seems to be even harder than roundtrip because even a $2$-approximation in subquadratic time is not known.

The fine-grained complexity results on diameter (in the sequential setting) have had interesting consequences for computing the diameter in distributed settings (specifically in the CONGEST model). 
Techniques from both the approximation algorithms and from the hardness reductions have been utilized, see e.g. \cite{PelegRT12,AbboudCKP21,AnconaCDEW20}.
It would be interesting to explore the consequences of our techniques on the intriguing gaps in that context \cite{GKP20}.

\subsection{Organization}

In the main body of the paper, we highlight the key ideas in our main results (Theorem~\ref{thm:alg-main} and Theorem~\ref{thm:lb-linfty-unw}) by proving an ``easy version'' of each theorem, and in the appendices, we establish the full theorems.
First, we establish some preliminaries in Section~\ref{sec:prelims}.
In Section~\ref{sec:alg-74}, we prove the special case of Theorem~\ref{thm:alg-main} when $t=0$, giving a $7/4$-approximation of the diameter in directed unweighted graphs in time $O(m^{1.458})$. 
In Section~\ref{sec:alg-main} of the appendix, we generalize this proof to all parameters $t\ge 0$ and to weighted graphs.
In Section~\ref{sec:LBoverview} we give an overview of the hardness reductions in this paper.
In Section~\ref{sec:linfty}, we prove a weakening of Theorem~\ref{thm:lb-linfty-unw} that only holds for weighted graphs. 
Later, in Section~\ref{sec:linfty-unw} of the appendix, we extend this lower bound to unweighted graphs.
In Section~\ref{sec:ankc}, we prove a weakening of Theorem~\ref{thm:lb-ankc-unw}: under Hypothesis~\ref{hypo:ankc}, there is no $5/3-\varepsilon$ approximation of roundtrip diameter in weighted graphs.
Later, in Section~\ref{sec:ankc-unw}, we extend this lower bound to unweighted graphs.

\section{Preliminaries}
\label{sec:prelims}

All logs are base $e$ unless otherwise specified.
For reals $a\ge 0$, let $[\pm a]$ denote the real interval $[-a,a]$.
For a boolean statement $\varphi$, let $\one[\varphi]$ be 1 if $\varphi$ is true and 0 otherwise.

For a vertex $v$ in a graph, let $\deg(v)$ denote its degree.
For $r\ge 0$, let $B_r^{in}(v) = \{u: d(u,v) \le r\}$ be the in-ball of radius $r$ around $v$, and let $B_r^{out}(v) = \{u: d(v,u) \le r\}$ be the out-ball of radius $r$ around $v$.
For $r\ge 0$, let $B_r^{in+}(v)$ be $B_r^{in}(v)$ and their in-neighbors, and let $B_r^{out+}(v)$ be $B_r^{out}(v)$ and their out-neighbors.

Throughout, let $\omega\le 2.3728596$ denote the matrix multiplication constant.
We use the following lemma which says that we can multiply sparse matrices quickly.
\begin{lemma}[see e.g. Theorem 2.5 of \cite{KaplanSV06}]
  We can multiply a $a\times b$ and a $b\times a$ matrix, each with at most $ac$ nonzero entries, in time $O(ac\cdot a^{\frac{\omega-1}{2}})$.\footnote{In \cite{KaplanSV06}, this runtime of $O(ac\cdot a^{\frac{\omega-1}{2}})$ is stated only for the case $ac>a^{(\omega+1)/2}$. However, the runtime bound for this case works for other cases as well so the lemma is correct for all matrices.}
\label{lem:matmul}
\end{lemma}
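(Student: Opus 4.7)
The plan is to use the classical heavy/light decomposition for sparse matrix multiplication, combining a dense fast matrix multiplication on the "heavy" part with a direct entry-by-entry computation on the "light" part.

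I would fix a threshold $t := a^{(\omega-1)/2}$ and call a column index $k \in [b]$ \emph{heavy} if column $k$ of $A$ has more than $t$ nonzero entries, and \emph{light} otherwise. Writing $H$ and $L$ for the two sets of indices, I would split
\[
AB \;=\; A_H B_H \;+\; A_L B_L,
\]
where $A_H, A_L$ denote the column submatrices of $A$ selected by $H$ and $L$ and $B_H, B_L$ are the corresponding row submatrices of $B$. Since $A$ has at most $ac$ nonzeros, every heavy column absorbs more than $t$ of them, so $|H| \le ac/t = ac \cdot a^{-(\omega-1)/2}$.

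For the heavy part I would partition the shared dimension $H$ into $\lceil |H|/a \rceil$ blocks of length $a$, multiply each resulting pair of $a \times a$ dense blocks in $O(a^\omega)$ time by fast matrix multiplication, and sum the $a \times a$ output blocks. This costs
\[
O\!\left(\left\lceil \tfrac{|H|}{a}\right\rceil \cdot a^{\omega}\right) \;=\; O\!\left(\tfrac{c}{t}\, a^{\omega} + a^{\omega}\right) \;=\; O\!\left(ac\cdot a^{(\omega-1)/2} + a^{\omega}\right),
\]
and in the stated regime $ac \ge a^{(\omega+1)/2}$ the additive $a^{\omega}$ term is absorbed into the first. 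For the light part I would iterate over the nonzeros of $B_L$: each nonzero $(B_L)_{k,j}$ triggers adding the scaled column $(B_L)_{k,j} \cdot (A_L)_{:,k}$ to column $j$ of the output. Since a light column of $A_L$ has at most $t$ nonzeros and $B_L$ has at most $ac$ nonzeros overall, this costs $O(t \cdot ac) = O(ac \cdot a^{(\omega-1)/2})$.

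Adding the two contributions gives the claimed $\tilde O(ac \cdot a^{(\omega-1)/2})$ bound. The key design choice is the threshold: balancing the heavy cost $(c/t)\,a^{\omega}$ against the light cost $t \cdot ac$ forces $t^{2} = a^{\omega-1}$, i.e.\ $t = a^{(\omega-1)/2}$, which I would locate by setting the two bounds equal. The main obstacle I would need to handle carefully is the edge-case regime referenced in the footnote, where $ac < a^{(\omega+1)/2}$: there the $a^{\omega}$ term from padding the heavy block would naively dominate, and the argument must be adjusted, either by invoking rectangular matrix multiplication on the $a \times |H|$ by $|H| \times a$ product or by noting that in this regime every column is essentially light so that the light-part analysis alone already delivers the bound.
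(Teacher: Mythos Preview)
The paper does not prove this lemma; it is stated as a known result with a citation to \cite{KaplanSV06} and used as a black box. Your heavy/light column decomposition with threshold $t=a^{(\omega-1)/2}$ is exactly the standard argument behind that cited result, and your analysis of both the heavy and light parts is correct for the regime $ac\ge a^{(\omega+1)/2}$.

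On the edge case $ac<a^{(\omega+1)/2}$: your second suggested fix (``every column is essentially light'') is not literally true, since the $ac$ nonzeros of $A$ could concentrate in a few columns each with more than $t$ entries. The cleaner way to handle it is your first suggestion: when $ac<a^{(\omega+1)/2}$ you have $|H|\le ac/t<a$, so the heavy product is $a\times |H|$ by $|H|\times a$ with $|H|<a$; tiling the two $a$-dimensions into blocks of size $|H|$ and applying fast square multiplication to each $|H|\times|H|$ block gives cost $O((a/|H|)^2\,|H|^{\omega})=O(a^2|H|^{\omega-2})\le O(a^2(ac/t)^{\omega-2})$, and one can check this is at most $O(ac\cdot a^{(\omega-1)/2})$ in this regime. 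Either way, the paper simply asserts the bound holds uniformly and moves on.
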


We repeatedly use the following standard fact.
\begin{lemma}
  Given two sets $B\subset V$ with $B$ of size $k$ and $V$ of size $2m$, a set of $4(m/k)\log m$ uniformly random elements of $V$ contains an element of $B$ with probability at least $1-\frac{1}{m^2}$.
\label{lem:hit}
\end{lemma}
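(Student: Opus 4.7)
The plan is to carry out the standard one-line hitting-set calculation. For each of the $s = 4(m/k)\log m$ uniform samples from $V$, the probability of landing in $B$ is $k/(2m)$ since $|B|=k$ and $|V|=2m$, and so the probability of missing $B$ on a single draw is $1-k/(2m)$. Treating the draws as independent (with replacement) and using the inequality $(1-x)^s \le e^{-sx}$, the probability that every one of the $s$ samples misses $B$ is at most
\[
\left(1-\tfrac{k}{2m}\right)^{s} \;\le\; \exp\!\left(-\tfrac{sk}{2m}\right) \;=\; \exp(-2\log m) \;=\; \tfrac{1}{m^{2}},
\]
where the middle equality uses $s k/(2m) = 4(\log m)\cdot k/(2k) \cdot (m/m) = 2\log m$ and we rely on the preliminaries' convention that $\log$ is natural. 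Taking the complementary event yields the claimed lower bound of $1-1/m^2$.

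Two trivial loose ends need a sentence each. First, the lemma statement does not specify sampling with or without replacement; the with-replacement bound above is the worst case, since sampling without replacement only \emph{increases} the chance of hitting $B$ (each conditional probability of a hit, given previous misses, is at least $k/(2m)$), so the same conclusion holds in either model. Second, $4(m/k)\log m$ need not be an integer, but this is absorbed by rounding up the sample size, which can only help. I do not anticipate any substantive obstacle: the entire argument is a textbook Chernoff-style estimate, which is why the paper lists the statement as a ``standard fact'' and invokes it later purely as a black-box hitting-set tool.
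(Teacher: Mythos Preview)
Your proof is correct and follows exactly the same one-line computation as the paper: bound the miss probability by $(1-k/(2m))^{4(m/k)\log m}\le e^{-2\log m}=1/m^2$. The paper does not even bother with the remarks on sampling model or integrality, so your write-up is, if anything, slightly more thorough.
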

\begin{proof}
  The probability that $B$ is not hit is $(1-\frac{k}{2m})^{4m/k\log m} \le e^{-2\log m} = \frac{1}{m^2}$.
\end{proof}

\section{$7/4$-approximation of directed (one-way) diameter}
\label{sec:alg-74}
In this section, we prove Theorem~\ref{thm:alg-main} in the special case of $t=0$ and unweighted graphs.
That is, we give a $7/4$-approximation of the (one-way) diameter of a directed unweighted graph in $O(m^{1.4575})$ time. 
For the rest of this section, let $\alpha=\frac{\omega+1}{\omega+5}\le 0.4575$.

Before stating the algorithm and proof, we highlight how our algorithm differs from the undirected algorithm of \cite{CGR}. At a very high level, all known diameter approximation algorithms compute some pairs of distances, and use the triangle inequality to infer other distances, saving runtime. Approximating diameter in directed graphs is harder than in undirected graphs because distances are not symmetric, so we can only use the triangle inequality ``one way." For example, we always have $d(x,y)+d(y,z) \ge d(x,z)$, but not necessarily $d(x,y)+d(z,y) \ge d(x,z)$. The undirected algorithm \cite{CGR} crucially uses the triangle inequality ``both ways," so it was not clear whether their algorithm could be adapted to the directed case. We get around this barrier using matrix multiplication together with the triangle inequality to infer distances quickly. We consider the use of matrix multiplication particularly interesting because, previously, matrix multiplication had only been used for diameter in dense graphs, but we leverage it in sparse graphs. 

\begin{theorem}
  Let $\alpha=\frac{\omega+1}{\omega+5}$. There exists a randomized $7/4$-approximation algorithm for the diameter of an unweighted directed graph running in $\tilde O(m^{1+\alpha})$ time.
\label{thm:alg-74}
\end{theorem}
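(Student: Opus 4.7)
I would follow the standard template: binary-search the target diameter $D$ (at polylog overhead) and try to exhibit a pair of vertices at distance at least $(4/7)D$. Let $(a^*,b^*)$ denote the true diameter pair and let $s$ be a ball-size threshold to be optimized later.

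\emph{Phase 1 (uniform sampling).} Sample $R\subseteq V$ of size $\tilde O(m/s)$ uniformly at random and compute in-BFS and out-BFS from each $r\in R$, in total time $\tilde O(m^2/s)$. By Lemma~\ref{lem:hit}, if $|B_{3D/8}^{in}(b^*)|\ge s$ then with high probability some $w\in R$ lies in this ball and satisfies $d(w,b^*)\le 3D/8$; the triangle inequality then gives $d(a^*,w)\ge 5D/8 > (4/7)D$, and this distance is recorded by the in-BFS from $w$. The symmetric argument handles $B_{3D/8}^{out}(a^*)$. Thus we win in the ``large-ball'' case.

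\emph{Phase 2 (small-ball case via sparse matrix multiplication).} In the residual case, both $B_{3D/8}^{out}(a^*)$ and $B_{3D/8}^{in}(b^*)$ have size $<s$. I would use Lemma~\ref{lem:matmul} on a pair of sparse matrices whose rows/columns have at most $O(s)$ non-zeros (forced by the small-ball hypothesis), and whose non-zero product entries witness a pair $(u,v)$ at distance $\ge (4/7)D$ --- for example via a two-hop decomposition through sampled landmarks in $R$, combined with the already-computed distances $d(r,\cdot),d(\cdot,r)$ for $r\in R$. Using Lemma~\ref{lem:matmul} the product costs $\tilde O(\mathrm{nnz}\cdot n^{(\omega-1)/2})$, and setting $s\sim m^{4/(\omega+5)}$ balances this against the Phase~1 cost $\tilde O(m^2/s)$, yielding overall runtime $\tilde O(m^{1+(\omega+1)/(\omega+5)})=\tilde O(m^{1+\alpha})$, with the correctness guarantee recovered from the case analysis above.

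\emph{Main obstacle.} The delicate step is Phase~2: designing the correct sparse matrix product. Matrix products naturally produce \emph{upper} bounds on distances (by concatenating paths), whereas to witness the diameter one wants non-zero entries that certify a \emph{lower} bound of $(4/7)D$. My way out would be to use the sampled vertices in $R$ as separators: the small-ball hypothesis forces sparsity, and I would arrange the matrix so that a non-zero entry at $(u,v)$ encodes the existence of a landmark $r$ with $d(u,r)\ge cD$ and $d(r,v)\ge c'D$ (for carefully chosen $c,c'$ adding up to at least $4/7$), using the sampled BFS data to populate the entries. Unlike the undirected Cairo--Grossi--Rizzi recursion, which propagates estimates by the triangle inequality applied in both directions, the directed case has no such symmetry; fast matrix multiplication is the substitute, with a worse rate, and the specific radius $3D/8$ along with the exponent $\alpha=(\omega+1)/(\omega+5)$ both emerge from threading this needle and matching Phase~1 and Phase~2 costs.
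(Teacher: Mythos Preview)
Your high-level framework---binary search on $D$, random sampling for the large-ball case, sparse matrix multiplication for the small-ball case---matches the paper, and your threshold $s=m^{1-\alpha}$ is the right one. But Phase~2 has two genuine gaps.

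First, the matrix interpretation is backwards. You correctly flag that a product $(AB)_{u,v}=\sum_r A_{u,r}B_{r,v}$ encodes the \emph{existence} of an intermediate $r$, hence an upper bound on $d(u,v)$; but your proposed fix---make a non-zero entry encode ``some landmark $r$ with $d(u,r)\ge cD$ and $d(r,v)\ge c'D$''---is unusable, because those two inequalities place no lower bound whatsoever on $d(u,v)$ (indeed $u=v$ is compatible with both). The paper's move is the opposite: set $A^{out}_{s,v}=\one[d(s,v)\le 2D/7]$ and $A^{in}_{v,s'}=\one[d(v,s')\le 2D/7]$, so that a \emph{zero} entry $(A^{out}A^{in})_{s,s'}=0$ certifies that no vertex $v$ lies within $2D/7$ of both, hence $d(s,s')>4D/7$. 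Zero entries, not non-zero ones, witness the gap.

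Second, a single sample $R$ of size $m^\alpha$ cannot furnish the matrices. In the paper the rows and columns are indexed by a \emph{second}, larger sample $\hat S$ of size $m^{1-\alpha}$, restricted to those $s\in\hat S$ whose $2D/7$-ball has size at most $m^{1-\alpha}$. Two extra steps you omit are needed to make this work: (i) for every vertex $v$ with $|B^{out}_{D/7}(v)|\le m^\alpha$ (resp.\ in-ball), do full BFS from that small ball---this lets you assume every $D/7$-ball has size $>m^\alpha$, so that $\hat S$ hits $B^{out}_{D/7}(a^*)$ and $B^{in}_{D/7}(b^*)$; and (ii) your Phase~1 (with radius $3D/7$, not $3D/8$) guarantees $|B^{out}_{3D/7}(a^*)|\le m^{1-\alpha}$, so the sampled $s\in B^{out}_{D/7}(a^*)$ has $B^{out}_{2D/7}(s)\subseteq B^{out}_{3D/7}(a^*)$ of size $\le m^{1-\alpha}$, giving the row-sparsity required by Lemma~\ref{lem:matmul}. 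The radii $D/7,\,2D/7,\,3D/7$ are forced by this nesting ($D/7+2D/7=3D/7$) together with the requirement $2\cdot(2D/7)\ge 4D/7$; the choice $3D/8$ does not thread this needle.
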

\begin{proof}
  It suffices to show that, for any positive integer $D > 0$, there exists an algorithm $\mathcal{A}_D$ running in time $\tilde O(m^{1+\alpha})$ that takes as input any graph and accepts if the diameter is at least $D$, rejects if the diameter is less than $4D/7$, and returns arbitrarily otherwise.
  Then, we can find the diameter up to a factor of $7/4$ by running binary search with $\mathcal{A}_D$,\footnote{
  We have to be careful not to lose a small additive factor. Here are the details: Let $D^*$ be the true diameter. Initialize $hi = n, lo = 0$. Repeat until $hi-lo=1$: let $mid=\floor{(hi+lo)/2}$, run $\mathcal{A}_{mid}$, if accept, set $lo=mid$, else $hi=mid$. One can check that $hi \ge D^*+1$ and $lo \le 7D^*/4$ always hold. If we return $lo$ after the loop breaks, the output is always in $[D^*, 7D^*/4]$.}
  which at most adds a factor of $O(\log n)$.

  We now describe the algorithm $\mathcal{A}_D$.
  The last two steps, illustrated in Figure~\ref{fig:74-1} contain the key new ideas. 
  
\begin{enumerate}
  \item First, we apply a standard trick that replaces the input graph on $n$ vertices and $m$ edges with an $2m$-vertex graph of max-degree-3 that preserves the diameter: replace each vertex $v$ with a $\deg(v)$-vertex cycle of weight-0 edges and where the edges to $v$ now connect to distinct vertices of the cycle. From now on, we work with this max-degree-3 graph on $2m$ vertices. 

  \item\label{74:step:1} Sample $4m^{\alpha}\log m$ uniformly random vertices and compute each vertex's in- and out-eccentricity. If any such vertex has (in- or out-) eccentricity at least $4D/7$ Accept.

  \item\label{74:step:2} For every vertex $v$, determine if $|B_{D/7}^{out}(v)|\le m^{\alpha}$. If such a vertex $v$ exists, determine if any vertex in $B_{D/7}^{out+}(v)$ has eccentricity at least $4D/7$, and Accept if so.

  \item\label{74:step:3} For every vertex $v$, determine if $|B_{D/7}^{in}(v)|\le m^{\alpha}$. If such a vertex $v$ exists, determine if any vertex in $B_{D/7}^{in+}(v)$ has eccentricity at least $4D/7$, and Accept if so.

  \item\label{74:step:4} Sample $4m^{1-\alpha}\log m$ uniformly random vertices $\hat S$. Let $S^{out}=\{s\in\hat S: |B^{out}_{2D/7}(s)|\le m^{1-\alpha}\}$ and $S^{in}=\{s\in \hat S: |B^{in}_{2D/7}(s)|\le m^{1-\alpha}\}$.
  Compute $B^{out}_{2D/7}(s)$ and $B^{out+}_{2D/7}(s)$ for $s\in S^{out}$, and $B^{in}_{2D/7}(s)$ and $B^{in+}_{2D/7}(s)$ for $s\in S^{in}$.

  \item\label{74:step:5} Let $A^{out}\in \mathbb{R}^{S^{out}\times V}$ be the $|S^{out}|\times n$ matrix where $A_{s,v}=\one[v\in B^{out}_{2D/7}(s)]$. Let $A^{in}\in \mathbb{R}^{V\times S^{in}}$ be the $n\times |S^{in}|$ matrix where $A^{in}_{v,s}=\one[v\in B^{in}_{2D/7}(s)]$ if $\floor{4D/7}=2\floor{2D/7}$ and $A^{in}_{v,s}=\one[v\in B^{in+}_{2D/7}(s)]$ otherwise. Compute $A^{out}\cdot A^{in}\in\mathbb{R}^{S^{out}\times S^{in}}$ using sparse matrix multiplication. If the product has any zero entries, Accept, otherwise Reject.
\end{enumerate}

\begin{figure}
  \begin{center}
    \includegraphics[width=0.9\textwidth]{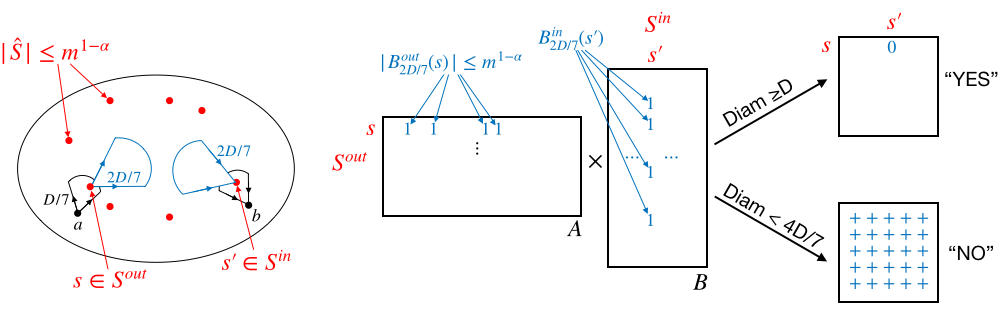}
  \end{center}
  \caption{Steps~\ref{74:step:4} and \ref{74:step:5}. If $d(a,b)\ge D$ and Steps~\ref{74:step:1}, \ref{74:step:2}, and \ref{74:step:3} do not accept, with high probability, set $\hat S$ hits the $D/7$ out- and in- neighborhoods of $a$ and $b$ at vertices $s$ and $s'$, respectively, that must have distance at least $5D/7$ by the triangle inequality. Thus, checking all pairs of distances in $S^{out}\times S^{in}$, which can be done quickly with sparse matrix multiplication, distinguishes at Step~\ref{74:step:5} whether the diameter is at least $D$ or less than $4D/7$.}
  \label{fig:74-1}
\end{figure}

  \paragraph*{Runtime.} 
  Computing a single eccentricity takes time $O(m)$, so Step~\ref{74:step:1} takes time $\tilde O(m^{1+\alpha})$. 
  For Step~\ref{74:step:2} checking if $|B^{out}_{D/7}(v)|\le m^{\alpha}$ takes $O(m^\alpha)$ time for each $v$ via a partial Breadth-First-Search (BFS). Here we use that the max-degree is 3.
  If $|B^{out}_{D/7}(v)|\le m^{\alpha}$, there are at most $3m^\alpha$ eccentricity computations which takes time $O(m^{1+\alpha})$.
  Step~\ref{74:step:3} takes time $O(m^{1+\alpha})$ for the same reason.
  Similarly, we can complete Step~\ref{74:step:4} by running partial BFS for each $s\in \hat S$ until $m^{1-\alpha}$ vertices are visited.
  This gives $S^{out}$ and $S^{in}$ and also gives $B^{out}_{2D/7}(s)$ and $B^{out+}_{2D/7}(s)$ for $s\in S^{out}$ and $B^{in}_{2D/7}(s)$ and $B^{in+}_{2D/7}(s)$ for $s\in S^{in}$.
  For Step~\ref{74:step:5}, the runtime is the time to multiplying sparse matrices. Matrix $A^{out}$ has at most $|\hat S|\le 4m^{1-\alpha}\log m$ rows each with at most $\max_{s\in S^{out}}|B^{out}_{2D/7}(s)|\le m^{1-\alpha}$ entries, and similarly $A^{in}$ has at most $4m^{1-\alpha}\log m$ columns each with at most $\max_{s\in S^{in}}|B^{in+}_{2D/7}(s)| \le 3m^{1-\alpha}$ entries.
  The sparse matrix multiplication takes time $\tilde O(m^{(2-2\alpha)}\cdot m^{(1-\alpha)\frac{\omega-1}{2}})=\tilde{O}(m^{1+\alpha})$ by Lemma~\ref{lem:matmul} with $a=m^{1-\alpha}, b=n, c=m^{1-\alpha}$.

  \paragraph*{If the Diameter is less than $4D/7$, we always reject.}
  Clearly every vertex has eccentricity less than $4D/7$, so we indeed do not accept at Steps~\ref{74:step:1}, \ref{74:step:2}, and \ref{74:step:3}.
  In Step~\ref{74:step:4}, we claim for every $s\in S^{out}, s'\in S^{in}$ there exists $v$ such that $A_{s,v}^{out}=A_{v,s'}^{in}=1$, so that $(A^{out}\cdot A^{in})_{s,s'}\ge 1$ for all $s\in S^{out}$ and $s'\in S^{in}$ and thus we reject.
  Fix $s\in S^{out}$ and $s'\in S^{in}$. By the diameter bound, $d(s,s') \le \floor{4D/7}$. 
  Let $v$ be the last vertex on the $s$-to-$s'$ shortest path such that $d(s,v)\le\floor{2D/7}$, and, if it exists, let $v'$ be the vertex after $v$.
  Clearly $A^{out}_{s,v} = 1$. We show $A^{in}_{v,s'}=1$ as well.
  If $v=s'$, then clearly $v\in B^{in}_{2D/7}(s')$ so $A^{in}_{v,s'}=1$ as desired.
  Otherwise $d(s,v) = \floor{2D/7}$.
  If $\floor{4D/7}=2\floor{2D/7}$, then $d(v,s')\le d(s,s')-d(s,v) \le \floor{4D/7}-\floor{2D/7} = \floor{2D/7}$, so $v\in B^{in}_{2D/7}(s')$ and $A^{in}_{v,s'}=1$, so again $A^{in}_{v,s'}=1$.
  If $\floor{4D/7}=2\floor{2D/7}+1$, then $d(v',s')\le d(s,s')-d(s,v') \le \floor{4D/7}-(\floor{2D/7}+1) = \floor{2D/7}$, so $v'\in B^{in}_{2D/7}(s')$ and thus $v\in B^{in+}_{2D/7}(s')$ and $A^{in}_{v,s'}=1$, as desired.
  This covers all cases, so we've shown we reject.

  \paragraph*{If the Diameter is at least $D$, we accept with high probability.}
  Let $a$ and $b$ be vertices with $d(a,b)\ge D$.

  If $|B^{out}_{3D/7}(a)|> m^{1-\alpha}$, Step~\ref{74:step:1} computes the eccentricity of some $v\in B^{out}_{3D/7}(a)$ with high probability (by Lemma~\ref{lem:hit}), which is at least $d(v,b)\ge d(a,b)-d(a,v)\ge 4D/7$ by the triangle inequality, so we accept.
  Similarly, we accept with high probability if $|B^{in}_{3D/7}(b)|> m^{1-\alpha}$.
  Thus we may assume that $|B^{out}_{3D/7}(a)|, |B^{in}_{3D/7}(b)|\le m^{1-\alpha}$ for the rest of the proof. 

  If $|B^{out}_{D/7}(v)|\le m^{\alpha}$ for any vertex $v$, then either (i) $d(v,b)\ge 4D/7$, in which case $v$ has eccentricity at least $4D/7$ and we accept at Step~\ref{74:step:2}, or (ii) $d(v,b)\le 4D/7$, in which case there is a vertex $u\in B^{out+}_{D/7}(v)$ on the $v$-to-$b$ path with  $d(u,b)\le 3D/7$ (take the $u\in B^{out+}_{D/7}(v)$ closest to $b$ on the path). Then $d(a,u)\ge 4D/7$ by the triangle inequality and we accept in Step~\ref{74:step:2} as we perform a BFS from $u$.
  Thus we may assume $|B^{out}_{D/7}(v)|> m^\alpha$ for all vertices $v$.
  Similarly, because of Step~\ref{74:step:3}, we may assume $|B^{in}_{D/7}(v)|>m^\alpha$ for all vertices $v$.
  
  In particular, we may assume $|B^{out}_{D/7}(a)|> m^\alpha$ and $|B^{in}_{D/7}(b)|>m^{\alpha}$.
  Figure~\ref{fig:74-1} illustrates this last step.
  Then $\hat S$ hits $B^{out}_{D/7}(a)$ with high probability (by Lemma~\ref{lem:hit}), so $B^{out}_{D/7}(a)$ has some $s\in \hat S$ with high probability, and similarly $B^{in}_{D/7}(b)$ has some $s'\in \hat S$ with high probability.
  The triangle inequality implies that $B^{out}_{2D/7}(s) \subset B^{out}_{3D/7}(a)$, so $|B^{out}_{2D/7}(s)|\le |B^{out}_{3D/7}(a)|\le m^{1-\alpha}$ and thus $s\in S^{out}$. Similarly $s'\in S^{in}$.
  By the triangle inequality, we have $d(s,s')\ge d(a,b) - d(a,s) - d(s',b) \ge D-D/7-D/7 = 5D/7$.
  Then we must have $(A\cdot B)_{s,s'} = 0$, as otherwise there is a $v$ such that $d(s,v)\le \floor{2D/7}$ and $d(v,s')\le 4D/7-\floor{2D/7}$, contradicting $d(s,s')\ge 5D/7$.
  Hence, we accept at step 5, as desired.
\end{proof}

\section{Hardness Reductions for Roundtrip}

\subsection{Overview}
\label{sec:LBoverview}

In this paper we prove hardness results for roundtrip diameter that go beyond the $2$ vs. $3$ barrier. Before presenting the proofs, let us begin with an abstract discussion on why this barrier arises and (at a high level) how we overcome it.

All previous hardness results for diameter are by reductions from OV (or its generalization to multiple sets).
In OV, one is given two sets of vectors of size $n$ and dimension $d=\poly\log n$, $A$ and $B$, and one needs to determine whether there are $a\in A, b\in B$ that are orthogonal. SETH implies that OV requires $n^{2-o(1)}$ time \cite{Williams05}. In a reduction from OV to a problem like diameter, one typically has nodes representing the vectors in $A$ and $B$, as well as nodes $C$ representing the coordinates, and if there is an orthogonal vector pair $a,b$, then the corresponding nodes in the diameter graph are far (distance $\geq 3$), and otherwise {\em all} pairs of nodes are close (distance $\leq 2$). 
Going beyond the $2$ vs. $3$ gap is difficult because each node $a \in A$ must have distance $\leq 2$ to each coordinate node in $C$, regardless of the existence of an orthogonal pair, and then it is automatically at distance $2+1$ from any node $b \in B$ because each $b$ has at least one neighbor in $C$. So even if $a,b$ are orthogonal, the distance will not be more than $3$.

The key trick for proving a higher lower bound (say $3$ vs. $5$) for roundtrip is to have two sets of coordinate nodes, a $C^{fwd}$ set that can be used to go \emph{forward} from $A$ to $B$, and a $C^{bwd}$ set that can be used to go back.
The default roundtrip paths from $A$/$B$ to each of these two sets will have different forms, and this asymmetry will allow us to overcome the above issue. 
This is inspired by the difficulty that one faces when trying to make the subquadratic $3/2$-approximation algorithms for undirected and directed diameter work for roundtrip.

Unfortunately, there is another (related) issue when reducing from OV. First notice that all nodes within $A$ and within $B$ must always have small distance (or else the diameter would be large). This can be accomplished simply by adding direct edges of weight $1.5$ between all pairs (within $A$ and within $B$); but this creates a dense graph and makes the quadratic lower bound uninteresting. Instead, such reductions typically add auxiliary nodes to simulate the $n^2$ edges more cheaply, e.g. a star node $o$ that is connected to all of $A$. But then the node $o$ must have small distance to $B$, decreasing all distances between $A$ and $B$. 

Overcoming this issue by a similar trick seems impossible. Instead, our two hardness results bypass it in different ways. 

The reduction from $\ell_{\infty}$-Closest-Pair starts from a problem that is defined over \emph{one set of vectors} $A$ (not two) which means that the coordinates are ``in charge'' of connecting all pairs within $A$. We remark that while OV can also be defined over one set (monochromatic) instead of two (bichromatic) and that it remains SETH hard; that would prevent us from applying the above trick of having a forward and a backward sets of coordinate nodes. Our reduction in Section~\ref{sec:linfty} is able to utilize the structure of the metric in order to make both ideas work simultaneously. 

The reduction from All-Node $k$-Cycle relies on a different idea: it uses a construction where only a small set of $n$ pairs $a_i \in A, b_i \in B$ are ``interesting'' in the sense that we do not care about the distances for other pairs (in order to solve the starting problem).
Then the goal becomes to connect all pairs within $A$ and within $B$ by short paths, without decreasing the distance for the $(a_i,b_i)$ pairs.
A trick similar to the \emph{bit-gadget} \cite{DBLP:conf/soda/AbboudGW15,AbboudCKP21} does the job, see Section~\ref{sec:ankc} of the appendix. For the complete reduction see Section~\ref{sec:ankc-unw}.

\subsection{Weighted Roundtrip $2-\varepsilon$ hardness from $\ell_\infty$-CP}
\label{sec:linfty}
In this section, we highlight the key ideas in Theorem~\ref{thm:lb-linfty-unw} by proving a weaker version, showing the lower bound for weighted graphs.
We extend the proof to unweighted graphs in Section~\ref{sec:linfty-unw}.

The main technical lemma is showing that to $\alpha$-approximate $\ell_\infty$-Closest-Pair, it suffices to do so on instances where all vector coordinates are in $[\pm(0.5+\varepsilon)\alpha]$.
Towards this goal, we make the following definition.
\begin{definition}
  The  $\alpha$-approximate \emph{$\beta$-bounded} $\ell_\infty$-Closest-Pair problem is, given $n$ vectors $v_1,\dots,v_n$ of dimension $d$ in $[-\beta,\beta]^d$ determine if there exists $v_i$ and $v_j$ with $\|v_i-v_j\|_\infty\le 1$, or if for all $v_i$ and $v_j$, $\|v_i-v_j\|_\infty\ge \alpha$.
\end{definition}

We now prove the main technical lemma.
\begin{lemma}
  Let $\varepsilon\in(0,1/2)$ and $\alpha>1$.
  If one can solve $\alpha$-approximate $(0.5+\varepsilon)\alpha$-bounded $\ell_\infty$-CP on dimension $O(d\varepsilon^{-1}\log n)$ in time $T$, then one can solve $\alpha$-approximate $\ell_\infty$-CP on dimension $d$ in time $T+O_\varepsilon(dn\log n)$, where in $O_\varepsilon(\cdot)$ we neglect dependencies on $\varepsilon$. 
\label{lem:bounded}
\end{lemma}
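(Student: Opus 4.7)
The plan is to reduce $\alpha$-approximate $\ell_\infty$-Closest-Pair on dimension $d$ to the $(0.5+\varepsilon)\alpha$-bounded variant by transforming each input vector coordinate-by-coordinate into one of dimension $d'=O(d\varepsilon^{-1}\log n)$ with all entries in $[-\beta,\beta]$, where $\beta=(0.5+\varepsilon)\alpha$. The transformation must preserve the $(1,\alpha)$-gap property: a YES instance (some pair at $\ell_\infty$-distance at most $1$) maps to a YES instance, and a NO instance (all pairs at distance at least $\alpha$) maps to a NO instance.

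First I would do a \emph{compression step}. For each coordinate $c$, sort the $n$ values $y_c^{(1)}\le\cdots\le y_c^{(n)}$ and replace the rank-$i$ value by the compressed position $p_c^{(i)}\defeq\sum_{k<i}\min\bigl(y_c^{(k+1)}-y_c^{(k)},\,\alpha\bigr)$. This map is $1$-Lipschitz on sorted values, so any close pair with original coordinate-distance $\le 1$ stays within $1$ in the compressed coordinate. Far pairs (coord-distance $\ge \alpha$) are also preserved: either some consecutive gap already exceeds $\alpha$ and is compressed to exactly $\alpha$, or all gaps are $<\alpha$ and left unchanged, so the compressed distance remains $\ge \alpha$. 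After compression, every coordinate value lies in $[0,(n-1)\alpha]$.

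Second, a \emph{shift-and-clip step}: for each coordinate $c$ and each shift $s$ in a set $S_c\subset\mathbb{R}$ of size $O(\varepsilon^{-1}\log n)$, I would define a new bounded coordinate $\tilde v_i^{(c,s)}=\mathrm{clip}\bigl(p_c^{(i)}-s,\,-\beta,\,\beta\bigr)$. Clipping is $1$-Lipschitz, so the YES property is preserved in every new coordinate. The key NO-preservation claim is: for any pair $(i,j)$ with $p_c^{(i)}<p_c^{(j)}$ and $p_c^{(j)}-p_c^{(i)}\ge\alpha$, writing $s=p_c^{(i)}+\beta-\delta$ for $\delta\in[0,2\varepsilon\alpha]$, one case analysis on whether $p_c^{(j)}-s>\beta$ shows the clipped difference is always $\ge\alpha$ — it equals $p_c^{(j)}-p_c^{(i)}$ when not top-saturated, and equals $2\beta-\delta\ge\alpha$ when saturated (using $2\beta=\alpha+2\varepsilon\alpha$). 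So every far pair has a \emph{valid shift interval} of length $2\varepsilon\alpha$ centered near $p_c^{(i)}+\beta-\varepsilon\alpha$.

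The main obstacle is choosing $S_c$ of size only $O(\varepsilon^{-1}\log n)$ that hits every valid interval, since a naive uniform grid of spacing $2\varepsilon\alpha$ across $[0,(n-1)\alpha]$ would use $\Theta(n/\varepsilon)$ shifts. To achieve the logarithmic bound I would exploit that each valid interval is determined solely by its smaller compressed value, one of at most $n$ options, and deploy a multi-scale / dyadic shift schedule with randomization, so that a union bound over the $O(n^2)$ far pairs is absorbed by the $\log n$ factor. Given such a shift selection the reduction is correct — YES is preserved coordinatewise by Lipschitzness and NO is witnessed pairwise by the valid-interval guarantee — and the construction runs in $O_\varepsilon(dn\log n)$ time, dominated by per-coordinate sorting of the $n$ values and $O(d'n)=O(dn\varepsilon^{-1}\log n)$ clip evaluations.
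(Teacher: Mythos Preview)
Your compression step and the shift-and-clip idea match the paper's framework; the paper also uses clipping functions $f_z(x)=\mathrm{clip}(x-z,-\beta,\beta)$ at $O(\varepsilon^{-1})$ shifts. The difficulty you flag --- getting by with only $O(\varepsilon^{-1}\log n)$ shifts rather than $\Theta(n/\varepsilon)$ --- is exactly the crux, and your proposed resolution has a genuine gap.

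Your plan is to ``exploit that each valid interval is determined solely by its smaller compressed value, one of at most $n$ options, and deploy a multi-scale / dyadic shift schedule with randomization.'' But this cannot work as stated: after compression the $n$ values can be evenly spread over $[0,(n-1)\alpha]$, yielding $n$ \emph{pairwise disjoint} valid intervals of length $2\varepsilon\alpha$. A set of $O(\varepsilon^{-1}\log n)$ shifts can intersect at most $O(\varepsilon^{-1}\log n)\ll n$ of them, so no shift selection of that size (random or not) hits them all. Randomization and a union bound over $O(n^2)$ pairs do not help here, since the failure events are not low-probability --- they are forced by pigeonhole.

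The paper's fix is a different idea you have not found: a \emph{recursive fold}. In addition to the $2\lceil\varepsilon^{-1}\rceil+1$ clipped coordinates (whose shifts are centered around $M/2$), it adds one extra coordinate $h(a)=|a-M/2|$, which is $1$-Lipschitz (so YES is preserved) and maps $[0,M]$ to $[0,M/2]$. The point is that the $O(\varepsilon^{-1})$ clips only need to witness far pairs that straddle $M/2$; for pairs on the same side of $M/2$, the fold preserves $|a-b|$ exactly, so the problem recurses on half the range. Iterating $\lceil\log_2 n\rceil$ times drives the residual range down to $\alpha$, giving $O(\varepsilon^{-1}\log n)$ coordinates per original coordinate, deterministically. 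This recursive halving is the missing ingredient in your proposal.
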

\begin{proof}
  Start with an $\ell_\infty$ instance $\Phi=(v_1,\dots,v_n)$.
  We show how to construct a bounded $\ell_\infty$ instance $\Phi'$ such that $\Phi$ has two vectors with $\ell_\infty$ distance $\le 1$ if and only if $\Phi'$ has two vectors with $\ell_\infty$ distance $\le 1$.

  First we show we may assume that $v_1,\dots,v_n$ are on domain $[0,\alpha n]$.
  Suppose that $x\in[d]$.
  Reindex $v_1,\dots,v_n$ in increasing order of $v_i[x]$ (by sorting).
  Let $v_1',\dots,v_n'$ be vectors identical to $v_1,\dots,v_n$ except in coordinate $x$, where instead
  \begin{align}
    v_i'[x] = \sum_{j=0}^{i-1} \min(\alpha, v_{j+1}[x]-v_j[x])
  \end{align}
  for $i=1,\dots,n$, where the empty sum is 0.
  We have that $v_i'[x]\le \alpha n$ for all $i$, and furthermore $|v_i'[x]-v_j'[x]|\ge \alpha$ if and only if $|v_i[x]-v_j[x]|\ge \alpha$ and also $|v_i'[x]-v_j'[x]|\le 1$ if and only if $|v_i[x]-v_j[x]|\le 1$.
  Hence, the instance given by $v_1',\dots,v_n'$ is a YES instance if and only if the instance $\Phi$ is a YES instance, and is a NO instance if and only if the instance $\Phi$ is a NO instance.
  Repeating this with all other coordinates $x$ gives an instance $\Phi'$ such that $\Phi'$ is a YES instance if and only if $\Phi$ is a YES instance, and $\Phi'$ is a NO instance if and only if $\Phi'$ is a NO instance, and furthermore $\Phi'$ has vectors on $[0,\alpha n]$.

  Now we show how to construct an $\ell_\infty$-CP instance in dimension $O_\varepsilon(d\log n)$ vectors with coordinates in $[\pm(0.5+\varepsilon)\alpha]$.
  \begin{lemma}
    Let $\varepsilon\in(0,0.5)$ and $\alpha > 1$. 
    For any real number $M$, there exists two maps $g:[0,M] \to [-(0.5+\varepsilon)\alpha,(0.5+\varepsilon)\alpha]^{2\ceil{\varepsilon^{-1}}+1}$ and $h:[0,M]\to [0, M/2]$ such that for all $a,b\in[0,M]$, we have $\min(|a-b|,\alpha) = \min(\|(g(a),h(a))-(g(b),h(b))\|_\infty,\alpha)$. (here, $(g(\cdot),h(\cdot))$ is a length $2\ceil{\varepsilon^{-1}}+2$ vector.)
    Furthermore, $g$ and $h$ can be computed in $O_\varepsilon(1)$ time.
    \label{lem:bounded-1.5}
  \end{lemma}
  \begin{proof}
     It suffices to consider when $\varepsilon^{-1}$ is an integer.
     Let $f_{z}:\mathbb{R}\to [-(0.5+\varepsilon)\alpha,(0.5+\varepsilon)\alpha]$ be the piecewise function
    \begin{align}
      f_{z}(x) \ &= \ 
      \left\{
      \begin{tabular}{ll}
      $-(0.5+\varepsilon)\alpha$ & if $x\le z-(0.5+\varepsilon)\alpha$\\
      $(0.5+\varepsilon)\alpha$ & if $x\ge z+(0.5+\varepsilon)\alpha$\\
      $x-z$ & otherwise
      \end{tabular}
      \right.
    \end{align}
    For $a\in[M]$, define $g(a)\in \mathbb{R}^{2\varepsilon^{-1}+1}$ and $h(a)\in\mathbb{R}$ as follows, where we index coordinates by $-\varepsilon^{-1},\dots,-1,0,1,\varepsilon^{-1}$ for convenience 
    \begin{align}
      g(a)_i &= f_{M/2 + 0.5i\varepsilon \alpha}(a) \text{ for }-\varepsilon^{-1}\le i\le \varepsilon^{-1} \nonumber\\
      h(a) &= |a-M/2|.
    \end{align}
    Clearly $g$ and $h$ have the correct codomain, and they can be computed in $O_\varepsilon(1)$ time.
    Additionally, note that $f_z(x)$ and $|x-M/2|$ are 1-Lipschitz functions of $x$ for all $z$, so $g$ is a Lipschitz function and thus $\|g(a)-g(b)\|_\infty\le |a-b|$.

    Now, it suffices to show that $\min(\|(g(a),h(a))-(g(b),h(b))\|_\infty,\alpha)\ge \min(|a-b|,\alpha)$.
    If $a$ and $b$ are on the same side of $M/2$, then $\|h(a)-h(b)\|_\infty\ge ||a-M/2| -|b-M/2|| = |a-b|$, as desired.
    Now suppose $a$ and $b$ are on opposite sides of $M/2$, and without loss of generality $a < M/2 < b$.
    Let $0\le i\le \varepsilon^{-1}$ be the largest integer such that $a \le M/2 - i\varepsilon\alpha$ ($i=0$ works so $i$ always exists).
    If $i=\varepsilon^{-1}$, then $a < M/2 - \alpha$ and 
    \begin{align}
      \|g(a)-g(b)\|_\infty
      \ge f_{M/2-0.5\alpha}(b) - f_{M/2-0.5\alpha}(a) 
      \ge 0.5\alpha - (-0.5\alpha)
      = \alpha 
      \ge  \min(|a-b|,\alpha),
    \end{align}
    as desired.
    Now assume $i<\varepsilon^{-1}$.
    Let $z = M/2 + (0.5 - i\varepsilon) \alpha$.
    By maximality of $i$, we have $a-z\in[-(0.5+\varepsilon)\alpha,-0.5\alpha]$.
    We have $g(\cdot)_{\varepsilon^{-1} - 2i} = f_z(\cdot)$ by definition of $g$.
    By the definition of $f_z(\cdot)$, since $a\in[z-(0.5+\varepsilon)\alpha,z-0.5\alpha]$ and $b\ge a$, we have $\min(f_z(b)-f_z(a),\alpha) = \min(b-a,\alpha)$.
    Thus, 
    \begin{align}
      \min(\|g(a)-g(b)\|_\infty,\alpha)
      &\ge \min\left(g(b)_{\varepsilon^{-1}-2i}-g(a)_{\varepsilon^{-1}-2i},\alpha\right) \nonumber\\
      &= \min(f_z(b)-f_z(a),\alpha)
      = \min(b-a,\alpha),
    \end{align}
    as desired.
    In either case, we have $\min(\|g(a)-g(b)\|_\infty,\alpha)\ge \min(|a-b|,\alpha)$, so we conclude that $\min(\|g(a)-g(b)\|_\infty,\alpha)= \min(|a-b|,\alpha)$
  \end{proof}
  Iterating Lemma~\ref{lem:bounded-1.5} gives the following.
  \begin{lemma}
    Let $\varepsilon\in(0,1/2)$.
    There exists a map $g:[0,\alpha n] \to [\pm(0.5+\varepsilon)\alpha]^{4\ceil{\varepsilon^{-1}}\log n}$ such that for all $a,b\in[0,\alpha n]$, we have $\min(|a-b|,\alpha) = \min(\|g(a)-g(b)\|_\infty,\alpha)$.
    Furthermore, $g$ can be computed in $O_\varepsilon(\log n)$ time.
    \label{lem:bounded-2}
  \end{lemma}
  \begin{proof}
    For $\ell=1,\dots$, let $M_\ell=\alpha n/2^{\ell-1}$, and let $g_\ell^*:[M_\ell]\to[\pm(0.5+\varepsilon)\alpha]^{2\ceil{\varepsilon^{-1}}+1}$ and $h_\ell^*:[M_\ell]\to [M_{\ell+1}]$ be the functions given by Lemma~\ref{lem:bounded-1.5}.
    For $\ell=0,1,\dots$, let $g_\ell:[0,\alpha n]\to [-(0.5+\varepsilon)\alpha,(0.5+\varepsilon)\alpha]^{\ell(2\ceil{\varepsilon^{-1}}+1)}$ and $h_\ell:[0,\alpha n] \to [0,\alpha n/2^\ell]$ be such that $g_0(x)=()$ is an empty vector, $h_0(x) = x$ is the identity, and for $\ell\ge 1$, $g_{\ell}(x) = (g_{\ell-1}(x), g_\ell^*(h_{\ell-1}(x)))$ and $h_\ell(x) = h_\ell^*(h_{\ell-1}(x))$.
    By Lemma~\ref{lem:bounded-1.5}, we have that 
    \begin{align}
      \label{eq:bounded-2}
      &\min \left(\|(g_{\ell-1}(a),h_{\ell-1}(a))-(g_{\ell-1}(b),h_{\ell-1}(b))\|_\infty,\alpha\right)  \nonumber\\
      &= \min\left(\|\big(g_{\ell-1}(a),g_{\ell}^*(h_{\ell-1}(a)), h_{\ell}^*(h_{\ell-1}(a))\big)-\big(g_{\ell}(b), g_{\ell}^*(h_{\ell-1}(b)), h_{\ell}^*(h_{\ell-1}(b))\big)\|_\infty,\alpha\right)  \nonumber\\
      &= \min\left(\|\big(g_{\ell}(a),h_{\ell}(a)\big)-\big(g_{\ell}(b),h_{\ell}(b)\big)\|_\infty,\alpha\right) 
    \end{align}
    for all $\ell$.
    For $\ell=\ceil{\log n}$, the vector $g(a) \defeq (g_\ell(a),h_\ell(a)-0.5\alpha)$ has every coordinate in $[\pm(0.5+\varepsilon)\alpha]$, and by \eqref{eq:bounded-2}, we have 
    \begin{align}
      \min(|a-b|,\alpha) 
      &= \min(|g_0(a)-g_0(b)|,\alpha)  \nonumber\\
      &= \min(|g_\ell(a)-g_\ell(b)|,\alpha) 
      = \min(|g(a)-g(b)|_\infty,\alpha),
    \end{align}
    as desired.
    The length of this vector is at most $\ceil{\log n}(2\ceil{\varepsilon^{-1}}+1) + 1$, which we bound by $4\ceil{\varepsilon^{-1}}\log n$ for simplicity (and pad the corresponding vectors with zeros).
  \end{proof}
  To finish, let $g:[0,\alpha n]\to [\pm (0.5+\varepsilon)\alpha]$ be given by Lemma~\ref{lem:bounded-2}, and let the original $\ell_\infty$ instance be $v_1,\dots,v_n$.
  Let the new $(0.5+\varepsilon)\alpha$-bounded $\ell_\infty$ instance be $w_i = (g(v_i[x]))_{x\in [d]}$ of length $4d\ceil{\varepsilon^{-1}}\log n$.
\end{proof}

We now prove our goal for this section, Theorem~\ref{thm:lb-linfty-unw} for weighted graphs.
\begin{theorem}
If for some $\alpha\ge 2, \eps>0$ there is a $2-\frac{1}{\alpha}-\varepsilon$ approximation algorithm in time $O(m^{2-\eps})$ for roundtrip diameter in \emph{weighted} graphs, then for some $\delta>0$ there is an $\alpha$-approximation for $\ell_\infty$-Closest-Pair with vectors of dimension $d\le n^{1-\delta}$ in time $\tilde O(n^{2-\delta})$.
\end{theorem}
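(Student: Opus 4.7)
The plan is to prove the theorem by composing two reductions. First, I would apply Lemma~\ref{lem:bounded} with an internal parameter $\varepsilon'$ (a small constant chosen based on $\varepsilon$) to transform the $\ell_\infty$-CP instance on $n$ vectors of dimension $d$ into an equivalent $(0.5+\varepsilon')\alpha$-bounded instance of dimension $d' = O(d(\varepsilon')^{-1}\log n)$, preserving the YES/NO gap at only a polylogarithmic blowup in dimension and $O_{\varepsilon'}(dn\log n)$ additional time.

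Second, I would construct a weighted directed graph $G$ from the bounded instance using the forward/backward coordinate gadget sketched in Section~\ref{sec:LBoverview}. With $\beta = (0.5+\varepsilon')\alpha$, the vertex set consists of vector vertices $a_i$ (likely split into source/sink copies joined by a weighted edge) together with forward and backward coordinate vertices $c_{x,s}^{\mathrm{fwd}}, c_{x,s}^{\mathrm{bwd}}$ for each coordinate $x\in[d']$ and sign $s\in\{+,-\}$. Directed edges $a_i\to c_{x,+}^{\mathrm{fwd}}$ of weight $\beta+v_i[x]$ and $c_{x,+}^{\mathrm{fwd}}\to a_j$ of weight $\beta-v_j[x]$ (with the $s=-$ counterparts having the signs of $v_i[x],v_j[x]$ flipped) give the two-hop forward path $a_i\to c_{x,s}^{\mathrm{fwd}}\to a_j$ weight $2\beta \pm (v_i[x]-v_j[x])$. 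Minimizing over the sign $s$ gives $2\beta - |v_i[x]-v_j[x]|$, and minimizing over $x$ yields a forward distance of $2\beta - \|v_i-v_j\|_\infty$; symmetric backward edges through the $c^{\mathrm{bwd}}$ vertices give the analogous backward distance.

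Hence the roundtrip distance between the vector vertices corresponding to $v_i$ and $v_j$ is (up to an additive constant from the splitting gadgets) of the form $C - 2\|v_i-v_j\|_\infty$ for an absolute constant $C=\Theta(\alpha)$. Tuning the construction so that $C\approx 4\alpha$, the YES case (some pair has $\|v_i-v_j\|_\infty \le 1$) yields diameter $\ge C - 2 = 4\alpha - 2 - O(\varepsilon'\alpha)$, while the NO case (all pairs have $\|v_i-v_j\|_\infty \ge \alpha$) yields diameter $\le C - 2\alpha = 2\alpha + O(\varepsilon'\alpha)$. The ratio is $(4\alpha-2)/(2\alpha) - O(\varepsilon') = 2 - 1/\alpha - O(\varepsilon')$, so choosing $\varepsilon'$ small enough ensures the hypothesized $(2-1/\alpha-\varepsilon)$-approximation algorithm distinguishes the two cases. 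Combined with an appropriate $\delta$ balancing the dimension blowup and the algorithm runtime, this yields the claimed $\tilde O(n^{2-\delta})$-time $\alpha$-approximation of $\ell_\infty$-CP for $d \le n^{1-\delta}$.

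The main obstacle will be the careful verification of the distance calculations in $G$: (i) proving that the direct two-hop single-coordinate path is indeed the shortest path between any pair of vector vertices, by ruling out multi-hop shortcuts through intermediate vectors (the weighted splitting gadget makes such shortcuts costly); and (ii) bounding the roundtrip distances between vertices of different types (coordinate-coordinate, vector-coordinate) so the roundtrip diameter is governed by the vector-vector roundtrips and thus faithfully reflects the closest-pair distance. Both require a case analysis over vertex-pair types and judicious tuning of auxiliary edge weights.
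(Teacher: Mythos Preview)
Your plan is essentially the paper's proof, modulo a parameter slip and some redundant gadgetry. The slip: you set the base edge weight to $\beta=(0.5+\varepsilon')\alpha$, so the two-hop cost is $2\beta\pm(v_i[x]-v_j[x])$ and hence $C=4\beta\approx 2\alpha$, contradicting the $C\approx 4\alpha$ you assume a line later. The paper instead takes the base weight to be $\alpha$ (edges weigh $\alpha\pm v_i[x]$); since $|v_i[x]|\le(0.5+\varepsilon')\alpha$ after the bounded reduction, every edge then weighs at least $(0.5-\varepsilon')\alpha$, so any path with four or more edges already costs at least $(2-4\varepsilon')\alpha=2\alpha-1$ (taking $\varepsilon'=(4\alpha)^{-1}$), which is exactly the YES-case one-way lower bound you need. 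This single choice eliminates the need for your vertex-splitting gadget entirely.

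With that fix your forward/backward distinction is also unnecessary: the paper uses just two coordinate copies $X_1,X_2$ (playing the role of your sign $s=\pm$), each carrying edges in both directions, together with weight-$\alpha$ edges among all coordinate vertices so that coord--coord and vector--coord roundtrips are exactly $2\alpha$. The gap then comes out precisely as $4\alpha-2$ versus $2\alpha$, and the case analysis you anticipate in (i)--(ii) reduces to a couple of lines.
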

\begin{proof}
  By Lemma~\ref{lem:bounded} it suffices to prove that there exists an $O(n^{2-\delta})$ time algorithm for $\alpha$-approximate $(0.5+\varepsilon)\alpha$-bounded $\ell_\infty$-CP for $\varepsilon=(4\alpha)^{-1}$.

  \begin{figure}
    \begin{center}
    \includegraphics[width=0.8\textwidth]{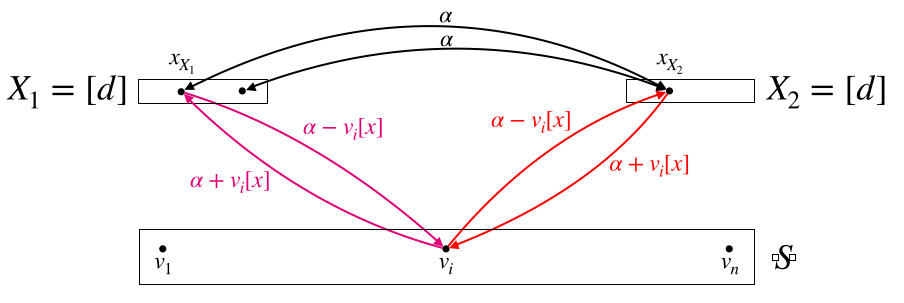}
    \end{center}
    \caption{The roundtrip diameter instance $G$ for $\ell_\infty$-CP hardness.}
    \label{fig:ellinfty}
  \end{figure}

  Let $\Phi$ be the bounded-domain $\ell_\infty$-CP instance with vectors $v_1,\dots,v_n\in[\pm(0.5+\varepsilon)\alpha]^n$.
  Then construct a graph $G$ (see Figure~\ref{fig:ellinfty}) with vertex set $S\cup X_1\cup X_2$ where $X_1=X_2=[d]$ and $S=[n]$.
  We identify vertices with the notations $i_S, x_{X_1},$ and $x_{X_2}$, for $i\in[n]$ and $x\in[d]$.
  Draw directed edges
  \begin{enumerate}
  \item from $i_S$ to $x_{X_1}$, of weight $\alpha+v_i[x]$,
  \item from $x_{X_1}$ to $i_S$, of weight $\alpha-v_i[x]$,
  \item from $i_S$ to $x_{X_2}$, of weight $\alpha-v_i[x]$,
  \item from $x_{X_2}$ to $i_S$, of weight $\alpha+v_i[x]$, and
  \item between any two vertices in $X_1\cup X_2$, of weight $\alpha$.
  \end{enumerate}
  Note that all edge weights are nonnegative, and any two vertices in $X_1\cup X_2$ are roundtrip distance $2\alpha$, and any $s\in S$ and $x\in X_1\cup X_2$ are distance $2\alpha$.
  Suppose $\Phi$ has no solution, so that every pair has $\ell_\infty$ distance $\alpha$.
  Then for vertices $i_S,j_S$, there exists a coordinate $x$ such that $v_i[x] - v_j[x]$ is either $\ge \alpha$ or $\le -\alpha$.
  Without loss of generality, we are in the case $v_i[x] - v_j[x]\ge \alpha$.
  Then the path $i_S \to x_{X_2} \to j_S \to x_{X_1} \to i_S$ is a roundtrip path of length
  \begin{align}
    (\alpha - v_i[x]) + (\alpha+v_j[x]) + (\alpha + v_j[x]) + (\alpha-v_i[x])
    = 4\alpha - 2(v_i[x]-v_j[x]) \le 2\alpha.
  \end{align}
  So when $\Phi$ has no solution, the roundrip diameter is at most $2\alpha$.

  On the other hand, suppose $\Phi$ has a solution $i,j$ such that for all $x$, $|v_i[x] - v_j[x]| \le 1$.
  Then, as every edge has weight at least $(0.5-\varepsilon)\alpha$, 
  \begin{align}
    d(i_S,j_S)
    &\ge \min\left(\min_{x\in[d]}\left(d(i_S,x_{X_1})+d(x_{X_1},j_S), d(i_S,x_{X_2})+d(x_{X_2},j_S)\right), 4(0.5-\varepsilon)\alpha\right)\nonumber\\
    &\ge \min\left(\min_{x\in[d]}(\alpha+v_i[x] +\alpha-v_j[x], \alpha+v_j[x] + \alpha-v_i[x]), 2\alpha-4\varepsilon\alpha\right) \nonumber\\
    &\ge \min(2\alpha-1,2\alpha-4\alpha\varepsilon)
    \ = \  2\alpha-1.
  \end{align}
  Similarly, we have 
  \begin{align}
    d(j_S,i_S)\ge 2\alpha-1,
  \end{align}
  so we have
  \begin{align}
    d_{RT}(j_S,i_S)\ge 4\alpha-2.
  \end{align}
  so in this case the RT-diameter is at least $4\alpha-2$.
  A $2-\alpha^{-1}-\varepsilon$ approximation for RT diameter can distinguish between RT diameter $4\alpha-2$ and RT-diameter $2\alpha$.
  Thus, a $2-\alpha-\varepsilon$ approximation for RT diameter solves $\alpha$-approximate $\ell_\infty$-CP.
\end{proof}

\section{Weighted Roundtrip $5/3-\varepsilon$ hardness from All-Nodes $k$-Cycle}
\label{sec:ankc}

In this section, we highlight the key ideas in Theorem~\ref{thm:lb-ankc-unw} by proving a weaker version, showing the lower bound for weighted graphs.
We extend the proof to unweighted graphs in Section~\ref{sec:ankc-unw}.

\begin{theorem}
Under Hypothesis~\ref{hypo:ankc}, for all $\varepsilon,\delta>0$, no algorithm can $5/3-\varepsilon$ approximate the roundtrip diameter of a sparse directed weighted graph in $O(n^{2-\delta})$ time. 
\end{theorem}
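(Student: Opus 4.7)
The plan is to prove this theorem by a fine-grained reduction from All-Nodes $k$-Cycle to the gap version of roundtrip diameter. Given an ANKC instance $G=(V_1\cup\cdots\cup V_k,E)$ with $n$ vertices and $m=n^{1+o(1)}$ edges (for some fixed $k\ge 3$), I will construct in $O(m\log n)$ time a sparse weighted directed graph $H$ such that: in the YES case (every $v\in V_1$ lies on a $k$-cycle in $G$) the roundtrip diameter of $H$ is at most $D$, and in the NO case (some $v\in V_1$ lies on no $k$-cycle) the roundtrip diameter is at least $\tfrac{5}{3}D$. A $(5/3-\varepsilon)$-approximation for roundtrip diameter running in $O(|E(H)|^{2-\delta})$ time would then decide ANKC in $O(n^{2-\delta})$ time, contradicting Hypothesis~\ref{hypo:ankc}.

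The construction of $H$ has three ingredients. First, I split $V_1$ into two copies $V_1^L$ (source) and $V_1^R$ (sink), keep $V_2,\dots,V_k$ intact, and use unit-weight edges $V_i\to V_{i+1}$ for $i<k$ together with rerouted unit-weight edges $V_k\to V_1^R$; this way a $k$-cycle $v\to v_2\to\cdots\to v_k\to v$ in $G$ lifts to a length-$k$ directed path $v^L\to v_2\to\cdots\to v_k\to v^R$ in $H$, making $(v^L,v^R)$ the ``interesting'' pair for each $v\in V_1$. Second, I add a back edge $v^R\to v^L$ of weight $B$ for each $v\in V_1$, so that $d_{RT}(v^L,v^R)\le k+B$ whenever $v$ is on a $k$-cycle. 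Third, I add a bit-gadget-style structure on $O(\log n)$ auxiliary vertices whose weighted edges connect $V_1^L$ and $V_1^R$ (and symmetrically in reverse) so that every non-interesting pair has short roundtrip distance, but no short $v^L\to v^R$ path is created for the same index $v$ except via the $k$-cycle route.

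The main obstacle is the delicate weight balancing. In directed graphs, a naive bit gadget composes into an unintended short detour of the form $v^L\to F\to u^R\to u^L\to F\to v^R$, which would shrink $d(v^L,v^R)$ even in the NO case and destroy the gap. My plan is to tune the bit-gadget edge weight $w$ and back-edge weight $B$ (possibly using separate forward and backward bit gadgets with asymmetric weights) so that any such detour must traverse bit-gadget segments twice and at least one back edge, giving $d(v^L,v^R)\ge 4w+B$ in the NO case and hence $d_{RT}(v^L,v^R)\ge 4w+2B$, while every non-interesting pair still roundtrips in at most $D=k+B$. A concrete choice such as $k=3$ with $w$ small and $B$ moderately large makes the ratio $(4w+2B)/(k+B)$ approach $2$, well above $5/3$.

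The proof then concludes by a routine case analysis over pair types in $H$: (a) interesting pairs $(v^L,v^R)$; (b) within $V_1^L$ and within $V_1^R$; (c) cross pairs $(v^L,u^R)$ with $u\ne v$; (d) pairs involving intermediate layers $V_2,\dots,V_k$; (e) pairs involving bit-gadget vertices. For YES, I verify each type admits a roundtrip of length at most $D$ using the $k$-cycle, back edges, and bit gadgets. For NO, I show the bad vertex alone forces roundtrip diameter at least $5D/3$ by enumerating all possible $v^L\to v^R$ paths and showing each costs at least $4w+B$. Since the gadget adds only $O(n\log n)$ edges, $|E(H)|=n^{1+o(1)}$, so sparsity is preserved and the reduction is fine-grained.
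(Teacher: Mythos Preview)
Your high-level plan matches the paper's: reduce from All-Nodes $k$-Cycle, make the two copies of $V_1$ the ``interesting'' endpoints, and use a logarithmic bit gadget to keep all non-interesting pairs close. But the concrete scheme you sketch has a genuine gap.

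The claim that in the NO case every $v^L\to v^R$ detour ``must traverse bit-gadget segments twice and at least one back edge'' is false for the architecture you describe. With a forward gadget $V_1^L\to g\to V_1^R$ and a backward gadget $V_1^R\to g'\to V_1^L$, one can chain three gadget hops,
\[
v^L \;\to\; u^R \;\to\; u'^L \;\to\; v^R \qquad (u\neq v,\ u'\notin\{u,v\}),
\]
each of cost $2w$ (or $2w_f,2w_b,2w_f$ with asymmetric weights), reaching $v^R$ without touching any back edge. So $d(v^L,v^R)\le 4w_f+2w_b$, not $\ge 4w+B$. If $w$ is small---as you propose---this collapses the gap entirely; the NO-case roundtrip becomes $O(w)$, far below $5D/3$. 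This is exactly the ``unintended short detour'' you flagged, and simply making weights asymmetric does not block it, because you still need the backward gadget cheap enough that within-$V_1^L$ pairs (which must pass through $V_1^R$) stay under $D$. Relatedly, your claimed ratio $\to 2$ should be a red flag: the paper's construction only achieves $5/3$, and the asserted $4w+2B$ lower bound does not survive the chaining above.

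The paper fixes this with three ingredients you are missing. First, the bit gadget carries \emph{four} distinct weights ($3t,5t,2t,0$ depending on bit value and direction), not a single $w$; this is what prevents cheap chaining while still giving $6t$ roundtrips for within-$S$, within-$T$, and off-diagonal $S\times T$ pairs. Second, there is no single back edge: a second, \emph{backward} copy of the layers $V_k^{bwd},\dots,V_2^{bwd}$ carries the return path, so the $a'\to a$ direction also tests the $k$-cycle and costs $\approx 0$ in YES but $\ge 2t$ in NO. Third, four ``omni'' nodes $o_1,\dots,o_4$ are needed to keep all intermediate-layer vertices at roundtrip $\le 6t$ from everything else (your case (d) has no mechanism for this). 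With these pieces the paper gets YES roundtrip $\le 6t+2k$ and NO roundtrip $\ge 10t$, hence the $5/3$ gap.
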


\begin{proof}

Let $G=(V,E),V=V_1 \cup \cdots \cup V_k, E \subseteq \bigcup_{i=1}^k V_i \times V_{i+1 \mod k}$ be the input graph to the All-Nodes $k$-Cycle problem.
The reduction constructs a new graph $G'$ as follows. See Figure \ref{fig:weighted_reduction}.

\begin{figure}[!b]
    \centering
    \includegraphics[width=0.6\textwidth]{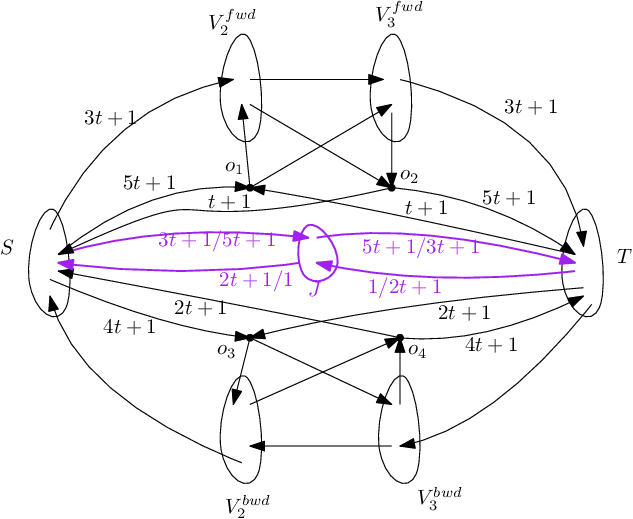}
    \caption{All-Nodes $3$-Cycle to Weighted Roundtrip Diameter approximation reduction. The edges without weights on them have weight 1. There are bidirectional edges of weight $3t$ between any $(o_i,o_j)$ that are not shown in the figure. The purple edges are between a node $S\cup T$ and $J$. An edge of weight $w_1/w_0$ means that if this edge is between $g_i\in J$ and $a\in S\cup T$, then if $\bar{a}[i]=1$ this edge has weight $w_1$ and if $\bar{a}[i]=0$ this edge has weight $w_0$.}
    \label{fig:weighted_reduction}
\end{figure}

\begin{itemize}
\item Each set $V_i$ for $i \in \{ 2,\ldots,k \}$ has two copies in $G'$ one set $V_i^{fwd}$ will be used for interesting \emph{forward} paths and one set $V_i^{bwd}$ that will be used for interesting \emph{backward} paths.
Naturally, the copy of a node $x \in V_i$ in the copy $V_i^{fwd}$ will be denoted $x^{fwd}$ and its copy in $V_i^{bwd}$ will be denoted $x^{bwd}$.
\item The set $V_1$ has two copies that we will call $S$ and $T$. The interesting pairs in our construction will be a subset of the pairs in $S \times T$.
We will use the letters $a,b,c,\ldots$ to denote the nodes in $V_1$. The two copies of a node $a \in V_1$ that are in $S$ and $T$ will be denoted by $a$ and $a'$ such that $a\in S$ and $a' \in T$.
The interesting pairs will in fact be the $n$ pairs $(a,a') \in S \times T$.

\item Let us assume that $|V_1|=n$ and that each node $a\in V_1$ is associated with a unique identifier $\bar{a}$ on $d=O(\log{n})$ bits such that for any pair $a,b \in V_1$ if $a \neq b$ then the two identifiers $\bar{a},\bar{b}$ have at least two coordinates $i,j \in [d]$ where $\bar{a}[i]=1$ while $\bar{b}[i]=0$, and $\bar{a}[j]=0$ while $\bar{b}[j]=1$. In words, we can always find a bit that is $1$ in one but $0$ in the other. In addition, we require that for all $a,b$ there exist two coordinates $i,j \in [d]$ where both $\bar{a}[i]= \bar{b}[i]=1$ and $\bar{a}[i] = \bar{b}[i]=0$, meaning that both are $1$ and $0$. Such identifiers can be obtained, e.g., by taking the bit representation of the name of the node and concatenating it with its complement, then adding a $0$ and a $1$ to all identifiers.

\item There are also some new auxiliary nodes. Most importantly there is a \emph{bit-gadget} comprised of a set $J=\{g_1,\ldots,g_d\}$ of $d$ nodes. In addition, there are four special nodes that connect ``everyone to everyone'' in certain sets; thus, let us denote them $o_1,o_2,o_3,o_4$ where o is for omni.
 
\end{itemize}

The edges of $G'$ and their weights are as follows. Let $t>2k$ be a large enough integer; the reduction will make it difficult to distinguish between diameter $6t \pm O(k)$ and diameter $10t \pm O(k)$. 

\begin{itemize}
\item For each $ i \in \{2,\ldots,k\}$ and for each edge $(x,y) \in V_i \times V_{i+1}$ in $G$, we add two edges to $G'$: one forwards $(x^{fwd},y^{fwd}) \in V_i^{fwd} \times V_{i+1}^{fwd}$ and one backwards $(y^{bwd},x^{bwd}) \in V_{i+1}^{bwd} \times V_{i}^{bwd}$. The weight on these edges is $1$, which can be thought of as negligible because it is $0\cdot t +1$.

\item Each edge leaving $V_1$ in $G$, i.e. an edge $(a,x) \in V_1 \times V_2$, becomes two edges: a forwards $(a,x^{fwd}) \in S \times V_2^{fwd}$ of weight $3\cdot t+1$ and a backwards edge $(x^{bwd},a) \in  V_2^{bwd} \times S$ of weight $0 \cdot t + 1$. 

\item Each edge going into $V_1$ in $G$, i.e. an edge $(x,a) \in V_k \times V_1$, becomes two edges: a forwards $(x^{fwd},a) \in V_k^{fwd} \times T$ of weight $3 \cdot t +1$ and a backwards edge $(a,x^{bwd}) \in T \times V_k^{bwd}$ of weight $0 \cdot t +1$. 

\end{itemize}

The edges defined so far are the main ones. A node $a \in S$ can reach its copy $a' \in T$ with a (forwards) path of weight $6\cdot t +k$ if and only if $a$ is in a $k$-cycle in $G$, and in which case there is also a backwards path of weight $0\cdot t +k$ from $a'\in T$ to $a \in S$. 
This will indeed be the difficult condition to check for an algorithm (under Hypothesis~\ref{hypo:ankc} about the complexity of $k$-cycle) and the rest of the construction aims to make the diameter of $G'$ depend solely on whether this condition is satisfied; and importantly, to make it vary by a large constant factor based on this condition.
This is accomplished with the edges that we define next.

\begin{itemize}

\item The first o-node $o_1$ serves to connect everyone in the set $S$ to everyone in $ V^{fwd}_2 \cup \cdots \cup V^{fwd}_k$ with weight $5 \cdot t+O(1)$. This could have been achieved more simply by having direct edges of weight $5t$ from everyone in $S$ to everyone in those sets. However, this would have incurred $n^2$ edges; the node $o_1$ simulates this with $O(n)$ edges. 
It is connected with edges $(o_1,v)$ to all nodes $v \in V^{fwd}_2 \cup \cdots \cup V^{fwd}_k$. The weight of these edges is $1$. And every node $a \in S$ is connected with an edge of weight $5\cdot t +1$ to $o_1$.

\item At the same time, the node $o_1$ serves to connect everyone in $T$ to everyone in $ V^{fwd}_2 \cup \cdots \cup V^{fwd}_k$ with weight $1 \cdot t+O(1)$. This is achieved by connecting every node $a' \in T$ with an edge $(a',o_1)$ of weight $1 \cdot t +1$ to $o_1$.

\item The second o-node $o_2$ serves to connect everyone in $ V^{fwd}_2 \cup \cdots \cup V^{fwd}_k$ to everyone in $S$ with weight $1 \cdot t +O(1)$.
Every node $v \in V^{fwd}_2 \cup \cdots \cup V^{fwd}_k$ has a direct edge $(v,o_2)$ to $o_2$ with weight $1$, and the node $o_2$ is connected to every node $a \in S$ with an edge $(o_2,a)$ of weight $1\cdot t + 1$.

\item And $o_2$ also serves to connect everyone in $ V^{fwd}_2 \cup \cdots \cup V^{fwd}_k$ to everyone in $T$ with weight $5\cdot t +O(1)$. Thus, we add an edge $(a',o_2)$ of weight $5 \cdot t+1$ for all nodes $a' \in T$. 

\item The third o-node $o_3$ connects everyone in $T$ to everyone in $V^{bwd}_2 \cup \cdots  \cup V^{bwd}_k$ with weight $2t+O(1)$. There are edges of weight $1$ from $o_3$ to all nodes in $V^{bwd}_2 \cup \cdots  \cup V^{bwd}_k$, and there are edges of weight $2 \cdot t +1$ from every node in $T$ to $o_3$.

\item Moreover, $o_3$ connects everyone in $S$ to everyone in $V^{bwd}_2 \cup \cdots  \cup V^{bwd}_k$ with weight $4t+O(1)$. There is an edge of weight $4\cdot t+1$ from every node in $S$ to $o_3$.

\item The fourth and last o-node $o_4$ connects everyone in $V^{bwd}_2 \cup \cdots  \cup V^{bwd}_k$ to everyone in $T$ with weight $4t+O(1)$. There are edges of weight $1$ from every node in $V^{bwd}_2 \cup \cdots  \cup V^{bwd}_k$ to $o_4$, and there are edges of weight $4 \cdot t+1$ from $o_4$ to every node in $T$.

\item Similarly, $o_4$ connects everyone in $V^{bwd}_2 \cup \cdots  \cup V^{bwd}_k$ to everyone in $S$ with weight $2t +O(1)$. There are edges of weight $2 \cdot t+1$ from $o_4$ to every node in $S$.

\item There are bi-directional edges of weight $3 \cdot t$ between all pairs of nodes in $\{o_1,o_2,o_3,o_4\}$.

\end{itemize}

At this point, our construction is nearly complete. Almost all pairs of nodes have a roundtrip of cost $6t+O(1)$, and a node $a \in V_1$ that does not appear in a $k$-cycle in $G$ causes the pair $(a,a') \in S \times T$ to have a roundtrip distance of at least $10\cdot t$.
However, we still have to worry about the pairs within $S$ (and also within $T$); currently their roundtrip distance to each other is $\geq 8t$ even if we are in a YES instance of the $k$-cycle problem. 
The next and final gadget $J$, the \emph{bit-gadget}, will make all distances within $S$ and within $T$ at most $6t+O(1)$ \emph{without making the interesting pairs $(a,a') \in S \times T$ closer than $10t$.}
Unfortunately, we do not know how to achieve the latter guarantee when the set of interesting pairs is larger than $O(n)$. If we could make the roundtrip distances within $S$ smaller without decreasing the roundtrips to $T$ for \emph{all pairs} in $S \times T$ we could have a similar lower bound under SETH rather than Hypothesis~\ref{hypo:ankc}.
The edges that make up the bit-gadget are as follows.

\begin{itemize}

\item Every node $a \in S$ is connected to and from every node $g_j$ in $J$, but the weights on the edges vary based on the identifier $\bar{a}$. 
For a coordinate $j \in [d]$, let $\bar{a}[j] \in \{0,1\}$ be the $j^{th}$ bit in the identifier $\bar{a}$.
\begin{itemize}
\item If $\bar{a}[j] = 1$ we set the weight of the edge $(a,g_j)$ to $3 \cdot t+1$, and if $\bar{a}[j] = 0$ we set it to $5 \cdot t+1$.
\item If $\bar{a}[j] = 1$ we set the weight of the edge $(g_j,a)$ to $2 \cdot t+1$, and if $\bar{a}[j] = 0$ we set it to $0 \cdot t+1$.
\end{itemize}

\item Similarly, every node $a' \in T$ is connected to and from every node $g_j$ in $J$ and the weights depend on $\bar{a}$. 
\begin{itemize}
\item If $\bar{a}[j] = 1$ we set the weight of the edge $(g_j,a')$ to $5 \cdot t+1$, and if $\bar{a'}[j] = 0$ we set it to $3 \cdot t+1$.
\item If $\bar{a}[j] = 1$ we set the weight of the edge $(a',g_j)$ to $0 \cdot t+1$, and if $\bar{a}[j] = 0$ we set it to $2 \cdot t+1$.
\end{itemize}

\item Finally, every node in $J$ is connected with bi-directional edges of weight $3 \cdot t +1$ to each of the o-nodes $\{o_1,o_2,o_3,o_4 \}$.

\end{itemize}

This completes the reduction. The new graph has $O(n)$ nodes and $O(n \log n)$ edges.

\paragraph*{Correctness}

The correctness of the reduction follows from the next two lemmas.

\begin{lemma}
\label{lem:YEScase}
If node $a \in V_1$ is not in a $k$-cycle in $G$ then $\rtd_{G'}(a,a')\geq 10\cdot t$ where $a \in S, a' \in T$ are the two copies of $a$ in $G'$.
\end{lemma}

\begin{lemma}
\label{lem:NOcase}
Suppose that all nodes $a \in V_1$ are in a $k$-cycle in $G$, then $\rtd_{G'}(x,y) \leq 6 \cdot t + 2k$ for all pairs $x,y \in V(G')$.
\end{lemma}

The two lemmas will become evident after we establish a series of claims about the distances in $G'$.

Let us begin with the interesting pairs $(a,a') \in S \times T$ where $a,a'$ are the two copies in $G'$ of a node $a \in G$.
The next claim shows that in the ``good'' case where $a$ is in a $k$-cycle, the roundtrip distance is $6t+O(k)$.

\begin{claim}
If node $a \in V_1$ is in a $k$-cycle in $G$ then $\rtd_{G'}(a,a') \leq 6t +2k$.
\end{claim}

\begin{proof}
This holds because of the forwards and backwards edges defined in the beginning. 
The edges of the $k$-cycle correspond to a forwards path from $a$ to $a'$ via the nodes in $V_2^{fwd},\ldots,V_k^{fwd}$ and a backwards path from $a'$ to $a$ via the nodes in $V_2^{bwd},\ldots,V_k^{bwd}$.
The weight of the forwards path is $6 \cdot t +k$ and the weight of the backwards path is $0 \cdot t + k$.
\end{proof}

Note that if the node $a$ is not in a $k$-cycle then neither the forwards nor backwards paths that were used in the previous proof exist in $G'$.

Next, we show that the distance between any pair $a \in S, b' \in T$ for a distinct pair of nodes $a,b\in V_1$ is $\leq 6t +O(1)$ due to the bit-gadget $J$.

\begin{claim}
For any pair of nodes $a,b \in V_1$ such that $a \neq b$ we have $\rtd_{G'}(a,b') \leq 6t +4$ where $a\in S$ and $b' \in T$.
\end{claim}

\begin{proof}
Let $j \in [d]$ be the coordinate such that $\bar{a}[j]=1$ but $\bar{b}[j]=0$. Such a coordinates is guaranteed to exist because $a \neq b$.
The path $a \to g_j \to b'$ has weight $3 \cdot t +1 + 3 \cdot t +1 = 6 \cdot t +2$.
The path $b' \to g_j \to a$ has weight $0 \cdot t +1 + 0 \cdot t +1 = 0 \cdot t +2$.
Thus, the roundtrip distance is at most $6\cdot t + 4$.
\end{proof}

Note that for the interesting pairs $(a,a')$ the above argument breaks, and the gadget $J$ does not provide a path of length $<12t$.

So far we have established that if all nodes $a\in V_1$ are in a $k$-cycle then all pairs in $S \times T$ have roundtrip distance $6t+O(k)$.
Let us now bound the distances within $S$ and within $T$, also using the bit-gadget $J$.

\begin{claim}
For any pair of nodes $a,b \in V_1$ such that $a \neq b$ we have $\rtd_{G'}(a,b) \leq 6t +4$ and $\rtd_{G'}(a',b') \leq 6t +4$  where $a,b\in S$ and $a',b' \in T$.
\end{claim}

\begin{proof}
Let $j \in [d]$ be the coordinate such that $\bar{a}[j] = \bar{b}[j]=1$.
The path $a \to g_j \to b$ has weight $3 \cdot t +1 + 0 \cdot t +1 = 3 \cdot t +2$.
And for the same reason, 
the path $b \to g_j \to a$ also has weight $3 \cdot t +1 + 0 \cdot t +1 = 3 \cdot t +2$.
Thus, the roundtrip distance between $a$ and $b$ is at most $6\cdot t + 4$.

For the pair $a',b' \in T$ we make a similar argument but consider the coordinate $i \in [d]$ in which both identifiers are $0$ rather than $1$; i.e. $\bar{a}[j] = \bar{b}[j]=0$.
The path $a' \to g_i \to b'$ has weight $0 \cdot t +1 + 3 \cdot t +1 = 3 \cdot t +2$.
And the path $b' \to g_i \to a'$ has weight $0 \cdot t +1 + 3 \cdot t +1 = 3 \cdot t +2$.
Thus, the roundtrip distance is at most $6\cdot t + 4$.

\end{proof}

After upper bounding all distances among pairs in $S \cup T$, it remains to analyze the other nodes in the construction; fortunately the o-nodes make it easy to see that all such distance are upper bounded by $6t+O(1)$.

\begin{claim}
The roundtrip distance between any pair of nodes $u,v \in V_2^{fwd} \cup \cdots \cup V_k^{fwd} \cup V_2^{bwd} \cup \cdots \cup V_k^{bwd} \cup  J \cup \{o_1,o_2,o_3,o_4\}$ is at most $6t+6$.
\end{claim}

\begin{proof}

The upper bound holds trivially for all pairs in $J \cup \{o_1,o_2,o_3,o_4\}$ because there are bidirectional edges of weight $3 \cdot t+1$ between any pair of them. 

Let $\{u,v\}$ by any pair that is not already covered by the previous argument. It must have an endpoint in $ V_2^{fwd} \cup \cdots \cup V_k^{fwd} \cup V_2^{bwd} \cup \cdots \cup V_k^{bwd}$, let it be $u$.
Observe that $u$ can reach any $v$ with distance $3 \cdot t+3$ because $u$ is at distance $1$ to some node in $\{o_1,o_2,o_3,o_4\}$ and $v$ is at distance $\leq 3 \cdot t + 2$ from any node in $\{o_1,o_2,o_3,o_4\}$.
Moreover, $v$ can reach any node in $\{o_1,o_2,o_3,o_4\}$ with weight $\leq 3 \cdot t + 2$ and there is some node in $\{o_1,o_2,o_3,o_4\}$ that can reach $u$ with weight $1$.
Thus, the roundtrip distance is at most $\leq 3 \cdot t + 6$.

\end{proof}

Finally, it remains to bound the distances for pairs with one endpoint in $S\cup T$ and one endpoint in the rest of $G'$. 
This will be broken into two claims, each using a different simple argument.

\begin{claim}
For any nodes $a \in S, a' \in T, g \in J$ we have $\rtd(a,g), \rtd(a,g') \leq 6\cdot t+2$.
\end{claim}

\begin{proof}
The direct roundtrips $a \to g \to a$ and $a' \to g \to a'$ have the desired distance.
\end{proof}

\begin{claim}
For any nodes $a \in S, a' \in T, v \in V_2^{fwd} \cup \cdots \cup V_k^{fwd} \cup V_2^{bwd} \cup \cdots  \cup V_k^{bwd} \cup \{ o_1,o_2,o_3,o_4 \}$ we have $\rtd(a,v), \rtd(a',v) \leq 6\cdot t+4$.
\end{claim}

\begin{proof}
\begin{itemize}
\item For $a \in S$ and any node $v \in V_2^{fwd} \cup \cdots \cup V_k^{fwd}$ the roundtrip $a \to o_1 \to v \to o_2 \to a$ has weight $6\cdot t+4$. Thus, $\rtd(a,v),\rtd(a,o_1),\rtd(a,o_2) \leq 6\cdot t+4$.
\item For $a \in S$ and any node $v \in V_2^{bwd} \cup \cdots \cup V_k^{bwd}$ the roundtrip $a \to o_3 \to v \to o_4 \to a$ has weight $6\cdot t+4$. Thus, $\rtd(a,v),\rtd(a,o_3),\rtd(a,o_4) \leq 6\cdot t+4$.
\item For $a' \in T$ and any node $v \in V_2^{fwd} \cup \cdots \cup V_k^{fwd}$ the roundtrip $a' \to o_1 \to v \to o_2 \to a'$ has weight $6\cdot t+4$. Thus, $\rtd(a',v),\rtd(a',o_1),\rtd(a',o_2) \leq 6\cdot t+4$.
\item For $a' \in T$ and any node $v \in V_2^{bwd} \cup \cdots \cup V_k^{bwd}$ the roundtrip $a' \to o_3 \to v \to o_4 \to a'$ has weight $6\cdot t+4$. Thus, $\rtd(a',v),\rtd(a',o_3),\rtd(a',o_4) \leq 6\cdot t+4$.

\end{itemize}

\end{proof}

The above claims suffice to establish Lemma~\ref{lem:NOcase} because we have upper bounded the roundtrip-diameter by $6t+2k$ in the case that all nodes in $G$ are in a $k$-cycle.

The next series of claims lower bound the roundtrip-distance between a pair $\{a,a'\}$ in the case that $a$ is not in a $k$-cycle in $G$.
In this case, there is simply no path from $a$ to $a'$ (or in the other direction) that avoids one of the o-nodes or the bit-gadget $J$. 
Therefore, our proof strategy is to lower bound the weight of any path that uses these nodes.
In these arguments we will ignore the $+1$ in the weights of edges and treat them as zero.

\begin{claim}
\label{claim:O}
Any path from $a\in S$ to $a'\in T$ that uses one of the nodes in $\{o_1,o_2,o_3,o_4\}$ must have distance at least $8t$.
\end{claim}

\begin{proof}
To establish the claim we lower bound the distances between the nodes in $S,T$ and the o-nodes.

\begin{itemize}
\item $\dist(a,o_2) \geq 3t$ because, in fact, there are no edges leaving $S$ that are cheaper than $3t$.
\item $\dist(o_1,a') \geq 3t$ because there are no edges entering $T$ that are cheaper than $3t$.
\item $\dist(a,o_1) \geq 5t$ because the direct edge has weight $5t+1$ and all other edges entering $o_1$ have weight $\geq 3t$ plus all edges leaving $a$ have weight $\geq 3t$, meaning that any path of length at least two will have weight $\geq 6t$.
\item $\dist(o_2,a') \geq 5t$ because the direct edge has weight $5t+1$ and all other edges leaving $o_2$ have weight $\geq 3t$ and all edges entering $a'$ have weight $\geq 3t$.
\end{itemize}
By the above bounds on the distances we can see that any path from $a$ to $a'$ that goes through $o_1$ or $o_2$ must have distance $\geq 8t$.
The following bounds address the paths that use $o_3$ or $o_4$.

\begin{itemize}
\item $\dist(a,o_3),\dist(a,o_4) \geq 4t$ because only nodes in $V_2^{fwd} \cup \cdots \cup V_k^{fwd} \cup \{ o_2 \}$ may be reachable from $S$ with distance $<4t$.
\item $\dist(o_3,a'),\dist(o_4,a') \geq 4t$ because only nodes in $V_2^{fwd} \cup \cdots \cup V_k^{fwd} \cup \{ o_2 \}$ may reach $T$ with distance $<4t$.
\end{itemize}

\end{proof}

\begin{claim}
\label{claim:J}
If $a \in V_1$ is not in a $k$-cycle in $G$ then any path from $a\in S$ to $a'\in T$ that does not use one of the nodes in $\{o_1,o_2,o_3,o_4\}$ must have distance at least $8t$.
\end{claim}

\begin{proof}
If the node $a$ is not in a $k$-cycle in $G$ then there are only two ways that at a path of distance $<8t$ from $a \in S$ to $a' \in T$ could possibly go, without using any of the o-nodes: either by first going to another node $b \in S$ and then going from $b$ to $a'$, or by first going to a node $b' \in T$ and then going from $b'$ to $a'$.
This is because any path via $V_2^{fwd} \cup \cdots V_k^{fwd}$ corresponds to a $k$-cycle in $G$, which is assumed to be inexistent, and any path of weight $<8t$ via the bit-gadget $J$ corresponds to a coordinate $j \in [d]$ in which the two identifiers differ which is also inexistent (since both are $\bar{a}$).
In either case, the path will have length $\geq 9t$ due to the following observations:
\begin{itemize}
\item For any pair $a,b \in S$ we have $\dist(a,b) \geq 3t$. This is because all edges leaving $S$ have weight $\geq 3t$.
\item For any pair $a',b' \in T$ we have $\dist(a',b') \geq 3t$. This is because all edges entering $T$ have weight $\geq 3t$.
\item For any pair $a\in S,b' \in T$ we have $\dist(a,b') \geq 6t$. This is because all edges leaving $S$ or entering $T$ have weight $\geq 3t$ and moreover there are no direct edges from $S$ to $T$.
\end{itemize}
\end{proof}

\begin{claim}
Any path from $a'\in T$ to $a\in S$ that uses one of the nodes in $\{o_1,o_2,o_3,o_4\}$ must have distance at least $2t$.
\end{claim}

\begin{proof}
The proof is analogous to that of Claim~\ref{claim:O}. Let us lower bound the distances between the nodes in $S,T$ and the o-nodes.
\begin{itemize}
\item $\dist(a',o_3) \geq 2t$ because there are no edges entering $o_3$ with weight less than $2t$.
\item $\dist(o_4,a) \geq 2t$ because there are no edges leaving $o_4$ with weight less than $2t$.
\end{itemize}
This implies that any path from $a'$ to $a$ that goes through $o_3$ or $o_4$ must have distance $\geq 2t$.
The following bounds address the paths that use $o_1$ or $o_2$.

\begin{itemize}
\item $\dist(a',o_1),\dist(a',o_2) \geq t$ because only nodes in $V_2^{bwd} \cup \cdots \cup V_k^{bwd} \cup \{ o_4 \}$ may be reachable from $T$ with distance $<t$.
\item $\dist(o_1,a),\dist(o_2,a) \geq t$ because only nodes in $V_2^{bwd} \cup \cdots \cup V_k^{bwd} \cup \{ o_3 \}$ may reach $T$ with distance $<t$.
\end{itemize}

\end{proof}

\begin{claim}
If $a \in V_1$ is not in a $k$-cycle in $G$ then any path from $a'\in T$ to $a\in S$ that does not use one of the nodes in $\{o_1,o_2,o_3,o_4\}$ must have distance at least $2t$.
\end{claim}

\begin{proof}
The proof is analogous to that of Claim~\ref{claim:J}.
A direct path via $V_2^{bwd} \cup \cdots V_k^{bwd}$ does not exist, and a direct path via the $J$ gadget has weight $\geq 2t$.
Thus, a path of weight $<2t$ from $a'$ to $a$ must either visit a node $b\in S$ or a node $b' \in T$. In either case the distance will be $\geq 3t$ by the following bounds:

\begin{itemize}
\item For any pair $a,b \in S$ we have $\dist(a,b) \geq 3t$ because all edges leaving $S$ have weight $\geq 3t$.
\item For any pair $a',b' \in T$ we have $\dist(a',b') \geq 3t$ because all edges entering $T$ have weight $\geq 3t$.
\end{itemize}
\end{proof}

As a result of the above four claims, we know that if $a$ is not in a $k$-cycle in $G$ then the roundtrip distance between $a \in S$ and $a' \in T$ is at least $8t + 2t$ which establishes Lemma~\ref{lem:YEScase}.
Together, Lemma~\ref{lem:NOcase} and~\ref{lem:YEScase} show the correctness of the reduction. An algorithm that can distinguish between roundtrip-diameter $\geq 10t$ from roundtrip-diameter $\leq 6t +2k$ can solve the All-Nodes $k$-Cycle problem.
By choosing $t$ to be a large enough constant, this can be achieved by an algorithm for roundtrip-diameter with approximation factor $5/3-\eps$.
\end{proof}

%%
%% Bibliography
%%

%% Please use bibtex, 

\bibliographystyle{alpha}
\bibliography{bib}

\appendix

\section{General approximation of directed (one-way) diameter}
\label{sec:alg-main}
We now give our general algorithm, generalizing the algorithm from Section~\ref{sec:alg-74} and proving Theorem~\ref{thm:alg-main}.
\begin{theorem*}[Theorem~\ref{thm:alg-main}, restated]
  Let $k=2^{t+2}$ for nonnegative integer $t\ge 0$.
  For every $\varepsilon>0$, there exists an $2-\frac{1}{k}+\varepsilon$ approximation of diameter in directed weighted graphs in time $\tilde O(m^{1+\alpha}/\varepsilon)$, for 
  \begin{align}
  \alpha=\frac{2(\frac{2}{\omega-1})^t - \frac{(\omega-1)^2}{2}}{(\frac{2}{\omega-1})^t(7-\omega) - \frac{\omega^2-1}{2}}.
  \label{eq:alpha-formula}
  \end{align}
\end{theorem*}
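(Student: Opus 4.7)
The plan is to extend the five-step algorithm from the proof of Theorem~\ref{thm:alg-74} (the case $t=0$) into a $(t+1)$-level sampling scheme, capped off by the same sparse matrix multiplication trick of Lemma~\ref{lem:matmul}. First I would reduce to a decision problem, as in the $t=0$ case: for each candidate value $D>0$, design a subroutine $\mathcal{A}_D$ that accepts when the diameter is at least $D$ and rejects when it is less than $\tfrac{k}{2k-1} D$. Running $\mathcal{A}_D$ over $O(\varepsilon^{-1}\log(nW))$ geometrically spaced values of $D$ then yields a $2-\tfrac{1}{k}+\varepsilon$ approximation and explains the $1/\varepsilon$ factor in the target runtime. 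The max-degree-3 transformation from Step~1 of the base algorithm still works in the weighted setting (using cycle edges of weight $0$), so we can continue to assume $|V|\le 2m$.

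Inside $\mathcal{A}_D$, I would fix radii $r_0<r_1<\cdots<r_t$ summing to $\tfrac{k-1}{2k-1}D$ together with ball-size thresholds $m^{\beta_0}<m^{\beta_1}<\cdots<m^{\beta_t}$. At each level $i$, sample $\tilde{O}(m^{1-\beta_i})$ uniformly random vertices and, for every vertex $v$ with $|B^{out}_{r_i}(v)|\le m^{\beta_i}$ (and symmetrically for in-balls), perform a partial BFS and compute eccentricities for every vertex in $B^{out+}_{r_i}(v)$. This mirrors Steps~\ref{74:step:2} and \ref{74:step:3} of the base algorithm but with finer gradations of radius and threshold. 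After processing all $t+1$ levels, isolate the samples whose $r_t$-balls have size at most $m^{1-\alpha}$, encode membership in sparse $0/1$ matrices $A^{out}$ and $A^{in}$, and multiply them via Lemma~\ref{lem:matmul}, exactly as in Step~\ref{74:step:5}.

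Correctness follows by iterating the same two-sided triangle-inequality arguments as the base case. If the diameter is at least $D$ with witness pair $(a,b)$, then either an early level catches a vertex of high eccentricity inside a small ball near $a$ or $b$, or by Lemma~\ref{lem:hit} each level $i$ hits the relevant $r_i$-ball near $a$ and near $b$; the telescoped triangle inequality then forces the final pair $(s_t,s_t')\in S^{out}\times S^{in}$ to satisfy $d(s_t,s_t')\ge D-2\sum_i r_i \ge \tfrac{k}{2k-1}D$, producing a zero entry of $A^{out}\cdot A^{in}$. Conversely, when the diameter is smaller than $\tfrac{k}{2k-1}D$, the intermediate-vertex argument from the $t=0$ proof still locates some $v$ on an $s$-to-$s'$ shortest path lying in both balls, certifying that the product has no zero entries; the even/odd rounding issue is handled by the same $B^{out+}$/$B^{in+}$ one-step-extension device.

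The main obstacle is parameter optimization. The $r_i$ and $\beta_i$ must be chosen so that the $t+2$ contributions to the runtime --- one partial BFS pass per level and the final sparse matrix multiplication, whose cost by Lemma~\ref{lem:matmul} is $\tilde{O}(m^{2(1-\alpha)}\cdot m^{(1-\alpha)(\omega-1)/2})$ --- simultaneously equal $\tilde{O}(m^{1+\alpha})$, while the radii sum to exactly $\tfrac{k-1}{2k-1}D$ to preserve the $2-1/k$ gap with $k=2^{t+2}$. This yields a system of $t+2$ balancing equations whose recursive solution, at each additional level, multiplies the critical exponent by $2/(\omega-1)$ (the sparse-matrix factor from Lemma~\ref{lem:matmul}), which is exactly where the $(2/(\omega-1))^t$ terms in \eqref{eq:alpha-formula} come from. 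Verifying the closed form and checking that at $t=0$ it collapses to $(\omega+1)/(\omega+5)$, recovering Theorem~\ref{thm:alg-74}, is the calculation I would perform last.
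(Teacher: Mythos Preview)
Your proposal has two genuine gaps relative to the paper's argument.

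\textbf{Weighted edges and the source of the $1/\varepsilon$.} You assert that ``the even/odd rounding issue is handled by the same $B^{out+}/B^{in+}$ one-step-extension device.'' That device is specific to unweighted graphs: it works in Theorem~\ref{thm:alg-74} only because crossing one edge changes distance by exactly $1$, so the last vertex $v$ on an $s$-to-$s'$ path with $d(s,v)\le\lfloor 2D/7\rfloor$ always has its successor $v'$ at distance $\le\lfloor 2D/7\rfloor$ from $s'$. With arbitrary weights the edge $(v,v')$ can be heavy, so for no fixed radius pair $(r,\tfrac{k}{2k-1}D-r)$ does a single $0/1$ ball-membership product certify $d(s,s')\le\tfrac{k}{2k-1}D$. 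The paper instead records the partial-Dijkstra distances $\hat d(s,v),\hat d(v,t)$ and discretizes them into $O(1/\varepsilon)$ buckets: for each integer $0\le j\le \tfrac{1}{\varepsilon}\cdot\tfrac{k}{2k-1}$ it forms $A^{j,out}_{s,v}=\one[\hat d(s,v)\le j\varepsilon D]$ and a complementary $B^{j,in}$, and accepts only if some $(s,t)$ has product zero for \emph{every} $j$. This discretization, not the outer search over $D$, is where the $1/\varepsilon$ factor arises; your binary-search explanation does not account for it, and without it the NO-case analysis fails for weighted graphs.

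\textbf{Structure of the levels.} The paper does not perform one final matrix product after $t{+}1$ levels of eccentricity tests; it performs a sparse matrix product at \emph{every} level $i=0,\dots,t$. At level $i$ it draws two independent samples, $\hat S$ of size $\tilde O(m^{1-\alpha_{i+1}})$ and $\hat T$ of size $\tilde O(m^{1-\alpha})$, and retains those whose $\tfrac{2^{i+1}}{2k-1}D$-out-ball (resp.\ $\tfrac{k-2^{i+1}}{2k-1}D$-in-ball) has size at most $m^{\alpha_i}$. In the YES case the analysis selects the \emph{largest} $i$ with $|B^{out}_{(k-2^{i+1}+1)D/(2k-1)}(a)|,|B^{in}_{(k-2^{i+1}+1)D/(2k-1)}(b)|\le m^{\alpha_i}$; maximality then forces one of the level-$(i{+}1)$ balls to exceed $m^{\alpha_{i+1}}$, which is precisely what guarantees the size-$m^{1-\alpha_{i+1}}$ sample hits it. Your scheme of computing full eccentricities for every vertex in the $r_i$-ball of every small-ball sample at each level already costs up to $m^{1-\beta_i}\cdot m^{\beta_i}\cdot m=m^2$, and a single terminal matrix product cannot simultaneously match the row count, column count, and sparsity required by Lemma~\ref{lem:matmul} across all levels. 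The asymmetric two-sample per-level design, together with the balancing relation $2\alpha=\alpha_i+(1-\alpha_{i+1})\tfrac{\omega-1}{2}$, is what actually produces the closed form~\eqref{eq:alpha-formula}.
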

Note that for $t=0$, this recovers Theorem~\ref{thm:alg-74}, with a lost $\varepsilon$ factor in the approximation.
\begin{proof}
  Similar to Theorem~\ref{thm:alg-74}, it suffices to show that, for any positive integer $D > 0$, there exists an algorithm $\mathcal{A}_D$ running in time $\tilde O(m^{1+\alpha})$ that takes as input any graph and accepts if the diameter is at least $D$, rejects if the diameter is less than $(\frac{k}{2k-1}-\varepsilon)D$, and returns arbitrarily otherwise.
  Then with a binary search argument we can get a $2-\frac{1}{k}+4\varepsilon$-approximation for every small $\epsilon>0$. Replacing $\varepsilon$ with $\varepsilon/4$ gives the result.

One can check that our choice of $\alpha$ guarantees a unique sequence of numbers $1-\alpha = \alpha_0 > \alpha_1 > \cdots > \alpha_t > \alpha_{t+1} = \alpha$ such that
\begin{align}
  2\alpha = \alpha_i + (1-\alpha_{i+1})\frac{\omega-1}{2}
  \label{eq:alg-0}
\end{align}
for $i=0,\dots,t$. We can determine $\alpha$ by using $\alpha_0=1-\alpha$ and $\alpha_{t+1}=\alpha$ and iterating the recursion \eqref{eq:alg-0} to obtain an equation for $\alpha$, which we solve to get \eqref{eq:alpha-formula}.\footnote{
Here are some details. First rewrite \eqref{eq:alg-0} as
\begin{align}
  \alpha_{i+1} = \frac{2}{\omega-1}\alpha_i + 1 - \frac{4}{\omega-1}\alpha
\end{align}
For $i=0$, since $\alpha_0=1-\alpha$ we have $\alpha_1=\frac{2}{\omega-1}+1- \frac{6}{\omega-1}\alpha$ by \eqref{eq:alg-0}. If we define $\alpha_i=\beta_i-\gamma_i\alpha$, we have that $\beta_1=\frac{\omega+1}{\omega-1}$ and $\gamma_1=\frac{6}{\omega-1}$. Using equation \ref{eq:alg-0} for $i\le t-1$, we have $\beta_{i+1}=\frac{2}{\omega-1}\beta_i+1$ and $\gamma_{i+1}=\frac{4}{\omega-1}+\frac{2}{\omega-1}\gamma_i$. So we have $\beta_{i+1}=(\frac{2}{\omega-1})^i(\beta_1-\frac{\omega-1}{\omega-3})+\frac{\omega-1}{\omega-3}$, and $\gamma_{i+1}=(\frac{2}{\omega-1})^i(\gamma_1-\frac{4}{\omega-3})+\frac{4}{\omega-3}$. So $\beta_t=(\frac{2}{\omega-1})^t\frac{2}{3-\omega}+\frac{\omega-1}{\omega-3}$ and $\gamma_t=(\frac{2}{\omega-1})^t\frac{7-\omega}{3-\omega}+\frac{4}{\omega-3}$.
Using equation \eqref{eq:alg-0} for $i=t$ we have $\alpha = \frac{2}{\omega-1}\alpha_t+1-\frac{4}{\omega-1}\alpha$, so we have 
\begin{equation}
    \alpha=\frac{\beta_t+\frac{\omega-1}{2}}{\gamma_t+\frac{\omega+3}{2}}=\frac{2(\frac{2}{\omega-1})^t-\frac{(\omega-1)^2}{2}}{(\frac{2}{\omega-1})^t(7-\omega)-\frac{\omega^2-1}{2}}.
\end{equation}
}
For such an $\alpha$, we can check $\alpha_0 > \alpha_1 > \cdots > \alpha_t > \alpha_{t+1}$.\footnote{
Here are some details: First check that $\alpha\ge \frac{2}{7-\omega}$ from \eqref{eq:alpha-formula} and $2\le\omega\le 3$. 
Combining this with \eqref{eq:alg-0} at $i=0$ gives $\alpha_0 > \alpha_1$.
Additionally, subtracting the $i$ and $i+1$ versions of equation \eqref{eq:alg-0}, we see that $\alpha_i > \alpha_{i+1}$ implies $\alpha_{i+1}> \alpha_{i+2}$, so by induction we indeed have $\alpha_0 > \alpha_1 > \cdots > \alpha_t > \alpha_{t+1}$.
}

\paragraph*{The algorithm.} We now describe the algorithm $\mathcal{A}_D$. 
\begin{enumerate}
  \item First, we apply a standard trick that replaces the input graph on $n$ vertices and $m$ edges with an $2m$-vertex graph of max-degree-3 that preserves the diameter: replace each vertex $v$ with a cycle of degree$(v)$ new vertices with weight-0 edges and where the edges to $v$ now connect to distinct vertices of the cycle. From now on, we work with this max-degree-3 graph on $2m$ vertices. 
  Now running Dijkstra's algorithm until $m^\beta$ vertices are visited takes $\tilde O(m^\beta)$ time, since it costs $\tilde O(1)$ time to visit each vertex and it's at-most-$3$ edges in Dijkstra's algorithm.

  \item\label{step:1} Sample $4m^{\alpha}\log m$ uniformly random vertices and compute their eccentricities. If any such vertex has (in- or out-) eccentricity at least $\frac{k}{2k-1}D$ Accept.

  \item\label{step:2} For every vertex $v$, determine if $|B_{\frac{1}{2k-1}D}^{out}(v)|\le m^{\alpha}$. If such a vertex $v$ exists, determine if any vertex of $B_{\frac{1}{2k-1}D}^{out}(v)$ has eccentricity at least $\frac{k}{2k-1}D$, and Accept if so.

  \item\label{step:3} For every vertex $v$, determine if $|B_{\frac{1}{2k-1}D}^{in}(v)|\le m^{\alpha}$. If such a vertex $v$ exists, determine if any vertex of $B_{\frac{1}{2k-1}D}^{in}(v)$ has eccentricity at least $\frac{k}{2k-1}D$, and Accept if so.

  \item\label{step:4} For $i=0,\dots,t$:

  \begin{enumerate}
    \item\label{step:4a} Sample $4m^{1-\alpha_{i+1}}\log m$ uniformly random vertices $\hat S$. 
    For each vertex in $\hat S$, run partial in- and out-Dijkstra each until $m^{\alpha_i}$ vertices have been visited.
    Compute
    \begin{align}
      S^{out} &= \left\{s\in\hat S: \abs{B^{out}_{\frac{2^{i+1}}{2k-1}D}(s)}\le m^{\alpha_i}\right\} \nonumber\\
      S^{in} &= \left\{s\in\hat S: \abs{B^{in}_{\frac{2^{i+1}}{2k-1}D}(s)}\le m^{\alpha_i}\right\}
    \end{align}
    and record the distances $\hat d(s,v)$ from the partial out-Dijkstra for $s\in S^{out}$, for $v\in B^{out+}_{\frac{2^{i+1}}{2k-1}D}(s)$.
    Note that $\hat d(s,v) \ge d(s,v)$ for all such $v$ with equality if $v\in B^{out}_{\frac{2^{i+1}}{2k-1}D}(s)$.
    Similarly, record the distances $\hat d(v,s)$ from the partial in-Dijkstra for $s\in S^{in}$, for $v\in B^{in+}_{\frac{2^{i+1}}{2k-1}D}(s)$.

    \item\label{step:4b} Sample $4m^{1-\alpha}\log m$ uniformly random vertices $\hat T$.
    For each vertex in $\hat T$, run partial in- and out- Dijkstra until $m^{\alpha_i}$ vertices have been visited.
    Compute
    \begin{align}
      T^{out}&=\left\{t\in\hat T: \abs{B^{out}_{\frac{k-2^{i+1}}{2k-1}D}(t)}\le m^{\alpha_i}\right\} \nonumber\\
      T^{in}&=\left\{t\in\hat T: \abs{B^{in}_{\frac{k-2^{i+1}}{2k-1}D}(t)}\le m^{\alpha_i}\right\} 
    \end{align}
    and record the distances $\hat d(t,v)$ from the partial out-Dijkstra for $t\in T^{out}$, for $v\in B^{out+}_{\frac{k-2^{i+1}}{2k-1}D}(s)$, and similarly, record the distances $\hat d(v,t)$ from the partial in-Dijkstra for $t\in T^{in}$, for $v\in B^{in+}_{\frac{k-2^{i+1}}{2k-1}D}(s)$.\footnote{Note that if $a'\in S^{out}\cup T^{out}$ and $b'\in S^{in}\cup T^{in}$, $\hat d(a',b')$ may be recorded multiple times, with different values. We take the smallest one, as this only helps us.}

    \item\label{step:4c} For integers $0\le j\le \frac{1}{\varepsilon}\cdot \frac{k}{2k-1}$, construct the following matrices
    \begin{itemize}
    \item $A^{j,out}\in \mathbb{R}^{S^{out}\times V}$ where $A_{s,v}^{j,out} = 1$ if $\hat d(s,v)\le j\varepsilon D$, and all other entries are zero. 
    \item $A^{j,in}\in \mathbb{R}^{V\times S^{in}}$  where $A_{v,s}^{j,in}=1$ if $\hat d(v,s)\le (\frac{k}{2k-1}-j\varepsilon)D$ and all other entries are zero. 
    \item $B^{j,out}\in \mathbb{R}^{T^{out}\times V}$ where $B_{t,v}^{j,out} = 1$ if $\hat d(t,v)\le j\varepsilon D$, and all other entries are zero. 
    \item $B^{j,in}\in \mathbb{R}^{V\times T^{in}}$  where $B_{v,t}^{j,in}=1$ if $\hat d(v,t)\le (\frac{k}{2k-1}-j\varepsilon)D$ and all other entries are zero. 
    \end{itemize}
    For all $j$, compute $A^{j,out}\cdot B^{j,in}\in\mathbb{R}^{S^{out}\times T^{in}}$ and $B^{j,out}\cdot A^{j,in}\in\mathbb{R}^{T^{out}\times S^{in}}$ using sparse matrix multiplication. 
    If there exists $s\in S^{out}$ and $t\in T^{in}$ such that $(A^{j,out}\cdot B^{j,in})_{s,t} = 0$ for all $j$, Accept.
    If there exists $t\in T^{out}$ and $s\in S^{in}$ such that $(B^{j,out}\cdot A^{j,in})_{t,s} = 0$ for all $j$, Accept.
    Otherwise Reject.
  \end{enumerate}
\end{enumerate}

\paragraph*{Runtime.} 
  Similar to Theorem~\ref{thm:alg-74}, Steps~\ref{step:1}, \ref{step:2}, and \ref{step:3} take time $\tilde O(m^{1+\alpha})$. 
  For Step~\ref{step:4a}, like in Theorem~\ref{thm:alg-74}, we can compute $S_i^{out}$ and $S_i^{in}$ and determine the desired distances in time  $\tilde O(m^{1-\alpha_{i+1}+\alpha_i})$ using partial Dijkstra. 
  Similarly in Step~\ref{step:4b}, we can compute $T_i^{out}$ and $T_i^{in}$ and the desired distances in time $\tilde O(m^{1-\alpha+\alpha_i})$.
  In Step~\ref{step:4c}, the runtime is the time to multiply sparse matrices. Each matrix $A^j$ has $\tilde O(m^{1-\alpha_{i+1}})$ rows, $m$ columns, and sparsity $\tilde O(m^{1-\alpha_{i+1}+\alpha_i})$, and matrix $B^j$ has $m$ rows, $\tilde O(m^{1-\alpha})$ columns, and sparsity $\tilde O(m^{1-\alpha+\alpha_i})$. 
  We can compute the product $A^{j,out}\cdot B^{j,in}$ by breaking into $m^{\alpha_{i+1}-\alpha}$ matrix multiplications of dimension $(\tilde O(m^{1-\alpha_{i+1}}), n, \tilde O(m^{1-\alpha_{i+1}}))$, where each matrix has sparsity $\tilde O(m^{1-\alpha_{i+1}+\alpha_i})$ (because each row of $A^{j,out}$ and each column of $B^{j,in}$ has sparsity $O(m^{\alpha_i})$).
  Each submatrix multiplication runs in time $\tilde O(m^{1-\alpha_{i+1}+\alpha_i}m^{(1-\alpha_{i+1})\frac{\omega-1}{2}})$ by Lemma~\ref{lem:matmul}.
  To apply Lemma~\ref{lem:matmul}, we need $1-\alpha_{i+1}+\alpha_i \ge (1-\alpha_{i+1})\frac{\omega+1}{2}$, which holds by rearranging \eqref{eq:alg-0} and using $\alpha_i > \alpha$.
  Thus, one product $A^{j,out}\cdot B^{j,in}$ takes time $\tilde O(m^{1-\alpha+\alpha_i}m^{(1-\alpha_{i+1})\frac{\omega-1}{2}})$, and so, as there are $O(1/\varepsilon)$ matrix multiplications, Step~\ref{step:4c} takes time $\tilde O(m^{1-\alpha+\alpha_i}m^{(1-\alpha_{i+1})\frac{\omega-1}{2}}/\varepsilon)$.
  Thus, the total runtime is
  \begin{align}
    \tilde O(m^{1+\alpha}) + \sum_{i=0}^{t} \tilde O(m^{1-\alpha+\alpha_i + (1-\alpha_{i+1})\frac{\omega-1}{2}})
    \le \tilde O(m^{1+\alpha})
  \end{align}
  as desired, where the bound follows from \eqref{eq:alg-0}.

  \paragraph*{If the diameter is less than $(\frac{k}{2k-1}-\varepsilon)D$, we always reject.}
  Clearly every vertex has eccentricity less than $\frac{k}{2k-1}$, so we indeed do not accept at Steps~\ref{step:1}, \ref{step:2}, and \ref{step:3}.
  At Step~\ref{step:4c}, consider any $s\in S^{out}$ and $t\in T^{in}$.
  By definition, we have $d(s,t) < (\frac{k}{2k-1}-\varepsilon)D$.
  Let $v$ be the latest vertex on the $s$-to-$t$ shortest path such that $d(s,v) \le \frac{2^{i+1}}{2k-1}D$ and let $v'$ be the following vertex, if it exists.
  Then $v$ is in $B^{out}_{\frac{2^{i+1}}{2k-1}D}(s)$ and $v'$, if it exists, is in $B^{in}_{\frac{k-2^{i+1}}{2k-1}D}(t)$.
  Thus, $v$ is visited in the partial Dijkstra from $s$, so $\hat d(s,v)=d(s,v)$ is accurate.
  Similarly, either $v=t$ so that $\hat d(v,t)$ is accurately 0, or $v'$ exists and is visited in the partial Dijkstra from $t$, so that $\hat d(v,t)$ is updated to be at most $w_{v,v'} + \hat d(v',t) = w_{v,v'}+d(v',t) = d(v,t)$, and thus is accurate. We used in the second equality that the $v$-to-$t$ shortest path goes through $v'$. 
  We thus have $\hat d(s,v) + \hat d(v,t) < (\frac{k}{2k-1}-\varepsilon)D$.
  Setting $j = \ceil{\hat d(s,v)/\varepsilon}$, we have $A^{j,out}_{s,v} = 1$ and $\hat d(v,t) = (\frac{k}{2k-1}-\varepsilon)D - \hat d(s,v) \le (\frac{k}{2k-1}-j\varepsilon)D$, so $B^{j,in}_{v,t} = 1$.
  Hence, $(A^{j,out}\cdot B^{j,in})_{s,t}\ge 1$, and this holds for any $s\in S^{out}$ and $t\in T^{in}$.
  Similarly, $(A^{j,in}\cdot B^{j,out})_{t,s}\ge 1$ for any $t\in T^{out}$ and $s\in S^{in}$.
  Thus, we do not accept at Step~\ref{step:4c}, so we reject, as desired.

  \paragraph*{If the diameter is at least $D$, we accept with high probability.}
  Let $a$ and $b$ be vertices at distance $d(a,b)\ge D$.
  Similar to Theorem~\ref{thm:alg-74}, we may assume all the folloiwng hold, or else we accept with high probability at one of Steps~\ref{step:1}, \ref{step:2}, or \ref{step:3}.
  \begin{align}
    |B_{\frac{k-1}{2k-1}D}^{out}(a)|\le m^{1-\alpha}, \quad
    |B_{\frac{k-1}{2k-1}D}^{in}(b)|\le m^{1-\alpha}, \quad
    |B_{\frac{1}{2k-1}D}^{out}(a)| > m^{\alpha}, \quad
    |B_{\frac{1}{2k-1}D}^{in}(b)| > m^{\alpha}
    \label{eq:alg-1}
  \end{align}
  Let $i\in\{0,\dots,t+1\}$ be the largest index such that $|B_{\frac{k-2^{i+1}+1}{2k-1}D}^{out}(a)|\le m^{\alpha_i}$, $|B_{\frac{k-2^{i+1}+1}{2k-1}D}^{in}(b)|\le m^{\alpha_i}$.
  By the first half of \eqref{eq:alg-1}, $i$ exists, and by the second of half of \eqref{eq:alg-1}, $i < t+1$. Thus, we have
  \begin{align}
    |B_{\frac{k-2^{i+1}+1}{2k-1}D}^{out}(a)|\le m^{\alpha_i},\text{ and } 
    |B_{\frac{k-2^{i+1}+1}{2k-1}D}^{in}(b)|\le m^{\alpha_i},\text{ and }\nonumber\\
    \text{ either }|B_{\frac{k-2^{i+2}+1}{2k-1}D}^{out}(a)|> m^{\alpha_{i+1}} \text{ or }
    |B_{\frac{k-2^{i+2}+1}{2k-1}D}^{in}(b)|> m^{\alpha_{i+1}}
  \end{align}

  We now prove that iteration $i$ of Step~\ref{step:4} accepts.
  Suppose that $|B_{\frac{k-2^{i+2}+1}{2k-1}D}^{out}(a)|> m^{1-\alpha_{i+1}}$. The case $|B_{\frac{k-2^{i+2}+1}{2k-1}D}^{in}(b)|> m^{1-\alpha_{i+1}}$ is similar.
  With high probability $\hat S$ has a vertex $s$ in $B_{\frac{k-2^{i+2}+1}{2k-1}D}^{out}(a)$ by Lemma~\ref{lem:hit}.
  As $|B_{\frac{2^{i+1}}{2k-1}D}^{out}(s)|\le |B_{\frac{k-2^{i+1}+1}{2k-1}D}^{out}(a)|\le m^{\alpha_i}$, we have $s\in S^{out}$.
  With high probability $\hat T$ has a vertex $t$ in $B_{\frac{1}{2k-1}D}^{in}(b)$ by the bound in \eqref{eq:alg-1} and Lemma~\ref{lem:hit}.
  As $|B_{\frac{k-2^{i+1}}{2k-1}D}^{in}(t)|\le |B_{\frac{k-2^{i+1}+1}{2k-1}D}^{in}(b)|\le m^{\alpha_i}$, we have $t\in T^{in}$.
  By choice of $s$ and $t$, the triangle inequality gives
  \begin{align}
  d(s,t)\ge d(a,b) - d(a,s) - d(t,b) \ge D - \frac{k-2^{i+2}+1}{2k-1}D - \frac{1}{2k-1}D \ge \frac{k+1}{2k-1}.
  \label{eq:alg-3}
  \end{align}
  Note that if $(A^{j,out}\cdot B^{j,in})_{s,t} > 0$ for some $j$, then there exists a vertex $v$ such that $d(s,v) \le j\varepsilon D$ and $d(v,t)\le (\frac{k}{2k-1}-j\varepsilon)D$, so $d(s,t) \le d(s,v)+d(v,t) \le \frac{k}{2k-1}D$, contradicting \eqref{eq:alg-3}.
  Thus, $(A^{j,out}\cdot B^{j,in})_{s,t} = 0$, so we accept, as desired.
\end{proof}

\begin{remark}
  If the edge weights are integers $\{0,\dots, C\}$, we can get remove the $+\varepsilon$ and get a $\frac{2k-1}{k}$-approximation in time $\tilde O(m^{1+\alpha}C)$.
  We can set $\varepsilon=1/D$, and in Step~\ref{step:4c}, we only need matrix multiplications for $j\in[\frac{2^{i+1}}{2k-1}D, \frac{2^{i+1}}{2k-1}D+C]$, since crossing an edge changes distance by at most $C$, saving the $1/\varepsilon$ factor from the number of matrix multiplications.
  Furthermore, because the diameter is an integer, we can stop the binary search after $\log D\le \log(nC)$ steps.
\end{remark}

\section{Unweighted Roundtrip $2-\varepsilon$ hardness from $\ell_{\infty}$-CP}
\label{sec:linfty-unw}
We now prove Theorem~\ref{thm:lb-linfty-unw}, extending the proof from Section~\ref{sec:linfty} to unweighted graphs.
\begin{theorem}[Theorem~\ref{thm:lb-linfty-unw}, restated]
  Let $\alpha\ge 2, \gamma\in(0,1), \delta>0$. 
  If there is an $2-\frac{1}{\alpha}-\gamma$ approximation algorithm in time $O(m^{2-\delta})$ for roundtrip diameter in unweighted graphs, there is a $\alpha$-approximation for $\ell_\infty$-Closest-Pair on vectors of dimension $d\le n^{1-\delta}$ in time $\tilde O(n^{2-\delta})$.
\end{theorem}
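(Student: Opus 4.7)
The plan is to convert the weighted reduction from Section~\ref{sec:linfty} into an unweighted one while preserving the $\bigl(2-\tfrac{1}{\alpha}-\varepsilon\bigr)$-approximation gap in the roundtrip diameter. First I would invoke Lemma~\ref{lem:bounded} to assume every coordinate lies in $[\pm(0.5+\varepsilon)\alpha]$, and then round each coordinate to the nearest multiple of $1/N$ for a sufficiently large constant $N=\Theta(1/(\alpha\varepsilon))$; this perturbs pairwise $\ell_\infty$ distances by at most $2/N$, which is negligible compared to the $\alpha-1$ YES/NO gap. After scaling every edge weight by $N$ we obtain positive integer weights bounded by $O(N\alpha)=O_{\alpha,\varepsilon}(1)$.

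The natural next move is to replace each directed weighted edge $(u,v)$ of integer weight $w$ by an unweighted directed path of length $w$ with $w-1$ fresh intermediate vertices, each of in-degree and out-degree $1$ on its path, which preserves distances between every pair of original vertices at the scaled values and increases the edge count only by a constant factor (since $w=O(N\alpha)=O(1)$). By taking the original input dimension small enough, for instance $d\le n^{\delta_0}$ for $\delta_0 < \tfrac{\varepsilon}{2-\varepsilon}$, the resulting unweighted graph has $\tilde O(n^{1+\delta_0})$ edges, and the hypothesized $O(m^{2-\varepsilon})$ roundtrip algorithm then runs in $\tilde O(n^{2-\delta})$ time for $\delta=\varepsilon(1+\delta_0)-2\delta_0>0$, yielding the required subquadratic $\ell_\infty$-CP algorithm.

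The main obstacle is that subdivision inflates roundtrip distances between intermediate vertices lying on different edges: for example, intermediates $w\in(i_S\to x_{X_1})$ and $w'\in(j_S\to x_{X_2})$ end up with $\mathrm{RT}(w,w')$ equal to the length of a cycle that must traverse \emph{both} full edges, namely $\approx 4N\alpha+2N(v_i[x]-v_j[x])$; this extra contribution can be as large as $2N\alpha$, which if uncontrolled would eliminate the $2-1/\alpha$ gap we need. To fix this, I would augment the construction with auxiliary hub vertices linked to intermediate nodes by carefully chosen short paths, creating intermediate-to-intermediate shortcuts of the correct length so that $\mathrm{RT}(w,w')$ is still governed by a single-edge cycle of length $\approx 2N\alpha$, while leaving the crucial distance $d(i_S,j_S)$ between close vector pairs in the YES case unchanged. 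The desired $(2-\tfrac{1}{\alpha}-\varepsilon)$-gap would then follow from a case analysis mirroring the weighted proof, covering original--original pairs (inherited from the weighted argument), original--intermediate pairs (bounded via the single-edge cycle through the intermediate's edge), and intermediate--intermediate pairs (controlled by the added shortcuts).
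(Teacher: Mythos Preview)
Your proposal correctly identifies the central obstacle: after subdividing weighted edges, pairs of intermediate vertices sitting on two different subdivided edges can have roundtrip distance roughly $4N\alpha$ rather than $2N\alpha$, because any cycle through both must traverse both original edges in full. You are also right that this single issue, if unaddressed, destroys the $2-1/\alpha$ gap.

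The gap in your argument is that you do not actually construct the ``auxiliary hub vertices linked to intermediate nodes by carefully chosen short paths.'' This is not a detail that can be deferred: it is the entire content of the unweighted reduction. The difficulty is a real tension---any shortcut that brings two intermediates on edges $(i_S\to x_{X_1})$ and $(j_S\to x'_{X_2})$ within roundtrip $\approx 2N\alpha$ is in danger of also providing an $i_S$-to-$j_S$ path of length $< (2\alpha-1)N$ in the YES case, because intermediates are themselves on $i_S$-to-$X$ and $X$-to-$j_S$ paths. A single hub connected to all intermediates fails immediately for exactly this reason. You need a gadget whose shortcut is only usable ``the wrong way'' relative to the $i_S\to j_S$ direction, so that exploiting it from $i_S$ forces a detour through $X_1\cup X_2$ both before and after.

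The paper does not subdivide the weighted graph at all. Instead it replaces each vector vertex $i_S$ by a dedicated subgraph: a single outgoing path $i=i_0^f,i_1^f,\dots$ and a single incoming path $\dots,i_1^b,i_0^b=i$, with edges to $x_{X_1},x_{X_2}$ attached at the positions $\beta\pm v_i[x]$ along these paths (so all coordinates share the same two paths rather than getting $d$ separate subdivisions). The crucial extra piece is a directed path $P^i$ of length $\beta$ with edges from every $i_j^b$ into its start and from its end into every $i_j^f$. This $P^i$ is exactly the ``shortcut'' you are looking for: it lets an intermediate on the backward path reach an intermediate on the forward path cheaply, which is what bounds all intermediate--intermediate roundtrips by $\approx 2\beta$ in the NO case. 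But because $P^i$ only goes from the $b$-side to the $f$-side, any $i\to j$ path that uses some $P^k$ must enter $X_1\cup X_2$ both before reaching $P^k$ and after leaving it, costing at least $(0.5\beta-2)+(\beta+4)+(0.5\beta-2)=2\beta$, so the YES-case lower bound survives. This directional trick is the missing idea in your proposal.
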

\begin{proof}
  Let $M \ge 20/\gamma$ be a constant, $\beta \defeq \floor{M\alpha - 1}$, $\varepsilon = \frac{1}{4(\beta+1.5)}$.
  For convenience, we assume that $M$ is such that the fractional part of $M\alpha$ is less than 0.5, so that $\beta> M\alpha-1.5$.

  By Lemma~\ref{lem:bounded}, it suffices to find an algorithm for a $(0.5+\epsilon)\alpha$-bounded instance $I'=\{v_1',\ldots,v_n'\}$ of the $\ell_\infty$-Closest-Pair problem on vectors of dimension $d\varepsilon^{-1}\log n$ in time $\tilde O(n^{2-\delta})$.
  This algorithm needs to distinguish between the ``YES case,'' where there exists $i\neq j$ with $\|v_i'-v_j'\|_\infty\le 1$, and the ``NO case'' where $\|v_i'-v_j'\|_\infty\ge \alpha$ for all $i\neq j$.

  First we construct a new set of vectors $I=\{v_1,\ldots,v_n\}$, where for each $j\in[n]$ and $x\in [d]$,  $v_j[x]=\floor{M\cdot v_j[x]}$. This set of vectors has the following properties
  \begin{itemize}
  \item In the YES case, there exists $i\neq j\in [n]$ with $|v_i'[x]-v_j'[x]|\ge \alpha$ and thus $|v_i[x]-v_j[x]|\ge M\alpha-1\ge \beta$. 
  \item In the NO case, for all $i\neq j\in [n]$, we have $|v_i'[x]-v_j'[x]|\le 1$, then $|v_i[x]-v_j[x]|\le M$. 
  \item All entries of vectors in $I$ have absolute value at most $0.5\beta+2$: note that $v_j[x]=\floor{Mv_j'[x]}\le Mv_j'[x]\le (0.5+\epsilon)M\alpha < 0.5\beta + 2$ and $v_j[x]=\floor{Mv_j'[x]} > Mv_j'[x]-1\ge -(0.5+\epsilon)M\alpha-1 \ge -0.5\beta-2$. 
  \end{itemize}
  We now construct a graph $G$ in $O_{\gamma,\alpha}(nd)$ time such that, if there exists $|v'_i[x]-v'_j[x]|\ge \beta$, then the roundtrip diameter is at least $4\beta-2M$, and otherwise the diameter is a most $2\beta+8$.
  Indeed, this implies that a better than $\frac{4\beta-2M}{2\beta+8} = 2 - \frac{M+8}{\beta+4} \ge 2-\frac{1}{\alpha}-\gamma$ approximation of the roundtrip diameter can distinguish between the YES and NO case, solving the bounded $\ell_\infty$ instance, as desired. We now describe the graph.

\paragraph*{The graph.}

  \begin{figure}
    \begin{center}
    \includegraphics[width=\textwidth]{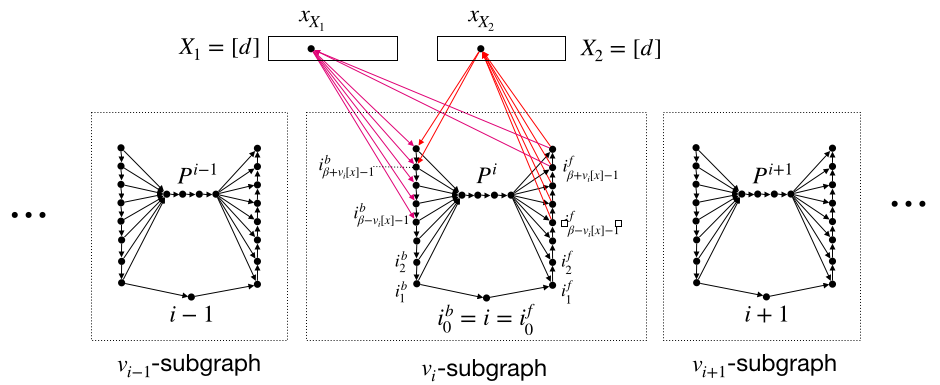}
    \end{center}
    \caption{The roundtrip diameter instance $G$. We modify the weighted lower bound graph, replacing each vector vertex $v_i$ with the \emph{$v_i$-subgraph}, illustrated above.}
    \label{fig:ellinfty_unw}
  \end{figure}

  The graph is illustrated in Figure~\ref{fig:ellinfty_unw}.
  For each $i\in [n]$ we first describe a subgraph called the \emph{$v_i$-subgraph}, which consists of the following. Let $x^*=\argmax_{x\in[d]}|v_i[x]|$.
  \begin{itemize}
  \item Vertices $i^{f}_1,\ldots,i^{f}_{\beta+|v_i[x^*]|-1}$ and $i^{b}_1,\ldots,i^{b}_{\beta+|v_i[x^*]|-1}$. Vertex $i=i^{f}_0=i^b_0$.  (superscripts $f$ and $b$ are for ``forward'' and ``backward'')
  \item For $j=0,\ldots,\beta+|v_i[x^*]|-2$, edges $(i_j^f,i_{j+1}^f)$ and $(i_{j+1}^b,i_j^b)$, so that the $i_j^f$ nodes construct a path of length $\beta+|v_i[x^*]|-1$ from $i$ and the $i_j^b$ nodes construct a path of length $\beta+|v_i[x^*]|-1$ to $i$.
  \item Vertices $P^i=\{p^i_1,\ldots,p^i_{\beta+1}\}$ which form a directed path $P^i$ of length $\beta$. Edges $(i_j^{b},p^i_1)$ and $(p^i_{\beta+1},i_j^f)$ for all $j=1,\ldots,\beta+|v_i[x^*]|-1$. 
  \end{itemize}
  The subgraph on the union of $\cup_ji^f_j$, $\cup_j i^b_j$ and $P^i$ is called the \emph{$v_i$-subgraph}, as all the nodes are associated to $v_i$.
  In addition to the $v_i$-subgraphs for all $i$, our graph has the following:
  \begin{itemize}
  \item As in the weighted case, we have two vertex sets $X_1=[d]$ and $X_2=[d]$ each identified by the coordinates. 
  \item For each $x\in [d]$, connect $i_{j}^f$ to $x_{X_1}$ and $x_{X_2}$ to $i_{j}^b$ for all $j\ge \beta+v_i[x]-1$. Connect $x_{X_1}$ to $i_{j}^b$ and $i_{j}^f$ to $x_{X_2}$ for all $j\ge \beta-v_i[x]-1$. These simulate the weighted edges from $S$ to $X_1$ and $X_2$ in the weighted construction. 
  \end{itemize}
  This finishes the construction.
  We now show that the roundtrip diameter is at most $2\beta+8$ in the NO case and at least $4\beta-M$ in the YES case.

\paragraph*{NO case.}
  We show the roundtrip distance between every pair of vertices $a$ and $c$ is at most $2\beta+8$. We break into the following cases.
  \begin{itemize}
  \item \textbf{Case 1: $a,c$ are both in the $v_i$-subgraph for some $i$.}
  Let $x^*=\argmax_{x\in[d]}(|v_i[x]|)$. Consider the two following cycles of length $2\beta$:
  \begin{align}
  i=i_0^f,\ldots,i_{\beta+v_i[x^*]-1}^f,x_{X_1}^*, i_{\beta-v_i[x^*]-1}^b, \ldots, i_0^b=i
  \end{align}
  \begin{align}
  i=i_0^f,\ldots,i_{\beta-v_i[x^*]-1}^f,x_{X_2}^*, i_{\beta+v_i[x^*]-1}^b, \ldots, i_0^b=i
  \end{align}
  These two cycles don't cover the following cases: (case 1) $a=i_j^f$ and $c=i_{j'}^b$ for $j,j' > \beta-|v_i[x^*]|-1$, (case 2) $a$ is in $P^i$. 
  Without loss of generality suppose $v_i[x^*]>0$. For the case 1, consider the following cycle 
  \begin{align}
  a,P^i, c=i_{j'}^b,\ldots, i_{\beta+v_i[x^*]-1},x_{X_1}^*,a
  \end{align}
  This cycle has length at most $\beta+2v_i[x^*]+2$. Since $v_i[x^*]\le 0.5\beta+2$, the cycle has length at most $2\beta+6$.
  Note that this also covers case 2 when $c\in\{i_j^f,i_j^f\}\cup P^i$ for some $j> \beta-|v_i[x^*]|-1$.

  For case 2, if $c\in \{i_j^f,i_j^f\}$ for $j\le \beta-|v_i[x^*]|-1$, consider the following cycle:

  \begin{align}
  i=i^f_0,\ldots,i^f_{\beta-v_i[x^*]-1}, x_{X_2}^*, i^b_{\beta+v_i[x^*]-1},P^i,i^f_{\beta+v_i[x^*]-1},x_{X_1}^*,i^b_{\beta-v_i[x^*]-1},\ldots,i_0^b=i
  \end{align}
  This cycle is of length at most $3\beta+1-2v_i[x^*]$. Note that $v_i[x^*]\ge 0.5\beta-2$. This is because if we consider some $j\in S$, then there is $y$ such that $|v_i[y]-v_j[y]|\ge \beta$. Since $|v_i[y]|,|v_j[y]|\le 0.5\beta+2$, we have that $|v_i[y]|,|v_j[y]|\ge 0.5\beta-2$. So $v_i[x^*]\ge |v_i[y]|\ge 0.5\beta-2$, and hence the length of the cycle is at most $2\beta+7$.

  \item \textbf{Case 2: $a$ is in the $v_i$-subgraph and $c$ is in the $v_j$-subgraph for $i\neq j$.}
  Let $x\in [d]$ be a coordinate where $|v_i[x]-v_j[x]|\ge \beta$. 
  Without loss of generality suppose that $v_i[x]>0>v_j[x]$. Note that $0.5\beta-2\le |v_i[x]|,|v_j[x]|\le 0.5\beta+2$.

  We show that there is a path from $x_{X_1}$ to $x_{X_2}$ of length at most $\beta+4$ that contains $a$. Similarly, we show that there is a path from $x_{X_2}$ to $x_{X_1}$ of length at most $\beta+4$ that contains $c$. Then the union of these two paths constructs a cycle of length at most $2\beta+8$ passing through $a$ and $c$. 

  If $a\in\{i_{k}^b,i_k^f\}$ for some $k\le \beta-v_i[x]-1$, consider this path of length $2\beta-2v_i[x]\le \beta+4$.
  \begin{align}
  x_{X_1},i_{\beta-v_i[x]-1}^b,\ldots,i_0^b=i_0^f,\ldots, i_{\beta-v_i[x]-1}^f,x_{X_2}
  \end{align}

  If $a\in\{i_k^b,i_k^f\}$ for $k>\beta-v_i[x]-1$, consider the following path of length $\beta+4$.
  \begin{align}
  x_{X_1},i_k^b,P^i,i_k^f, x_{X_2}
  \end{align}
  If $a\in P^i$, we consider the above path of length $\beta+4$.

  Now for $c$, we do a similar case analysis. If $c\in \{j_k^f,j_k^b\}$, for some $k\le \beta+v_j[x]-1$, consider the following path of length $2\beta+2v_j[x]\le \beta+4$. 

  \begin{align}
  x_{X_2},j_{\beta+v_j[x]-1}^b,\ldots,j_0^b=j_0^f,\ldots, j_{\beta+v_j[x]-1}^f,x_{X_1}
  \end{align}
  If $c\in\{j_k^b,j_k^f\}$ for $k>\beta+v_j[x]-1$, consider the following path of length $\beta+4$.
  \begin{align}
  x_{X_2},j_k^b,P^j,j_k^f, x_{X_1}
  \end{align}
  If $c\in P^j$, we consider the above path of length $\beta+4$.

  So the cycle is of length at most $2\beta+8$.

  \item \textbf{Case 3: $a$ is in the $v_i$-subgraph and $c\in X_1\cup X_2$.}
  Suppose $c=x_{X_1}$. If $a=i_{k}^b$ for $k\le \beta-v_i[x]-1$ or $a=i_{k'}^f$ for $k'\le \beta+v_i[x]-1$, then consider the following cycle of length $2\beta$.
  \begin{align}
  i=i_0^f,\ldots,i_{\beta+v_i[x]-1}^f,x_{X_1},i_{\beta-v_i[x]-1}^b,\ldots,i_0^b=i.
  \end{align}

  If $a=i_{k}^b$ for $k> \beta-v_i[x]-1$, consider the following cycle of length $\beta+4$:
  \begin{align}
  x_{X_1}, a,P^i,i_{\beta+v_i[x]-1}^f,x_{X_1}
  \end{align}
  If $a=i_{k}^f$ for $k> \beta+v_i[x]-1$, consider the following cycle of length $\beta+4$:
  \begin{align}
  x_{X_1}, i_{\beta-v_i[x]-1}^b,P^i,a,x_{X_1}
  \end{align}
  For $c=x_{X_2}$, everything is symmetric.

  \item \textbf{Case 4: $a,c\in X_1\cup X_2$.} Any two vertices in $X_1\cup X_2$ are at distance $\beta+4$ and thus roundtrip distance $2\beta+8$: for any $x,x'\in X_1\cup X_2$, pick any $i$. Then 
  \begin{align}
    x,i^b_{\beta+\|v_i\|_\infty-1},P^i,i^f_{\beta+\|v_i\|_\infty-1},x' 
  \end{align}
  is a path of length $\beta+4$.
  \end{itemize}
  This covers all cases, so we have shown that the roundtrip diameter in the NO case is at most $2\beta+8$.

\paragraph*{YES case.}
Suppose that there exist $i,j$ such that for all $x\in [d]$, $|v_i[x]-v_j[x]|\le M$. We show that $d(i,j)\ge 2\beta-M$. By symmetry, it follows that $d(j,i)\ge 2\beta-M$, so the roundtrip distance is at least $4\beta-2M$.

For every vertex $k$, we can check that $d(k,X_1\cup X_2), d(X_1\cup X_2,k) \ge \beta-\|v_k\|_\infty \ge 0.5\beta-2$.
If a path from $i$ to $j$ passes through a path $P^k$, then it must hit $X_1\cup X_2$ before and after path $P^k$ (even if $k=i$ or $k=j$), creating a path of length at least $\beta+4$ between two vertices of $X_1\cup X_2$.
Then the $i$-to-$j$ path has length at least 
\begin{align}
  d(i,X_1\cup X_2)+(\beta+4)+d(X_1\cup X_2,j) \ge (0.5\beta-2)+(\beta+4) + (0.5\beta-2) = 2\beta,
\end{align}
as desired.
If a path from $i$ to $j$ has a vertex $k\in[n]$, then the path must have length at least
\begin{align}
d(i,X_1\cup X_2)+d(X_1\cup X_2,k)+d(k,X_1\cup X_2)+d(X_1\cup X_2,j) 
&\ge 4(0.5\beta-2) \nonumber\\
&> 2\beta-M.
\end{align}
Finally, if a path from $i$ to $j$ passes through no path $P^k$ and no vertex $k$ for all $k\neq i,j$, then the path cannot visit any $v_k$-subgraph for $k\neq i,j$.
Thus, the path must go from $i$ through the $v_i$-subgraph to some $x\in X_1\cup X_2$, then through the $v_j$-subgraph to $j$.
If $x\in X_1$, the path has length
\begin{align}
  d(i,x_{X_1}) + d(x_{X_1},j)\ge (\beta+v_i[x]) + (\beta-v_j[x]) \ge 2\beta-M,
\end{align}
by assumption of $i$ and $j$, and similarly if $x\in X_2$ the path has length
\begin{align}
  d(i,x_{X_2}) + d(x_{X_2},j)\ge (\beta-v_i[x]) + (\beta+v_j[x]) \ge 2\beta-M,
\end{align}
as desired.
\end{proof}

\section{Unweighted Roundtrip $5/3-\varepsilon$ hardness from All-Nodes $k$-Cycle}
\label{sec:ankc-unw}

In this section, we extend the proof from Section~\ref{sec:ankc-unw} to unweighted graphs.

\begin{theorem*}[Theorem~\ref{thm:lb-ankc-unw}, restated]
  Under Hypothesis~\ref{hypo:ankc}, for all $\varepsilon,\delta>0$, no algorithm can $5/3-\varepsilon$ approximate the roundtrip diameter of a sparse directed unweighted graph in $O(n^{2-\delta})$ time.
\end{theorem*}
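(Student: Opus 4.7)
The plan is to construct an unweighted graph $G''$ from the weighted graph $G'$ of Section~\ref{sec:ankc} by subdividing every weighted edge into an unweighted directed path of the same length, and then to argue that the roundtrip diameter gap ($\le 6t+O(k)$ vs.\ $\ge 10t$) is preserved. Because every edge weight in $G'$ is at most $5t+1$ and $t$ is a constant, each edge is replaced by at most $O(t)=O(1)$ new vertices, so $G''$ still has $O(n\log n)$ edges, which is what we need for the hardness statement in sparse graphs.

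The difficulty with a pure subdivision is that the new internal vertices of the subdivided paths can have very large roundtrip distances: an internal vertex $z$ on a subdivided forward path from $u$ to $v$ has an out-neighbor on the same path but no way to return except through $v$ and back, which by itself can cost $\Omega(t)$ units just to reach a hub. To fix this, for each internal vertex $z$ of a subdivided edge $(u,v)$ I will attach $z$ to one of the omni-nodes $o_1,\dots,o_4$ by a short directed path whose length is calibrated so that $z$ inherits a roundtrip cycle of length $\le 6t+O(k)$ through that omni-node. Since in the weighted construction the o-nodes already serve as hubs connecting each of $S$, $T$, and the $V_i^{\text{fwd/bwd}}$ layers by short roundtrips, the choice of which omni-node to attach to depends only on which type of weighted edge is being subdivided (e.g., intermediate vertices on a subdivided edge from $S$ to some $V_i^{\text{fwd}}$ are attached to $o_2$ so that a returning path $z \to o_2 \to a$ for $a \in S$ is short).

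Correctness in the YES case (all $a\in V_1$ lie on a $k$-cycle) reduces to the case analysis already carried out in Section~\ref{sec:ankc} for pairs involving only original vertices, with the additional check that each new intermediate vertex has a short roundtrip via its designated omni-node. Correctness in the NO case is also largely inherited: subdivision does not shorten any $u$-to-$v$ distance between original vertices, so the forward distance $\ge 8t$ and backward distance $\ge 2t$ established in Claim~\ref{claim:O} and Claim~\ref{claim:J} still hold; we only need to re-verify that the newly added shortcut attachments to the omni-nodes do not accidentally create a short $a\to a'$ or $a'\to a$ path when $a\in V_1$ is not in a $k$-cycle, which amounts to reproving these two claims with the augmented edge set.

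The main obstacle is calibrating the lengths of the shortcut attachments so that simultaneously (i) every intermediate vertex gets a $\le 6t+O(k)$ roundtrip, (ii) the asymmetric bit-gadget weights $\{0t{+}1, 2t{+}1, 3t{+}1, 5t{+}1\}$ are faithfully preserved in the unweighted metric (this asymmetry is precisely what makes the bit-gadget close the distance for distinct pairs $a\ne b$ in $V_1$ without closing it for the interesting pairs $(a,a')$), and (iii) no new $a\to a'$ path of length $<8t$ nor $a'\to a$ path of length $<2t$ is introduced. The slack needed to achieve all three constraints simultaneously is obtained by choosing $t$ to be a sufficiently large constant, and the final approximation factor of $5/3-\varepsilon$ comes out the same as in the weighted reduction because the additive $O(k)$ slack is negligible compared to the multiplicative gap between $6t$ and $10t$.
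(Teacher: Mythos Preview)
Your approach diverges from the paper's and, as written, has a genuine gap.

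The paper does \emph{not} subdivide the weighted edges of $G'$. Instead it replaces each $a\in S$ by a single ``trunk'' path of $5t$ forward copies and $2t$ backward copies (the sets $S_i^{fwd},S_i^{bwd}$), and similarly for $T$. Every weighted edge of the form $(a,\cdot)$ then becomes a single unweighted edge leaving the trunk at the appropriate depth (e.g.\ the weight-$(3t{+}1)$ edge $(a,g_j)$ becomes a single edge from the copy of $a$ in $S_{3t}^{fwd}$ to $g_j$). The crucial point is that \emph{all} outgoing weighted edges of $a$ share the same $O(t)$ intermediate vertices. A handful of extra matching edges between backward and forward layers, and no $o_i$--$o_j$ edges at all, then suffice to give every trunk vertex a $\le 6t+O(k)$ roundtrip.

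Your per-edge subdivision breaks precisely here. Take $a\in S$ and two coordinates $j,j'$ with $\bar a[j]=1,\bar a[j']=0$. Let $z$ be an intermediate vertex near the end of the subdivided $(a,g_j)$-path (length $3t{+}1$) and $z'$ an intermediate vertex near the start of the subdivided $(a,g_{j'})$-path (length $5t{+}1$). Since $z'$ has out-degree~$1$, every $z'\to z$ path must traverse the remaining $\approx 5t$ steps to $g_{j'}$, then return to $a$, then walk $\approx 3t$ steps out to $z$; this already forces $\rtd(z,z')\approx 10t$. Your proposed fix is a calibrated directed path $z'\to o_2$, but the NO-case analysis requires $\dist(a,o_2)\ge 3t$, hence such a shortcut from a vertex at depth $1$ must itself have length $\ge 3t-1$; routing through it still gives $d(z',a)\ge 4t$ and $\rtd(z,z')\gtrsim 8t$. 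The same arithmetic obstructs shortcuts to $o_1,o_3,o_4$ (the required lengths are even larger). And the shortcut paths introduce yet more degree-$1$ intermediate vertices needing their own short roundtrips, a recursion you do not resolve. The paper's shared-trunk construction sidesteps all of this because there is only one forward path out of $a$ and every intermediate vertex on it lies simultaneously on many short cycles.
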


\begin{figure}
    \centering
    \includegraphics[width=0.95\textwidth]{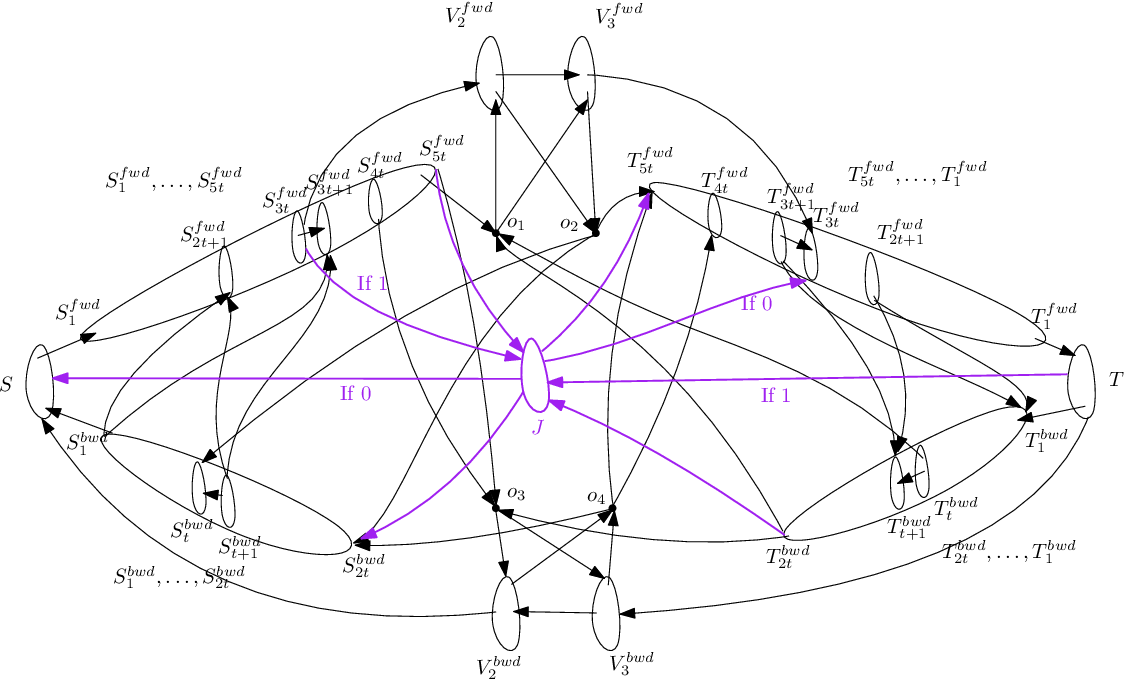}
    \caption{All-Nodes $3$-Cycle to Unweighted Roundtrip Diameter Approximation. The conditions on purple edges mean the following: if there is an edge between a node $a$ in the $S$-area or $T$-area and a a node $g_i\in J$, the condition If $\ell$ for $\ell=0,1$ mean that the edge exists if $\bar{a}[i] = \ell$. The copies of $S$ and $T$ are shown in long-narrow shaped vertex sets, and individual copies like $S_i^{fwd}$ that have specific edges to them are shown inside these sets. The edges inside the sets are not shown. Note that there are no edges between $o_i$ and $o_j$ for any $i,j\in\{1,2,3,4\}$ unlike the weighted case.}
    \label{fig:unweighted_reduction}
\end{figure}
We change the weighted construction as follows. We make $7t$ copies of $S$ and call them $S_{i}^{fwd}$ for $i=1,\ldots,5t$ (forward copies), and $S_i^{bwd}$ for $i=1,\ldots,2t$ (backward copies). Similarly we make $7t$ copies of $T$ and call them $T_i^{fwd}$ for $i=1,\ldots,5t$, and $T_i^{bwd}$ for $i=1,\ldots,2t$. These copies are the only new vertices added to the weighted construction. We call the subset of the graph containing $S$ and all its copies the $S$-area. Similarly, we call the subset containing $T$ and all its copies the $T$-area. We define the edges between these copies as follows. See Figure \ref{fig:unweighted_reduction}.

\begin{itemize}
    \item 
    We put perfect matchings between these copies. Formally, for $a\in V_1$, we add the edges $(a,a)\in S\times S^{fwd}_1$ and $(a,a)\in S^{fwd}_i\times S_{i+1}^{fwd}$ for $i=1,\ldots 5t-1$. We add the edges $(a,a)\in S_1^{bwd}\times S$ and $(a,a)\in S_{i+1}^{bwd}\times S_{i}^{bwd}$ for $i=1,\ldots,2t-1$. 
    \item For $a\in V_1$, we add the edges $(a',a')\in  T^{fwd}_1\times T$ and $(a',a')\in T^{fwd}_{i+1}\times T_{i}^{fwd}$ for $i=1,\ldots 5t-1$. We add the edges $(a',a')\in T\times T_1^{bwd}$ and $(a',a')\in T_i^{bwd}\times T_{i+1}^{bwd}$ for $i=1,\ldots,2t-1$. 
\end{itemize}
Note that so far we have a path of length $5t+1$ out of each $a\in S$ and a path of length $2t+1$ to each $a\in S$. Similarly we have a path of length $5t+1$ to each $a'\in T$ and a path of length $2t+1$ from each $a'\in T$. Now we can add edges that simulate the edges in the weighted construction. We start by defining the edges adjacent to $V_i^{fwd}$ and $V_i^{bwd}$ for $i=2,\ldots,k$. Note that the edges in $V_i^{fwd}\times V_{i+1}^{fwd}$ and $V_{i+1}^{bwd}\times V_i^{bwd}$ for $i=2,\ldots,k-1$, and the edges in $(V_i^{fwd}\cup V_i^{bwd})\times \{o_1,\ldots,o_4\}$ for $i=2,\ldots,k$ are the same as the weighted case and we include them here for completeness. 
\begin{itemize}
\item For all $a\in V_1$ and $x\in V_2$, add the edge $(a,x^{fwd})\in S_{3t}^{fwd}\times V_2^{fwd}$ if $(a,x)\in E(G)$. Add the edge $(x^{bwd},a)\in V_2^{bwd}\times S$ if $(x,a)\in E(G)$. 
\item Similarly, for any $a\in V_1$ and $x\in V_k$, add the edge $(x^{fwd},a')\in V_k^{fwd}\times T_{3t}^{fwd}$ if $(x,a)\in E(G)$. Add the edge $(a',x^{bwd})\in T\times V_k^{bwd}$ if $(a,x)\in E(G)$. 
\item The following edges are the same as in the weighted case and we note them for completeness. For each $ i \in \{2,\ldots,k\}$ and for each edge $(x,y) \in V_i \times V_{i+1}$ in $G$, we add two edges to $G'$: one forwards $(x^{fwd},y^{fwd}) \in V_i^{fwd} \times V_{i+1}^{fwd}$ and one backwards $(y^{bwd},x^{bwd}) \in V_{i+1}^{bwd} \times V_{i}^{bwd}$. The weight on these edges is $1$, which can be thought of as negligible because it is $0\cdot t +1$.
\end{itemize}

We define the edges adjacent to $o_i$ for $i=1,\ldots,4$.
\begin{itemize}
    \item For all $a\in V_1$, we add $(a,o_1)\in S_{5t}^{fwd}\times o_1, (a,o_3)\in (S_{4t}^{fwd}\cup S_{5t}^{fwd})\times o_3$ and $(o_2,a)\in o_2\times(S_{2t}^{bwd}\cup S_t^{bwd})$\footnote{Note that if we want to copy the weighted case, intuitively we should add edges $S_{4t}^{fwd}\times o_3$. Adding edges from $S_{5t}^{fwd}$ to $o_3$ only makes longer paths from $S$ so wouldn't hurt the yes case.}. 
    \item For all $a\in V_1$, we add $(o_2,a')\in o_2\times T_{5t}^{fwd}, (o_4,a')\in o_4\times( T_{4t}^{fwd}\cup T_{5t}^{fwd})$ and $(a',o_1)\in (T_{2t}^{bwd}\cup T_t^{bwd})\times o_1$.
\end{itemize}
The following edges exists in the weighted version as well, and we put them here for completeness. Note that there are no edges between any $o_i$ and $o_j$ in the unweighted case.
\begin{itemize}
    \item Add edges from $o_1$ to all nodes $v \in V^{fwd}_2 \cup \cdots \cup V^{fwd}_k$. Add an edge from all $v \in V^{fwd}_2 \cup \cdots \cup V^{fwd}_k$ to $o_2$
    \item Add edges from $o_3$ to all nodes $v \in V^{bwd}_2 \cup \cdots \cup V^{bkw}_k$. Add an edge from all $v \in V^{bwd}_2 \cup \cdots \cup V^{bwd}_k$ to $o_4$
\end{itemize}

Now we add edges adjacent to $J$.
\begin{itemize}
    \item For all $a\in V_1$ and $j\in [d]$, we add the edge $(a,g_j)\in S_{5t}^{fwd}\times J$ and we add $(g_j,a)\in J\times S_{2t}^{bwd}$\footnote{In the weighted case the $S_{5t}^{fwd}\times J$ edges have the constraint $a[j]=0$, but if we drop this condition it wouldn't hurt the yes case.}. If $\bar{a}[j]=1$, we add the edge $(a,g_j)\in S_{3t}^{fwd}\times J$. If $\bar{a}[j]=0$, we add the edge $(g_j,a)\in J\times S$.
    \item For any $a\in V_1$ and $j\in [d]$, we add the edge $(g_j,a')\in J\times T_{5t}^{fwd}$ and we add $(a',g_j)\in  T_{2t}^{bwd}\times J$. If $\bar{a}[j]=0$, we add the edge $(g_j,a')\in J\times T_{3t}^{fwd}$. If $\bar{a}[j]=1$, we add the edge $(a,g_j)\in T\times J$.
\end{itemize}

Finally we add the following edges that don't simulate any edges in the weighted case, and their use is to make copies of $S$ (and $T$) close to each other.
 
\begin{itemize}
    \item For all $a\in V_1$, add edges $(a,a)\in (S_1^{bwd}\cup S_{t+1}^{bwd})\times (S_{2t+1}^{fwd}\cup S_{3t+1}^{fwd})$.
    \item For all $a\in V_1$, add edges $(a',a')\in (T_{2t+1}^{fwd}\cup T_{3t+1}^{fwd})\times (T_1^{bwd}\cup T_{t+1}^{bwd}) $.
\end{itemize}

Note that we do not add any edges between $o_i$s, or between $o_i$ and $J$, as the existence of the above edges make it unnecessary.

\paragraph*{NO case} 
We compute distances with an $O(1)$ additive error to make the proof simpler.

First we cover the roundtrip distance between nodes that are in the $S$-area and $T$-area. Let $a,b\in S$-area. Let $J_i(a)=\{g_j|a[j]=i\}$ for $i\in\{0,1\}$, and let $S_0^{fwd}=S$ and $T_0^{fwd}=T$. 

\begin{lemma}
\label{lem:S-T-closeness}
Let $a,b\in S-area\cup T-area$ such that $\bar{a}\neq \bar{b}$, where $\bar{a}$ and $\bar{b}$ are $a$ and 
$b$'s identifiers. Suppose that $d(a,J)+d(J,a)\le 3t$ and $d(b,J)+d(J,b)\le 3t$. Then $d_{rt}(a,b)\le 6t$.
\end{lemma}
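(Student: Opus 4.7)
The plan is to follow the bit-gadget argument of the weighted case, adapted to the unweighted setting where the in/out-asymmetry at $J$ is encoded by the existence (rather than the weights) of bit-gated edges. Using the identifier property and $\bar{a}\neq\bar{b}$, pick two coordinates $p,r\in[d]$ with $\bar{a}[p]=1,\bar{b}[p]=0$ and $\bar{a}[r]=0,\bar{b}[r]=1$, and consider the cycle $a\to g_p\to b\to g_r\to a$. We will bound the forward segment $a\to g_p\to b$ by $d(a,J)+d(J,b)$ and the backward segment $b\to g_r\to a$ by $d(b,J)+d(J,a)$, so the two hypotheses add up to $d_{rt}(a,b)\le 6t$. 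The point of using \emph{two} coordinates (rather than one, as in the weighted case) is that in the unweighted graph the forward and backward costs at $J$ are encoded by different edges, and we need the right bit-gated edge to be present for each direction.

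The key observation enabling this bound is that, for any $x$ in the $S$-area or $T$-area, the distance $d(x,g_j)$ depends on $j$ only through the bit $\bar{x}[j]$, and the minimum $d(x,J)$ is attained at every $j$ with $\bar{x}[j]=1$; symmetrically, $d(g_j,x)$ depends on $\bar{x}[j]$ and attains $d(J,x)$ at every $j$ with $\bar{x}[j]=0$. The reason is that every shortest path from $x$ to $J$ either funnels through a bit-gated edge such as $S_{3t}^{fwd}\to g_j$ or the direct $T\to g_j$ (which exist only when $\bar{x}[j]=1$), or through a bit-independent alternative such as $S_{5t}^{fwd}\to g_j$ or $T_{2t}^{bwd}\to g_j$; in either case the optimum is realized at any $j$ of the asserted parity. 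The analogous statement for $d(g_j,x)$ uses the bit-gated edges $g_j\to S$ and $g_j\to T_{3t}^{fwd}$ together with the always-present $g_j\to S_{2t}^{bwd}$.

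Granted this observation, the triangle inequality and the choice of $p,r$ imply
\begin{align}
  d(a,b) &\le d(a,g_p)+d(g_p,b) = d(a,J)+d(J,b), \\
  d(b,a) &\le d(b,g_r)+d(g_r,a) = d(b,J)+d(J,a).
\end{align}
Summing and applying the hypotheses $d(a,J)+d(J,a)\le 3t$ and $d(b,J)+d(J,b)\le 3t$ yields $d_{rt}(a,b)\le 6t$, as desired.

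The main obstacle is verifying the key observation across the many possible locations of $x$: the condition $d(x,J)+d(J,x)\le 3t$ lets $x$ lie in $S$ itself, in a forward copy $S_i^{fwd}$ with $i\le 3t$, in a backward copy $S_i^{bwd}$ with $i\le 2t$, or in the analogous $T$-area copies, and each location has a different optimal route. For example, from $S_i^{bwd}$ the shortest path to $g_j$ with $\bar{x}[j]=1$ passes through the shortcut $S_1^{bwd}\to S_{2t+1}^{fwd}$ before reaching the bit-gated edge at $S_{3t}^{fwd}$, while from $T_i^{bwd}$ every path to $g_j$ is bit-independent, routing through $T_{2t}^{bwd}\to g_j$. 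The verification amounts to a short case analysis in which, for each location of $x$, one checks that any bit-independent alternative is no shorter than the asserted bit-gated route; this forces the shortest-path length to depend on $j$ only through $\bar{x}[j]$ in the prescribed way, completing the argument.
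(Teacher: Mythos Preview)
Your proposal is correct and follows essentially the same approach as the paper: both argue that $d(x,g_j)$ and $d(g_j,x)$ depend on $j$ only through the bit $\bar{x}[j]$, then pick a coordinate at which $d(a,J)$ and $d(J,b)$ (respectively $d(b,J)$ and $d(J,a)$) are simultaneously realised, and sum the two triangle-inequality bounds against the hypotheses. The only minor difference is that you pin down the minimising bit values ($1$ for outgoing, $0$ for incoming, by monotonicity of the available edge set) and hence need only coordinates where $\bar a,\bar b$ differ, whereas the paper leaves the bit values $i_1,\dots,i_4\in\{0,1\}$ abstract and invokes the full identifier property (guaranteeing a coordinate for every combination $(i,i')\in\{0,1\}^2$).
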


\begin{proof}
By the construction of the graph, we know that there is $i_1\in \{0,1\}$ such that $d(a,J)=d(a,g_j)$ for all $g_j\in J_{i_1}(a)$. Similarly there exist $i_2,i_3,i_4\in \{0,1\}$ such that 
\begin{itemize}
    \item $d(J,a)=d(g_j,a)$ for all $g_j\in J_{i_2}(a)$
    \item $d(b,J)=d(b,g_j)$ for all $g_j\in J_{i_3}(a)$
    \item $d(J,b)=d(g_j,b)$ for all $g_j\in J_{i_4}(a)$
\end{itemize}
Now since $\bar{a}\neq \bar{b}$, There is $g_j\in J_{i_1}(a)\cap J_{i_4}(b)$. Similarly, there is $g_{j'}\in J_{i_2}(a)\cap J_{i_3}(b)$. So $d(a,b)\le d(a,g_j)+d(g_j,b)=d(a,J)+d(J,b)$ and $d(b,a)\le d(b,g_{j'})+d(g_{j'},a)=d(b,J)+d(J,a)$. So $d_{rt}(a,b)\le 6t$.
\end{proof}

Now we show that for all $a,b\in S-area\cup T-area$ where $\bar{a}\neq \bar{b}$, the conditions of Lemma \ref{lem:S-T-closeness} hold, and hence $d_{rt}(a,b)\le 6t$. Suppose $a\in S-area$. We have
\begin{itemize}
    \item If $a\in S_{i}^{fwd}$ for $i=0,\ldots,3t$, then $d(a,J)\le 3t-i$ and $d(J,a)\le 1+i$.
    \item If $a\in S_{3t+i}^{fwd}$ for $i=1,\ldots,2t$, then $d(a,J)\le 2t-i$ and $d(J,a)\le t+i$ using the edges in $J\times S_{2t}^{bwd}$ and $S_{t+1}^{bwd}\times S_{3t+1}^{fwd}$.
    \item If $a\in S_i^{bwd}$ for $i=1,\ldots, 2t$, then $d(a,J)\le i+t$ (through $S_{1}^{bwd}\times S_{2t+1}^{fwd}$ edges) and $d(J,a)\le 2t-i$.
\end{itemize}
Since $T-area$ is symmetric, we have similar results if $a\in T-area$. So we can apply Lemma \ref{lem:S-T-closeness}.

Now suppose that $a\in S-area$ and $a'\in T-area$ are from the same node $a\in V_1$. First suppose $a\in S_{3t+i}^{f}$ for some $i\in\{1,\ldots,2t\}$. We know that there exist $j,j'$ such that $d(b,g_j)+d(g_{j'},b)\le 3t$. Then since all edges in $S^{fwd}_{5t}\times J$ and $J\times S_{2t}^{bwd}$ exist, we have $d(g_j,a)\le t+i$ (Using $S_{t+1}^{bwd}\times S_{3t+1}^{fwd}$ edges) and $d(a,g_{j'})\le 2t-i$. So $d_{rt}(a,b)\le 6t$. If $b\in T_{3t+i}^{fwd}$ for $i\in \{1,\ldots,2t\}$, we have a symmetric argument. 

So suppose $a\notin S_{3t+1}^{fwd}\cup \ldots \cup S_{5t}^{fwd}$. Let the cycle passing through $a$ in $G$ be $ax_2\ldots x_k$ where $x_i\in V_i$ for $i=2,\ldots,k$.
\begin{itemize}
    \item Let $a\in S_i^{fwd}$ and $a'\in T_j^{fwd}$ for some $i,j\in\{0,\ldots,3t\}$. Then consider the cycle passing through copies of $a$ in all $S_{\ell}^{fwd}$ for $\ell=0,\ldots,3t$, then going to $x_i^{fwd}\in V_i^{fwd}$ for $i=2,\ldots,k$, then to all copies of $a'$ in $T_{\ell}^{fwd}$ for $\ell=3t,\ldots, 0$, $x_i^{bwd}\in V_i^{bwd}$ for $i=k,\ldots,2$ and finally back to the copy of $a$ in $S$. This cycle passes through $a$ and $a'$ and is of length $6t$.
    \item Let $a\in S_i^{fwd}$ and $a'\in T_j^{bwd}$ for $i\in \{0,\ldots, 3t\}$ and $j\in \{1,\ldots, 2t\}$. Let $z\in [d]$ be a coordinate such that $a[z]=0$. The cycle passing through $a$ and $a'$ is the following: start from copies of $a$ in $S_{\ell}^{fwd}$ for all $\ell=0,\ldots,3t$, then to $x_i^{fwd}\in V_i^{fwd}$ for $i=2,\ldots,k$, to all copies of $a'$ in $T_{\ell}^{fwd}$ for $\ell=3t,\ldots,2t+1$, all copies of $a'$ in $T_{\ell}^{bwd}$ for $\ell=1,\ldots,2t$ then to $g_{z}$ and then back to the copy of $a$ in $S$.
    \item Let $a\in S_i^{bwd}$ and $a'\in T_j^{bwd}$ for some $i,j\in \{1,\ldots, 2t\}$. The cycle passing through $a$ and $a'$ is the following: start from copies of $a$ in $S_{\ell}^{fwd}$ for all $\ell=2t+1,\ldots 3t$, then to $x_i^{fwd}\in V_i^{fwd}$ for $i=2,\ldots,k$, to all copies of $a'$ in $T_{\ell}^{fwd}$ for $\ell=3t,\ldots,2t+1$, all copies of $a'$ in $T_{\ell}^{bwd}$ for $\ell=1,\ldots,2t$, then to $g_{z}$ for some arbitrary $z\in [d]$, to all the copies of $a$ in $S_{\ell}^{bwd}$ for $\ell=2t,\ldots,1$ and finally back to $S_{2t+1}^{fwd}$. 
\end{itemize}

Now we show that $S-area$ nodes are close to all nodes in $J$, $V_i^{fwd}$, $V_i^{bwd}$ and $o_j$ for $i=2,\ldots,k$ and $j=1,\ldots, 4$.

Let $a\in S-area$. We show that $d(a,o_1)+d(o_2,a)\le 6t$ and $d(a,o_3)+d(o_4,a)\le 6t$. Then since for every $x^{fwd}\in V_i^{fwd}$ for any $i\in \{2,\ldots,k\}$ there is a $2$-path $o_1x^{fwd}o_2$, and for every $x^{bwd}\in  V_i^{bwd}$ there is a $2$-path $o_3x^{bwd}o_4$, we have that $a$ is close to all nodes in $V_i^{fwd}\cup V_i^{bwd}\cup o_j$. The proof for $a\in T-area$ is similar. 

\begin{lemma}
For $a\in S-area$, we have that $d(a,o_1)+d(o_2,a)\le 6t$ and $d(a,o_3)+d(o_4,a)\le 6t$.
\end{lemma}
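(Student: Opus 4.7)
The plan is to do a case analysis on the specific copy of $S$ that contains $a$, namely whether $a$ lies in $S = S_0^{fwd}$, in some $S_i^{fwd}$ with $i \in \{1,\dots,5t\}$, or in some $S_i^{bwd}$ with $i \in \{1,\dots,2t\}$, and in each case to exhibit an explicit short path realizing each of the four distances $d(a,o_1), d(o_2,a), d(a,o_3), d(o_4,a)$. Recall that the forward matching edges make $S \to S_1^{fwd} \to \cdots \to S_{5t}^{fwd}$ a directed path of length $5t$, the backward matching edges make $S_{2t}^{bwd} \to \cdots \to S_1^{bwd} \to S$ a directed path of length $2t$, and the extra edges $(S_1^{bwd} \cup S_{t+1}^{bwd}) \to (S_{2t+1}^{fwd} \cup S_{3t+1}^{fwd})$ serve as shortcuts from the backward chain into the middle of the forward chain.

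For the first inequality, the path $a \rightsquigarrow o_1$ walks forward along the $S^{fwd}$ chain to $S_{5t}^{fwd}$ and then uses the direct edge $S_{5t}^{fwd} \to o_1$, giving length $5t - i + 1$ when $a \in S_i^{fwd}$; when $a \in S_i^{bwd}$ we first descend $S^{bwd}$ to $S_1^{bwd}$ or $S_{t+1}^{bwd}$, use the shortcut into $S_{3t+1}^{fwd}$, and then finish along the forward chain, giving length $O(t)$. The path $o_2 \rightsquigarrow a$ starts with the direct edge $o_2 \to S_t^{bwd}$ or $o_2 \to S_{2t}^{bwd}$, descends the backward chain to $S$, and then climbs $S^{fwd}$ the appropriate amount; when $a \in S_i^{fwd}$ with $i > 3t$, the shortcut lets us enter the forward chain at $S_{3t+1}^{fwd}$ instead of from $S$, saving $O(t)$ edges. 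A subcase-by-subcase arithmetic check then shows $d(a,o_1) + d(o_2,a) \le 6t + O(1)$: for instance, if $a \in S_i^{fwd}$ with $i \le 3t$ the two lengths are $5t - i + 1$ and $t + i + 1$, summing to $6t + 2$.

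The second inequality $d(a,o_3) + d(o_4,a) \le 6t + O(1)$ is argued analogously on the outgoing side: $a \to o_3$ uses the forward chain together with the direct edges $S_{4t}^{fwd} \to o_3$ and $S_{5t}^{fwd} \to o_3$, so $d(a,o_3) \le 4t - i + O(1)$ for $a \in S_i^{fwd}$ with $i \le 4t$, and analogous bounds via the shortcut for other locations. For $d(o_4,a)$ we must detour through the $T$-area: the plan is $o_4 \to T_{4t}^{fwd}$, descend the $T^{fwd}$ chain to $T_{3t+1}^{fwd}$ (or $T_{2t+1}^{fwd}$), cross to the $T^{bwd}$ chain via the $T$-area shortcut $(T_{2t+1}^{fwd} \cup T_{3t+1}^{fwd}) \to (T_1^{bwd} \cup T_{t+1}^{bwd})$, descend to $T_{2t}^{bwd}$, cross the bit-gadget via $T_{2t}^{bwd} \to g_j \to S_{2t}^{bwd}$ for any $g_j \in J$, and then descend the $S^{bwd}$ chain to reach the correct copy of $a$; the $S$-area shortcut again lets us land directly at $S_{3t+1}^{fwd}$ if $a$ is a deep forward copy. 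Summing with $d(a,o_3)$ in each subcase should yield the claimed bound up to additive $O(1)$.

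The main obstacle is precisely the construction and bookkeeping for $d(o_4,a)$, because, in contrast to $o_2$, the omni-node $o_4$ has no direct outgoing edges into $S$-area and its shortest path to any $a \in S$-area must pass through $T$-area and the bit-gadget. Ensuring that this multi-segment detour, when added to $d(a,o_3)$, still fits inside $6t + O(1)$ for every location of $a$ requires carefully choosing which of the $T$-area and $S$-area shortcuts to use, and verifying that the savings from entering the forward chain at $S_{3t+1}^{fwd}$ and leaving the $T^{fwd}$ chain at $T_{3t+1}^{fwd}$ are enough to cancel the $1\cdot t$ detour through $J$. Once the explicit paths are fixed, the remaining arithmetic is routine.
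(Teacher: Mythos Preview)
Your treatment of $d(a,o_1)+d(o_2,a)$ is exactly the paper's case analysis and is correct.

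The genuine gap is in your plan for $d(o_4,a)$. The detour $o_4 \to T_{4t}^{fwd} \to \cdots \to T_{3t+1}^{fwd} \to T_{t+1}^{bwd} \to \cdots \to T_{2t}^{bwd} \to g_j \to S_{2t}^{bwd}$ has length $2t+O(1)$, but from $S_{2t}^{bwd}$ you still have to reach $a$. If $a \in S_i^{fwd}$ with $i$ small (say $i=0$, i.e.\ $a\in S$), the only way onward is to descend the whole backward chain to $S$ and then climb the forward chain, costing another $2t+i$ edges; the $S$-area shortcuts into $S_{2t+1}^{fwd}$ or $S_{3t+1}^{fwd}$ do not help for small-index forward copies. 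That gives $d(o_4,a)\approx 4t+i$, and together with $d(a,o_3)\approx 4t-i$ the sum is about $8t$, not $6t$. So the bookkeeping you flagged as ``routine'' does not close.

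The paper's proof avoids the $T$-area entirely: it uses a direct edge $o_4 \to S_{2t}^{bwd}$, which is the unweighted analogue of the weight-$2t$ edge $o_4\to S$ in the weighted reduction. (The paper's written edge list in the unweighted section appears to have accidentally omitted this edge and its mirror $T_{2t}^{bwd}\to o_3$, but the proof of this very lemma explicitly invokes ``the edges in $o_4\times S_{2t}^{bwd}$''.) With that edge, $d(o_4,a)\le 2t+i+O(1)$ for $a\in S_i^{fwd}$ and everything sums to $6t+O(1)$ in each subcase. If you want to keep your detour, you can salvage it by landing directly in $S$ via the bit-gadget edge $g_j\to S$ (present whenever $\bar a[j]=0$) rather than at $S_{2t}^{bwd}$; that shaves the extra $2t$ and also gives the $6t+O(1)$ bound.
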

\begin{proof}
We do case analysis.
\begin{itemize}
    \item If $a\in S_i^{fwd}$ for $i=0,\ldots, 5t$, then $d(a,o_1)=5t-i$, and $d(o_2,a)=t+i$. 
    \item If $a\in S_i^{bwd}$ for $i=1,\ldots, 2t$, then $d(a,o_1)=i+3t+1$ using $S_1^{bwd}\times S_{3t+1}^{fwd}$ edges, and $d(o_2,a)=2t-i$.
    \item If $a\in S_i^{fwd}$ for $i=0,\ldots,4t$, then $d(a,o_3)=4t-i$ using edges in $S_{4t}^{fwd} \times o_3$ and $d(o_4,a)=2t+i$, using the edges in $o_4\times S_{2t}^{bwd}$.
    \item If $a\in S_{4t+i}^{fwd}$ for $i=1,\ldots,t$, then $d(a,o_3)=t-i$ and $d(o_4,a)=2t+i$ using the edges in $S_{t+1}^{bwd}\times S_{3t+1}^{fwd}$.
    \item If $a\in S_i^{bwd}$ for $i=1,\ldots, 2t$, then $d(a,o_3)= i+2t$ using $S_{1}^{bwd}\times S_{3t+1}^{fwd}$ edges, and $d(o_4,a)=2t-i$ using $o_4\times S_{2t}^{bwd}$ edges.
\end{itemize}
\end{proof}

Note that we can use these $6t$ paths from $o_2$ to $o_1$ in the Lemma to bound the roundtrip distances between $v^{fwd},u^{fwd}\in \cup_{i=2}^kV_i^{fwd}$ for any $v,u\in \cup_{i=2}^kV_i$. Similarly, we can use $6t$-paths from $o_4$ to $o_3$ to bound the roundtrip distances between $v^{bwd},u^{bwd}\in \cup_{i=2}^kV_i^{bwd}$, for any $v,u\in \cup_{i=2}^kV_i$.

Furthermore, we have that $d(o_2,o_3)=3t$ using $o_2\times T_{5t}^{fwd}$,  $T_{3t+1}^{fwd}\times T_{t+1}^{bwd}$ and $T_{2t}^{bwd}\times o_3$ edges. Symmetrically, $d(o_4,o_1)=3t$ using $o_4\times S_{2t}^{bwd}$, $S_{t+1}^{bwd}\times S_{3t+1}^{fwd}$ and $S_{5t}^{fwd}\times o_1$ edges. Now since for any $u\in \cup_{i=2}^kV_i$, $o_1u^{fwd}o_2$ and $o_3u^{bwd}o_4$ are paths of length $2$, using appropriate $2$-paths from $o_1$ to $o_2$ and from $o_3$ to $o_4$ we can form a cycle containing $x^{fwd}\in \cup_{i=2}^kV_i^{fwd}$ and $y^{bwd}\in \cup_{i=2}^kV_i^{bwd}$ for any $x,y\in \cup_{i=2}^kV_i$. 

Also note that using the above cycles, any $o_i$ for $i=1,\ldots,4$ and $x^{fwd}\in V_i^{fwd}$ or $x^{bwd}\in V_i^{bwd}$ are close for any $x\in V_i$ for $i=2,\ldots, k$.

Now it remains to prove that all the nodes in $J$ are close to all the other nodes. Fix some $g_j\in J$

\begin{itemize}
    \item For $a\in S_{3t+i}^{fwd}$ or $a\in S_i^{bwd}$ for some $i\in \{1,\ldots,2t\}$, there is a cycle passing through all copies of $a$ in $S_{3t+\ell}$ and $S_{\ell}^{bwd}$ for all $\ell=1,\ldots,2t$ and $g_j$, since there is an edge between all pairs in  $S_{5t}^{fwd}\times J$ and $J\times S_{2t}^{bwd}$.
    \item Let $a\in S_i^{fwd}$ for some $i\in \{0,\ldots, 3t\}$ and suppose $a[j]=1$. Then let $j'\in S$ be a coordinate where $a[j']=0$. Consider the following cycle: start from copies of $a$ in $S_{\ell}^{fwd}$ for all $\ell=0,\ldots,3t$, then go to $g_j$, then to copies of $a$ in $S_{\ell}^{bwd}$ for all $\ell=2t,\ldots,t+1$, then to copies of $a$ in $S_{3t+\ell}^{fwd}$ for all $\ell=1,\ldots,2t$, then to $g_{j'}$ and finally back to $S$.
    \item Let $a\in S_i^{fwd}$ for some $i\in \{0,\ldots, 3t\}$ and suppose $a[j]=0$. Then let $j'\in S$ be a coordinate where $a[j']=1$. We consider the same cycle above where we swap $g_j$ and $g_{j'}$ in the cycle. 
    \item To show that $J$ is close to $o_1,o_2,V_i^{fwd}$ for $i=2,\ldots,k$, we consider the following cycle. Let $a\in V_1$ be an arbitrary node. Start from copies of $a$ in $S_{3t+\ell}^{fwd}$ for all $\ell=1,\ldots,2t$, then go to $o_1$, then to a vertex in $V_i^{fwd}$ for some $i=2,\ldots,k$, then to $o_2$, to all copies of $a'$ in $T_{3t+\ell}^{fwd}$ for all $\ell=2t\ldots,1$, then all copies of $a'$ in $T_{\ell}^{bwd}$ for $\ell=t+1,\ldots,2t$, to $g_j$ then to all copies of $a$ in $S_{\ell}^{bwd}$ for $\ell=2t,\ldots,t+1$, and finally back to $S_{3t+1}^{fwd}$.
    \item To show that $J$ is close to $o_3,o_4,V_i^{bwd}$ for $i=2,\ldots,k$, we change the previous cycle. To go from the copy $a$ in $S_{5t}^{fwd}$ to the copy of $a$ in $T_{5t}^{bwd}$, we go through $g_j$. Then to go from the copy of $a$ in $T_{2t}^{bwd}$ to the copy of $a$ in $S_{2t}^{bwd}$, we go to $o_3$, then to any node in $V_i^{bwd}$ for some $i=2,\ldots,k$, then to $o_4$ and finally to $S_{2t}^{bwd}$.
    \item Suppose that we want to show $g_j$ is close to $g_{j'}$. Let $a$ be any node in $S$. Note that $d(g_j,g_{j'})=3t$ by going through all the copies of $a$ in $S_{\ell}^{bwd}$ for $\ell=2t,\ldots,t+1$, and the copies of $a$ in $S_{3t+\ell}^{fwd}$ for $\ell=1,\ldots,2t$. Similarly, $d(g_{j'},g_j)=3t$.
\end{itemize}

\paragraph*{YES case} 
In order to simplify the proof of the YES case, we note that the main difference between weighted and unweighted case that might cause short paths in the YES case are the edges added in $(S_{t+1}^{bwd}\cup S_1^{bwd})\times (S_{2t+1}^{fwd}\cup S_{3t+1}^{fwd})$ in the $S$-area (and the symmetric case in the T-area). We will show that if the roundtrip cycle uses any of these edges, the path is going to be long. If the path doesn't use any of these edges, then it is easy to see from the construction that there is an equivalent path in the weighted case.

We show that if $a\in S$ and $a'\in T$, $d(a,a')\ge 8t$ and $d(a',a)\ge 2t$.

First consider the $aa'$ path. The first $3t$ nodes on the path must be copies of $a$ in $S_i^{fwd}$ for $i=0,\ldots,3t$. Similarly, the last three nodes on this path must be copies of $a'$ in $T_i^{fwd}$ for $i=3t,\ldots,0$.

\begin{lemma}
Let $a\in S$ and $a'\in T$ be copies of the same node in $V_1$. If the $aa'$ shortest path uses any of the edges in $(S_{t+1}^{bwd}\cup S_1^{bwd})\times (S_{2t+1}^{fwd}\cup S_{3t+1}^{fwd})$, then this path has length at least $8t$.
\end{lemma}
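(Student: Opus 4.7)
The plan is to lower bound the length of any $a$-to-$a'$ path that uses a shortcut edge $(u,v)\in(S_{t+1}^{bwd}\cup S_1^{bwd})\times(S_{2t+1}^{fwd}\cup S_{3t+1}^{fwd})$ by splitting it into a prefix $a\to\cdots\to u$, the shortcut edge, and a suffix $v\to\cdots\to a'$, then lower bounding the prefix and suffix separately. It suffices to show $d(a,u)+1+d(v,a')\ge 8t$.

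For the prefix, observe that the only incoming edges to $S^{bwd}$ from outside the $S$-area are $o_2\to S_t^{bwd}\cup S_{2t}^{bwd}$ and $J\to S_{2t}^{bwd}$; the shortcut edges are outgoing from $S^{bwd}$ and so cannot be used to enter it. The prefix must therefore first reach $o_2$ or $J$. Since the only outgoing edge of $a\in S$ is into the forward chain, and $V^{fwd}$ is reachable from $a$ only at $S_{3t}^{fwd}(a)\to V_2^{fwd}$ or via $o_1$, we get $d(a,V^{fwd})\ge 3t+1$ and hence $d(a,o_2)\ge 3t+2$; similarly $d(a,J)\ge 3t+1$ via $S_{3t}^{fwd}(a)\to g_j$ using a $1$-bit of $\bar a$, which exists by the identifier construction. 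Walking down the backward chain contributes at least $t-1$ more steps to reach $S_{t+1}^{bwd}$ from $S_{2t}^{bwd}$ (and analogously for $S_1^{bwd}$), yielding $d(a,u)\ge 4t+1$ in all cases.

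For the suffix, every outgoing edge of $v$ inside the $S$-area goes to the next forward vertex, so the path must walk up the forward chain and exit at $S_{3t}^{fwd}(a)\to V^{fwd}$ (only possible if $v=S_{2t+1}^{fwd}$), $S_{4t}^{fwd}(a)\to o_3$, or $S_{5t}^{fwd}(a)\to\{o_1,o_3,J\}$. The main task here is lower bounding the distance from each exit to $a'$: since $a$ is not in a $k$-cycle in $G$, the direct chain $V_2^{fwd}(x)\to\cdots\to V_k^{fwd}(y)\to T_{3t}^{fwd}(a')$ does not exist (it would complete a $k$-cycle through $a$), so any route through $V^{fwd}$ to $a'$ must detour via $o_2\to T_{5t}^{fwd}(a')$, giving $d(V^{fwd},a')\ge 5t+2$; similarly $d(o_3,a')\ge 4t+3$ (forcing $V^{bwd}\to o_4\to T_{4t}^{fwd}(a')$), $d(o_1,a')\ge 5t+3$, and $d(J,a')\ge 3t+1$ (using a $0$-bit $j$ with $g_j\to T_{3t}^{fwd}(a')$). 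Combining with the walk-up cost from $v$, the cheapest suffix is $v=S_{3t+1}^{fwd}(a)\to\cdots\to S_{5t}^{fwd}(a)\to g_j\to T_{3t}^{fwd}(a')\to\cdots\to T(a')$ of length $(2t-1)+1+1+3t=5t+1$, so $d(v,a')\ge 5t+1$.

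Summing the three pieces gives $d(a,u)+1+d(v,a')\ge(4t+1)+1+(5t+1)=10t+3\ge 8t$, as required. The main obstacle is the suffix bound: one must enumerate all the exits from $S^{fwd}$ carefully and crucially use the no-$k$-cycle hypothesis to rule out the length-$k$ chain through $V^{fwd}$ that would otherwise land directly at $T_{3t}^{fwd}(a')$; once that shortcut through $G$ is eliminated, every remaining route is forced to spend at least $\approx 3t$ more edges walking down a copy of $a'$ in the $T$-area to reach $T(a')$.
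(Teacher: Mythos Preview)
Your decomposition into prefix (reach $u\in S^{bwd}$), shortcut edge, and suffix (from $v$ to $a'$) is the same as the paper's, and your prefix bound $d(a,u)\ge 4t$ coincides with the paper's argument. The difference lies entirely in how the suffix is handled, and here the paper's route is both simpler and more robust.

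The paper observes only two structural facts: (i) from $v\in S_{2t+1}^{fwd}\cup S_{3t+1}^{fwd}$ it takes at least $t$ steps before any edge can leave the $S$-area (the nearest exits sit at $S_{3t}^{fwd}$ and $S_{4t}^{fwd}$), and (ii) the last $3t$ edges of \emph{any} walk ending at $a'\in T$ must be the chain $T_{3t}^{fwd}(a')\to\cdots\to a'$, since $T_i^{fwd}$ for $i<3t$ has no incoming edges from outside that chain. Together these give $d(v,a')\ge t+3t$, which with the $4t$ prefix already yields $8t$. No enumeration of exit points, no distances $d(o_1,a'),d(o_3,a'),d(J,a')$, and crucially no assumption that $a$ is not in a $k$-cycle are needed. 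Indeed the lemma as stated does not include that hypothesis, and the paper's proof shows it is superfluous: even if $a$ \emph{were} in a $k$-cycle, the $V^{fwd}$ chain from $S_{3t}^{fwd}(a)$ to $T_{3t}^{fwd}(a')$ would give only $d(v,a')\approx 4t+k$, still enough for $8t$ total.

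Your suffix analysis, by contrast, aims for the stronger bound $d(v,a')\ge 5t+1$, which forces a full case analysis and the extra hypothesis. Two gaps arise. First, your list of exits from the forward chain omits $S_{3t}^{fwd}(a)\to g_j$ for coordinates with $\bar a[j]=1$; this exit happens not to beat $5t+1$ (such a $g_j$ cannot jump to $T_{3t}^{fwd}(a')$, only to $T_{5t}^{fwd}(a')$), but its omission is still a hole. Second, your final step exhibits a path of length $5t+1$ and concludes $d(v,a')\ge 5t+1$; that is an upper-bound argument, and the required lower bound---that every alternative, including detours through $T_{3t}^{fwd}(b')$ for $b\neq a$ or re-entries into the $S$-area---is at least as long, is not established. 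The paper's coarser $t+3t$ bound sidesteps all of this.
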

\begin{proof}
First we show that $d(S,S_{t+1}^{bwd})\ge 4t$: This is because any path from $S$ to $S_{t+1}^{bwd}$ ends with nodes in $S_{i}^{bwd}$ for all $i=2t,\ldots,t+1$. Since it must go through $S_i^{fwd}$ for all $i=0,\ldots,3t$, we have $d(S,S_{t+1}^{bwd})\ge 4t$. 
Similar as above, we have that $d(S,S_{1}^{bwd})\ge 4t$. Now note that the distance from $S_{2t+1}^{fwd}\cup S_{3t+1}^{fwd}$ to any edge going out of the $S$-area is at least $t$. So before entering the $3t$-subpath in the $T$-area that ends in $a'$, the path has length $\ge 4t+t=5t$. So in total it has length at least $8t$.
\end{proof}

With a symmetric argument we can show that if the path $aa'$ uses any edge in $(T_{3t+1}^{fwd}\cup T_{2t+1}^{fwd}) \times (T_{1}^{bwd}\cup T_{t+1}^{bwd})$, it has length at least $8t$.

For the $a'a$ path, it is easier to see that if the path uses any of these edges, the length of it is at least $2t$. This is because $d(T,T_{3t+1}^{fwd}\cup T_{2t+1}^{fwd})>2t$ and $d(S_{3t+1}^{fwd}\cup S_{2t+1}^{fwd}, S)>2t$.

\end{document}